\newtheorem{theorem}{Theorem}[section]
\newtheorem{lemma}[theorem]{Lemma}
\newtheorem{corollary}[theorem]{Corollary}
\newtheorem{observation}[theorem]{Observation}
\newtheorem{proposition}[theorem]{Proposition}
\newcommand{\Z}{\mathbb{Z}}
\newcommand{\R}{\mathbb{R}}
\newcommand{\Spec}{\mathit{Spec}}
\newcommand{\Bohr}{\mathit{Bohr}}
\newcommand{\tcolor}{\mathit{color}}
\newcommand{\idle}{\mathrm{idle}}
\newcommand{\diam}{\mathrm{diam}}
\newcommand{\pre}{\mathrm{pred}}
\newcommand{\suc}{\mathrm{succ}}
\newcommand{\Real}{\mathrm{Re}}
\newcommand{\bigo}{\mathcal{O}}
\newcommand{\divides}{\!\bigm|\!}
\newcommand{\mult}{\!\cdot\!}
\newcommand{\ka}{\kappa}
\newcommand{\RR}{{\textsc{RR}}\xspace}
\newcommand{\Mod}{\ \mathrm{mod}\ }
\title{\bfseries Ergodic Effects in Token Circulation}
\date{}
\author[1]{Adrian Kosowski}
\affil[1]{Inria Paris and IRIF, Universit\'e Paris Diderot, France}
\author[2]{Przemys\l{}aw Uzna\'nski}
\affil[2]{Department of Computer Science, ETH Z\"{u}rich, Switzerland}
\begin{document}

\maketitle

\begin{abstract}

We consider a dynamical process in a network which distributes all particles (tokens) located at a node among its neighbors, in a round-robin manner.

We show that in the recurrent state of this dynamics (i.e., disregarding a polynomially long initialization phase of the system), the number of particles located on a given edge, averaged over an interval of time, is tightly concentrated around the average particle density in the system. Formally, for a system of $k$ particles in a graph of $m$ edges, during any interval of length $T$, this time-averaged value is $k/m \pm \widetilde\bigo(1/T)$, whenever $\gcd(m,k) = \widetilde\bigo(1)$ (and so, e.g., whenever $m$ is a prime number). To achieve these bounds, we link the behavior of the studied dynamics to ergodic properties of traversals based on Eulerian circuits on a symmetric directed graph. These results are proved through sum set methods and are likely to be of independent interest.

As a corollary, we also obtain bounds on the \emph{idleness} of the studied dynamics, i.e., on the longest possible time between two consecutive appearances of a token on an edge, taken over all edges. Designing trajectories for $k$ tokens in a way which minimizes idleness is fundamental to the study of the patrolling problem in networks. Our results immediately imply a bound of $\widetilde \bigo(m/k)$ on the idleness of the studied process, showing that it is a distributed $\widetilde \bigo(1)$-competitive solution to the patrolling task, for all of the covered cases. Our work also provides some further insights that may be interesting in load-balancing applications.

\end{abstract}



\section{Introduction}

The notion of \emph{ergodicity} captures the characteristic of a dynamical system which displays averaging behavior in both its state space and in time. For example, a finite Markov chain is ergodic if it is both aperiodic and irreducible, that is, for any sufficiently large integer $t$, it has positive probability to go between any pair of states in exactly $t$ steps. In this work, we focus on systems consisting of multiple particles (or tokens) moving around sites in a finite graph. In such a setting, an important manifestation of ergodicity is the property of \emph{time-averaging}: a measurement performed at a fixed location in the system over a period of time is representative of the averaged behavior of the system, for any site or location.

As an example, consider a system of $k$ non-interacting random walk particles on a connected non-bipartite symmetric directed graph, moving around the graph in synchronous steps. A direct consequence of the ergodicity of this system is that, in the limit, any multiset of $k$ arcs has the same probability of being traversed by the $k$ walks in any given step. The slightly weaker property of time-averaging for this system is phrased as follows: if the graph has $m$ arcs, then for any given arc, the average number of particles traversing this arc tends to $k/m$ per step, when averaging over sufficiently long time intervals.
However, when the number of particles $k$ in the system is small, the length of the time window required to obtain a reliable time-averaged measurement at any single measurement point becomes long:
if we observe any given arc for $\Theta(m^2/k^2)$ consecutive steps, we have constant probability of \emph{not} spotting any particle on the arc, even when the expected number of particles passing through this arc during the considered time interval is $\Theta(m/k)$ (cf.\ e.g.~\cite{ES11,KKPS13} for related analyses and more general considerations of parallel random walks).

In this paper, we investigate a way to obtain faster time-averaging behavior in a system with $k$ particles, and look at the algorithmic consequences of such a property. We study the fundamental deterministic process of particle propagation in which $k$ tokens circulate in a symmetric directed graph, with each node propagating all the tokens it contains to its neighbors by a round-robin mechanism (\emph{RR dynamics}), in each synchronous round. This process is closely related to (and in fact, a variant of) chip-firing games in the sandpile model (see Section~\ref{sec:introrr} for more details), but has the advantage that it displays meaningful behavior even in the regime of a very small number of tokens. At the same time, it bears numerous similarities to the model of non-interacting parallel random walks. These similarities have been studied and exhibited in contexts ranging from parallel cover time~\cite{DKPU14,KKPS13,KosowskiP14} to load-balancing properties~\cite{DF09,AB13,KosowskiP14,BKKMU15,DBLP:conf/analco/ShiragaYKY16}.

As our main result, we show that under the assumption of co-prime values of $m$ and $k$, the \RR dynamics in its recurrent state displays very fast time-averaging on arcs: in a time interval of length $T$, the number of tokens passing through any arc is always in the range $kT/m \pm \widetilde\bigo(1)$, i.e., the average number of tokens observed per step during the interval is $k/m \pm \widetilde\bigo(1/T)$, where the $\widetilde\bigo$ notation hides poly-logarithmic factors in $m$. In particular, in the studied scenario, each arc is visited by a token at least once in $\widetilde\bigo(m/k)$ steps. This effect of quasi-regular traversal of arcs in the \RR dynamics' recurrent state is only observed when the considered graph is symmetric. To prove its existence, we build on links between the \RR dynamics and the structure of Eulerian circuits in the graph. Our main contribution is to define and analyze a process of moving around the nodes of a graph in time, by performing traversals of fragments (contiguous sequences of arcs) of a given Eulerian circuit, and occasionally switching between different fragments of the circuit which meet at the same vertex. Such a process is shown to have the property that any pair of temporally and spatially separated states can be reached from one another within at most $\widetilde\Theta (1)$ operations of switching between fragments of the Eulerian circuit. The latter result is obtained by proving, in a framework of additive combinatorics, that iterated sums of the set of lengths of closed sub-circuits of any Eulerian circuit grow rapidly, covering the set of all integers in a small number of iterations.

The rest of the paper is organized as follows. We continue the introduction by describing the context of the \RR dynamics in Section~\ref{sec:introrr}, and some applications of our results in Section~\ref{sec:motivation}. We then provide an exposition of the technical results in Section~\ref{sec:techresults}. The remaining sections are devoted to a proof of the results, starting with an exposition of the main structural lemmas on mixing properties of circulations in Section~\ref{sec:sumset} and~\ref{sec:circulations}. We relate these results to \RR dynamics in Section~\ref{sec:rr}.


\vspace{-2mm}
\subsection{Overview of \RR Dynamics}\label{sec:introrr}
\vspace{-1mm}

The \RR dynamics, variously referred to as the \emph{round-robin} mechanism, the \emph{rotor-router}, the \emph{Eulerian walker} model, the \emph{ant walk} model, or the \emph{Propp machine}, describes a deterministic token propagation mechanism on a (symmetric) directed graph.\footnote{A process defined on a symmetric directed graph may also be considered as running on the corresponding undirected graph. We consider the symmetric directed variant for an easier discussion of (Eulerian) circuits.} In each step, the state of the system is represented by the locations of the $k$ tokens on nodes of the graph, and a pointer for each node, indicating one outgoing arc. In discrete, synchronous steps, all tokens are propagated according to the deterministic round robin rule, where after sending out each token, the node sending the token advances its pointer to the next outgoing arc in some fixed cyclic ordering of its out-neighborhood.

The \RR dynamics was first considered in~\cite{PhysRevLett.77.5079} and is a close cousin of the dynamics of \emph{sandpiles}. The latter model was introduced 30 years ago in the seminal work~\cite{BTW87} (cf. e.g.~\cite{RS17} for an overview of its algorithmic properties) and gave rise to the notion of \emph{self-organized criticality} (SOC) --- a concept since applied in dozens of fields, ranging from Earth sciences to neurobiology, to model systems whose evolution occurs close to a critical point. Whereas sandpile dynamics were the first synthetic physical system recognized to display SOC behavior, the \RR dynamics has been shown to exhibit comparable effects, visible for instance in the almost identical long-distance correlation patterns of the two processes on the grid, cf.\ e.g.~\cite{PhysRevLett.77.5079,DD14}. In the large-$k$ regime, the RR system and the synchronous sandpile model (with chip-firing rules~\cite{BLS91,KNTC94}) are comparable in many other respects. For instance, in a network load-balancing context, they both belong to a class of processes with a highly desirable property of cumulative load balancing over arcs outgoing from each node~\cite{BKKMU15,CS06,DF09}. In the small-$k$ regime, the RR dynamics has been studied in the context of its limit behavior and has often been compared to parallel random walk processes, e.g., in terms of its cover time on graphs~\cite{DBLP:journals/siamcomp/AfekG94,DKPU14,Frae70,GR,KosowskiP14,YWB03}.

The RR dynamics, like any other finite deterministic system, exhibits an initial transient phase, after which it stabilizes to a cyclic traversal of a set of recurrent states. For $k=1$ token, multiple authors have observed that on symmetric graphs, the dynamics stabilizes in a polynomial number of steps to a traversal of an Eulerian circuit~\cite{PhysRevLett.77.5079,DBLP:journals/siamcomp/AfekG94,YWB03} (for the non-symmetric and infinite settings, see e.g.~\cite{HP10}). The underlying link between the RR dynamics and Eulerian circuits is also displayed by the relation between the number of possible states of the system and the number of Eulerian circuits in the graph, as given by the BEST theorem (cf.\ e.g.~\cite{BGHIKK09} for further discussion). For $k>1$, a structural characterization of the limit behavior of \RR for an \emph{arbitrary} number $k>1$ of tokens was shown in \cite{CDGKLU15}. The obtained characterization shows that the \RR process provides a way of organizing tokens into balanced subsets, each of which follows a well-defined walk in some closed circuit on a subset of arcs of the graph. When the graph is \emph{not symmetric}, such token subsets may be arbitrary and display little regularity: for example, on a one-directional cycle in which each node has out-degree $1$, a group of $k$ tokens may perform a cyclic traversal of the graph while always located on the same node. Herein we show how, and (plausibly) why, in any \emph{symmetric} directed graph the RR dynamics will (under almost all parameter configurations) self-organize into recurrent states with an arrangement of tokens in which distances between tokens on their respective circuits are almost uniform.

\vspace{-2mm}
\subsection{Motivation and Consequences of Results}\label{sec:motivation}
\vspace{-1mm}

In Section~\ref{sec:techresults} we provide a formal exposition of the obtained structural results for Eulerian circuits, and their consequences for the RR model. Before we do this, we present here two examples of concrete scenarios in which the results of this paper find direct application. The first concerns the deployment of the RR process in the task of regularly patrolling network edges, while the second makes use of the RR rules to balance load in a network fairly over periods of time. We remark that for neither of these tasks were decentralized solutions of comparable quality previously known, whether deterministic or randomized. In particular, as discussed further, the application of random-walk-based techniques fails completely for both tasks.

\subsubsection{Network Patrolling}

The term \emph{patrolling} refers to an act of surveillance, which involves walking perpetually around an area in order to protect or supervise it. It is a convenient description of  numerous settings, such as locating objects or humans that need to be rescued from a disaster, ecological monitoring, or intrusion detection. Network
administrators may patrol network nodes and links to detect network failures or to discover web pages which need to be indexed by search engines, cf.\ \cite{MRZD02}.
Patrolling has been recently intensively studied  in robotics  (cf.\ \cite{ARSTMCC04,C04,EAK09,ESK08,HK08,MRZD02,YWB03})
where it is often viewed as a version of terrain {\em coverage}, a central task in robotics. Patrolling boundaries of terrains and their area have been studied in \cite{AKK08,EAK09,ESK08,PFB10} with
approaches placing more emphasis on experimental results.

The accepted measure of the algorithmic efficiency of
patrolling is called  {\em idleness} or {\em refresh time} and describes  the frequency with
which the points of the environment are visited
(cf. \cite{ARSTMCC04,C04,EAK09,ESK08,MRZD02,PFB10});
this criterion was first introduced in \cite{MRZD02}.
Depending on the
requirements, the idleness may sometimes be viewed as the
average (\cite{EAK09}), worst-case (\cite{BGHIKK09,YWB03}),
probabilistic (\cite{AKK08}) or experimentally verified
(\cite{MRZD02}) time elapsed since the last visit to a node or edge (cf. also \cite{ARSTMCC04,C04}).  A survey of diverse approaches to patrolling based on idleness criteria can be found in \cite{ARSTMCC04}, whereas a theoretical analysis of approaches to patrolling in graph-based models can be found in \cite{C04}.

\paragraph{Contribution:} In the standard edge-patrolling setting on graphs, the asymptotically optimal value of (worst-case) idleness which can be achieved using a team of $k$ agents on a graph with $m$ edges is $\Theta(m/k)$. This is achieved, e.g., by deploying $k$ agents in a centralized manner with equal spacing, traversing a predefined Eulerian circuit in the graph (cf.~e.g.~\cite{C04}). We show that the RR dynamics provides a completely decentralized (and self-organized) solution to the edge-patrolling problem on graphs, achieving idleness of $\widetilde\Theta(m/k)$ in a wide variety of settings, in particular, when $\gcd(k,m)=1$, as well as for trees. (We remark that some empirical studies of patrolling of trees with the RR process in trees were provided in~\cite{CKKT16}, without theoretical analysis.) In Table~\ref{tab:results} we present the obtained results on the \emph{patrolling competitive ratio} of the RR process, i.e., the achieved value of the idleness divided by the optimal $\Theta(m/k)$ idleness of the centralized process. To the best of our knowledge, designing a decentralized patrolling strategy with small idleness was a relevant open question in the area, and the RR approach may prove to be a viable practical approach. As pointed out in the discussion of time-averaging processes at the beginning of the paper, even when considering probabilistic measures of idleness and when the considered graph is a cycle, a team of parallel random walks achieve an idleness of $\Omega(m^2/k^2)$, which corresponds to a competitive ratio of $\Omega(m/k)$.

\begin{table*}[t!]
\centering

  \label{tab:networks}
\begin{tabular}{llll}
\toprule
\textbf{Scenarios for RR dynamics} & \emph{Patrolling competitive ratio} & \emph{Cumulated load discrepancy} & \emph{Reference} \\
\midrule
small $\gcd(k,m)$: & $\widetilde\bigo( \gcd(k,m))$ & $\widetilde\bigo(\gcd(k,m))$ &  Pro. \ref{pro:idleness}, \ref{pro:cumul}\\
\midrule
large $k$: & $\bigo(1)$, for $k \ge \frac{3}{4}m$ & $-$ & Pro. \ref{pro:idleness}\\
small $k$: & $\widetilde\bigo(\sqrt{k})$ & $-$ & Pro. \ref{pro:idleness}\\
small $\diam$: & $\bigo(\diam)$& $\bigo(\diam)$ &  Pro. \ref{pro:idleness}, \ref{pro:cumul}\\
small $n$: & $\widetilde\bigo( \sqrt{n})$  & $\widetilde\bigo(\sqrt{n})$ &  Pro. \ref{pro:idleness}, \ref{pro:cumul}\\
tree topology: & $\bigo(1)$, for trees &  $-$ & Pro. \ref{pro:idleness}\\[2mm]
\midrule
\multicolumn{2}{l}{\textbf{Comparison of strategies (for $m$ prime)}}\\
\midrule
RR dynamics: & $\widetilde\bigo(1)$ & $\widetilde\bigo(1)$ & \\[2mm]
$k$ parallel random walks & $\Omega(m/k)$ & $\Omega(\sqrt T)$ in expectation  & \\[-1mm]
or randomized rounding~\cite{FS09}: &  (also in expected sense) &  (i.e., unbounded) &  \\
\bottomrule
\end{tabular}

\caption{Summary of obtained performance bounds on RR dynamics. The underlying graph is assumed to be symmetric directed, on $n$ nodes, $m$ arcs, diameter $\diam$, and with $k$ tokens. The cumulated load discrepancy presented in the table holds eventually over any time interval, of arbitrary length $T$.}
\label{tab:results}
\end{table*}

\subsubsection{Load Balancing}

Process in which nodes exchange tokens have also been extensively studied from the perspective of \emph{load balancing}, where one sees the number of tokens at a node as its load, and aims at minimizing the \emph{discrepancy}, i.e., the difference between the maximal and minimal load of a pair of nodes of a network.

Many distributed load balancing processes are inspired by continuous dynamics from nature. For example, the \emph{heat equation}, which describes real-world processes such as heat and particle diffusion, can also be used to model the following continuous load balancing process, assuming arbitrarily divisible load instead of indivisible tokens. Each node with load $x(u)$ sends to each of its $d$ neighbours load $x(u)/(d+1)$, and keeps $x(u)/(d+1)$ load to itself. In such \emph{diffusive load balancing}, it is well known that loads in $d$-regular graphs will eventually become perfectly balanced~\cite{SS94}. However, such balancing might be impossible to achieve in a discrete setting, and in general it is impossible to simulate a continuous process in a discrete one. Discrete load balancing schemes, which in principle deal with integer-rounding issues in a continuous process,  are also in general much harder to analyze. In \cite{RSW98}, a process that sends $\lfloor x(u)/(d+1) \rfloor$ or $\lceil x(u)/(d+1) \rceil$ load over each edge is shown to achieve vertex discrepancy of $\bigo(d \log n / \mu)$ after $\bigo(\log(K n)/\mu)$ steps for $d$-regular graphs, where $K$ is the initial load discrepancy and $\mu$ is the spectral gap of the transition matrix of the underlying Markov chain.  Since \cite{RSW98}, many variants of discretization of diffusive process have been proposed, see \cite{ABS12,FGS10,FS09,SS12,BKKMU15}.

\paragraph{Contribution:} This paper opens the ground for the analysis of a more fine-grained measure in discrete load balancing, namely, balancing the load passing through a node, cumulated (summed) over intervals of time. Thus, for an arbitrary fixed interval of length $T$, rather than consider the load $x_t(u)$ of a node $u$ at time $t$, we look at the \emph{cumulated load} $\sum_{\tau=t}^{t+T-1} x_\tau(u)$, and bound the \emph{discrepancy} of this value from its average over all nodes of the graph. The existence of this measure was previously indicated in our work~\cite{BKKMU15}, though we managed to obtain relatively weak bounds for a wider family of processes. Here, we show that for the specific load balancing process governed by the RR process, the cumulated discrepancy is very low: the cumulated load of every node is identical up to $\widetilde\bigo(1)$ when $\gcd(k,m)=1$ (see Table~\ref{tab:results} for a detailed overview of results). In other words, the total number of operations, such as storing a task, sending or receiving a task from a neighbor, which are performed by every node, is fairly balanced over all nodes of the network, counting over any window of time. Note that the cumulative load balancing problem cannot be addressed using a random-walk-type approach: indeed, even a process propagating a single superfluous token with a random walk among otherwise balanced nodes can cause significant discrepancy of cumulated load, which eventually becomes unbounded. This is, in particular, the case for the elegant randomized rounding approach of Friedrich and Sauerwald~\cite{FS09}, which is extremely efficient at bounding load differences at any fixed moment of time, but which allows the differences of cumulated load of nodes over a time interval of length $T$ to increase (in expectation) proportionally to $\sqrt T$. (Such behavior is observed even on simple graph topologies, such as the three-vertex cycle, since intuitively, the best guarantees for such a randomized process follow directly from the law of large numbers. The guarantees we obtain for the RR dynamics have a more subtle explanation which relies on combinatorial and number-theoretic properties. Other deterministic load balancing processes known to us show no similar advantage for cumulated load balancing).

We note that our results provide insights for the behavior of the RR dynamics after it has stabilized, and have no direct bearing on its initial stabilization phase.

\section{Exposition of Technical Results}\label{sec:techresults}

\subsection{Notation}

In this paper, we will consider graph $G=(V,E)$ to be directed and symmetric, i.e., such that every arc $e = (u,v)$ has a corresponding opposite arc $(v,u) =: -e$. We denote $|E| = m$ and $|V| = n$. We allow the graph to have multi-arcs as well as self-loops, however, to extend our notation to loops, we say that a loop is the opposite of itself, only (even if there are many loops in a single vertex). For an arc $e = (u,v)$, we denote $u = \pre(e)$ and $v = \suc(e)$. Unless otherwise stated, we will assume that $G$ is not bipartite, to avoid obvious issues with the periodicity of $2$ in the dynamics on $G$.

We denote the set of integers by $\Z$, the set of non-negative integers by $\Z_+$, and the set of integers in modulo-$\eta$ arithmetic as $\Z_\eta$. For $A,B \subseteq Z_\eta$ their \emph{sumset} is $A+B := \{(a+b) \mod \eta : a \in A, b \in B\}$, and for $k \in \Z_+$, $kA := A +  \ldots + A$, with the sum spanning over $k$ copies of $A$.

\subsection{Eulerian Mixing in Time and Space}\label{sec:exposition}

The main technical ingredients of this paper are linked to properties of traversals of Eulerian circuits, and, more broadly, of circulations on symmetric directed graphs. We formulate them in this section in the most intuitive form, in a way which is independent of the \RR process.

We will call $\varphi : E \rightarrow E$ a (unit) \emph{circulation} on the set of arcs of a symmetric directed graph $(V, E)$ if $\varphi$ is a bijection on $E$ such that for all arcs $e\in E$, $\suc(e) = \pre(\varphi(e))$.

For a pair of vertices $u, v \in V$ and $t\in \Z_+$, we will write $u \to_t v$ if there exists an edge $e$ with $\pre(e) = u$ such that $\suc(\varphi^t(e)) = v$, i.e., it is possible to reach $v$ from $u$ by following a $t$-step walk along the circulation. We will also write $v \to_{-t} u$ whenever $u \to_{t} v$, for $t\in \Z_+$.

We will call $\varphi$ \emph{Eulerian} if the orbit of any arc is an Eulerian circuit on $(V,E)$, i.e., for any $e\in E$, we have $\{\varphi^t(e) : t\in \Z\} = E$. In general, we will consider our circulation $\varphi$ as a union of $g\geq 1$ orbits (cycles), denoted $E_1, \ldots, E_g$, forming a partition of the arc set.

For a circulation $\varphi$ on a graph $G=(V, E)$, let $G_{\varphi}$ be the graph with vertex set $V \times \Z$ and edge set $\{\{(u,t_u), (v, t_v)\} : u \to_{(t_v - t_u)} v\}$. For the important case of Eulerian circulations, i.e., $g=1$, the (simpler) version of the main technical lemma of our paper may be stated as follows.

\begin{lemma}\label{lem:maineuler}
Let $\varphi$ be an Eulerian circulation on a non-bipartite symmetric directed graph $G$. Then, graph $G_\varphi$ has diameter $\bigo(\log^2 n)$.
\end{lemma}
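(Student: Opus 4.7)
The plan is to connect any two vertices $(u, t_u)$ and $(v, t_v)$ of $G_\varphi$ by a path of length $O(\log^2 n)$, by reducing the diameter bound to an additive-combinatorial statement about iterated sumsets in $\mathbb{Z}_m$ --- which is the central structural ingredient developed later in the paper.

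First, I would observe that a single edge of $G_\varphi$ from $(u, t_u)$ to $(v, t_v)$ amounts to traversing a fragment of $\varphi$: pick an arc out of $u$, apply $\varphi$ a certain number of times, and end on an arc meeting $v$. Because $\varphi$ is Eulerian, every vertex is visited within one full period of length $m$, so one edge of $G_\varphi$ already suffices to move from $(u, t_u)$ to some $(v, s^\ast)$, for at least one integer $s^\ast$. The diameter bound therefore reduces to the following question: starting from $(v, s^\ast)$, how many further edges are required to reach $(v, t_v)$? Equivalently, writing $T_{vv} := \{t \in \mathbb{Z}_m : v \to_t v\}$, how many elements of $T_{vv}$ must be summed to realize the residue $t_v - s^\ast \pmod m$? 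Each edge of $G_\varphi$ is free to add an arbitrary multiple of $m$ to its integer shift (the full circuit has length $m$), so matching modulo $m$ is sufficient.

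Next, I would extract the key structural properties of $T_{vv}$. Let $I_v \subseteq \mathbb{Z}_m$ be the set of positions along $\varphi$ at which the walker occupies $v$; then $T_{vv} = (I_v - I_v) - 1$, which immediately yields two useful facts: $T_{vv}$ is symmetric under negation in $\mathbb{Z}_m$ (because $I_v - I_v$ is), and $-1 \in T_{vv}$ (taking equal summands in the difference). The latter forces $\gcd(T_{vv} \cup \{m\}) = 1$, so iterated sumsets can in principle reach any residue modulo $m$. The non-bipartite assumption on $G$ then rules out a parity obstruction: if $G$ were bipartite, the walker's position would alternate between the two parts of the bipartition, forcing every $I_v$ to lie in a single parity class of $\mathbb{Z}_m$; then $T_{vv}$ would also lie in a single parity class, and sumsets of any fixed number of summands would be confined to one parity, never jointly covering $\mathbb{Z}_m$ within a bounded number of steps. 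Non-bipartiteness guarantees that for some vertex $v$ the set $I_v$ contains indices of both parities, so $T_{vv}$ spans both parity classes.

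The main difficulty --- and the heart of the paper --- is the additive-combinatorial lemma promised in Sections~\ref{sec:sumset}--\ref{sec:circulations}: under the structural properties above, the iterated sumset $k\,T_{vv}$ covers $\mathbb{Z}_m$ for some $k = O(\log^2 n)$. I would expect the proof to proceed by a Pl\"{u}nnecke--Ruzsa-style doubling argument for sumset growth, possibly combined with a separate covering or rectification step to eliminate residual gaps in the sumset; together these could plausibly account for a product of two logarithmic factors. Combining the one-edge reachability of the first step with this sumset bound, any two vertices of $G_\varphi$ are connected by at most $1 + O(\log^2 n) = O(\log^2 n)$ edges, proving the lemma.
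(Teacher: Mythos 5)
Your high-level reduction is the right idea --- reduce the diameter bound to a statement that an iterated sumset in $\Z_m$ covers every residue after $O(\log^2 n)$ summands, with non-bipartiteness ruling out a parity obstruction --- but the sumset you choose is the wrong one, and the argument collapses there.

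The set $T_{vv} = (I_v - I_v) - 1$ records only \emph{self-intersections} of the Eulerian circuit at the single vertex $v$. This set can be trivially small and clustered. If $v$ has degree $1$ in $G$ (so one outgoing arc), then $I_v$ is a singleton, $I_v - I_v = \{0\}$, and $T_{vv} = \{-1\}$; then $\ka\, T_{vv} = \{-\ka\}$ and no bounded number of summands covers $\Z_m$. Even with $|I_v|$ a small constant, $I_v - I_v$ can be an interval of constant length, so $\ka\, T_{vv}$ stays confined to an interval of length $O(\ka)$ and one needs $\ka = \Omega(m)$. The fact that $-1 \in T_{vv}$ only gives $\gcd(T_{vv} \cup \{m\}) = 1$, which says nothing about the \emph{number} of summands needed. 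Your parity remark has the same defect: non-bipartiteness forces \emph{some} vertex to see both parities along the circuit, not the particular target vertex $v$, so $T_{vv}$ may still lie in one parity class. None of these obstacles arise in the paper because the relevant set is $A_\lambda = \{\lambda(e_1) - \lambda(e_2) : \pre(e_1) = \pre(e_2)\}$, the union over \emph{all} vertices of the self-intersection differences. A path in $G_\varphi$ from $(v, t)$ to $(v, t + x)$ is free to switch fragments of the Eulerian circuit at arbitrary intermediate vertices, not just at $v$; the paper makes this precise via the axioms of an ``intersection distance measure'' $\delta$, in which each element of $A_\lambda$ contributes a unit cost and subadditivity composes switches at different vertices. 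Without that composability you cannot recover the correct set.

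Two secondary points: the structural input on $A_\lambda$ is substantially stronger than ``$\gcd = 1$ and both parities present''. Lemma~\ref{lem:int_mod} establishes, by a topological three-coloring of a continuous model of the circulation, that $\Bohr(A_\lambda + A_\lambda, 1/6) = \{0\}$, i.e., the Fourier spectrum of $A_\lambda$ has no nontrivial near-degenerate frequency. This is what makes the iterated sumset explode in $O(\log^2 \eta)$ steps (Lemmas~\ref{lem:spec_bohr} and~\ref{lem:spec1}); a na\"{\i}ve $\gcd$ argument only gives eventual coverage. And while you guess a Pl\"{u}nnecke--Ruzsa-style doubling argument, the paper's mechanism is Fourier-analytic (Bohr sets, spectra, and a Bogolyubov-type inclusion from Proposition~\ref{prop:tao}); the doubling dichotomy appears, but only as the driver of the $O(\log\eta)$-step iteration inside Lemma~\ref{lem:spec_bohr}, not as a Pl\"{u}nnecke--Ruzsa bound on sumset cardinalities.
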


The assumption that the considered graph is symmetric cannot be omitted in the statement of Lemma~\ref{lem:maineuler}; an example of a non-symmetric graph for which $G_\varphi$ has larger diameter is presented in Fig.~\ref{fig:dwucykle}.

\begin{figure}
\centering\includegraphics[scale=1.25]{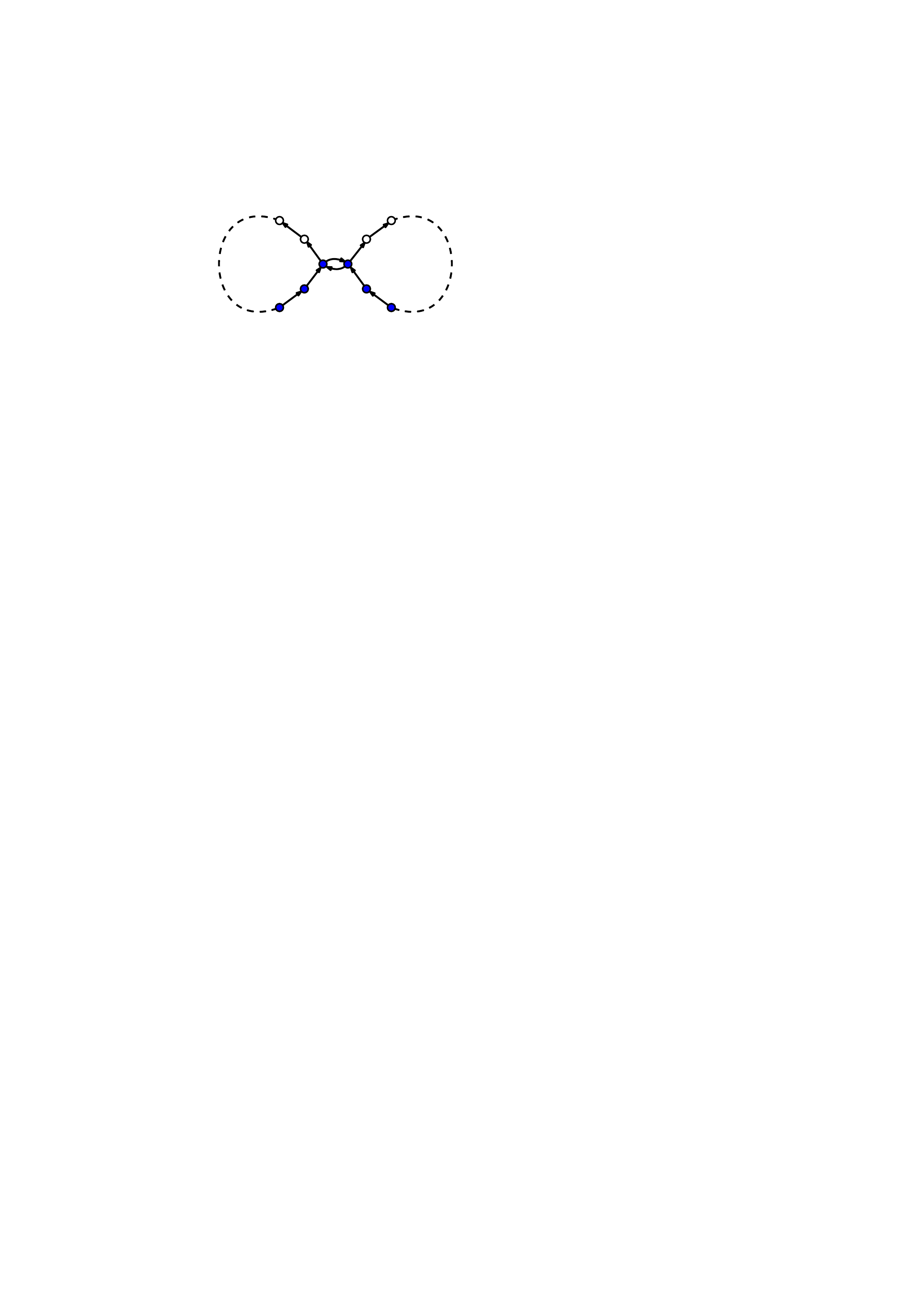}
\caption{A non-symmetric directed graph for which Lemma~\ref{lem:maineuler} does not hold: the unique Eulerian circulation $\varphi$ covering $G$ results in $G_{\varphi}$ that has diameter linear in $n$. If we consider a configuration of RR dynamics with one token per each blue vertex, symmetrically on both sides, the configuration repeats every $m/2$ steps, and no mixing occurs.}
\label{fig:dwucykle}
\end{figure}

As a side remark, we note that the above Lemma has an intuitive interpretation in that it bounds the distance between a pair of time-separated points $(u,0)$ and $(u,l)$ by $\bigo(\log^2 n)$. We can thus write the following informal corollary.
\begin{corollary}
For any fixed Eulerian circuit in a non-bipartite symmetric directed graph $G$ and any $l \in \Z_{m}$, there exists a closed walk over the arcs of $G$ of length congruent to $l$ modulo $m$, constructed by a successive traversal of $\bigo(\log^2 n)$ contiguous fragments of the given Eulerian circuit.
\end{corollary}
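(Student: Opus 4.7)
The corollary is an essentially direct reinterpretation of Lemma~\ref{lem:maineuler} in terms of closed walks on $G$ rather than distances in the auxiliary graph $G_\varphi$. The plan is to fix any base vertex $u \in V$, invoke Lemma~\ref{lem:maineuler} to produce a short path in $G_\varphi$ between $(u, 0)$ and $(u, l)$ (viewing $l$ as an integer representative of its residue class), and translate each of its edges into a contiguous fragment of the given Eulerian circuit to produce a closed walk of the required length.

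Concretely, let $\varphi$ denote the circulation tracing the given Eulerian circuit. By Lemma~\ref{lem:maineuler}, there is a path $(u, 0) = (u_0, t_0), (u_1, t_1), \ldots, (u_k, t_k) = (u, l)$ in $G_\varphi$ with $k = \bigo(\log^2 n)$ edges, for arbitrary integer labels $t_i$ satisfying $t_0 = 0$ and $t_k = l$. By the definition of the edges of $G_\varphi$, each consecutive pair $(u_{i-1}, t_{i-1})$, $(u_i, t_i)$ certifies the existence of a contiguous fragment $F_i$ of the Eulerian circuit going from $u_{i-1}$ to $u_i$ whose length equals $\Delta_i := t_i - t_{i-1}$ modulo $m$. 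When $\Delta_i > 0$, such $F_i$ is simply obtained by following the circulation forward by $\Delta_i$ transitions. When $\Delta_i < 0$, one first extracts (using the convention $v \to_{-t} u \Leftrightarrow u \to_t v$) a forward fragment from $u_i$ back to $u_{i-1}$, and then takes its \emph{complement} along the Eulerian circuit — the forward fragment from $u_{i-1}$ to $u_i$ whose length is $m + \Delta_i \equiv \Delta_i \pmod m$. Because $F_i$ ends at $u_i$ and $F_{i+1}$ begins at $u_i$, the fragments splice into a single walk on arcs of $G$; since $u_0 = u_k = u$, this walk is closed. Its length is $\sum_{i=1}^{k} |F_i| \equiv \sum_{i=1}^{k} \Delta_i = t_k - t_0 = l \pmod m$ by telescoping, and the construction uses $k = \bigo(\log^2 n)$ fragments, as required.

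\textbf{Main obstacle.} The one non-trivial point is handling edges of $G_\varphi$ with \emph{negative} time differences: such edges are witnessed by a fragment of the Eulerian circuit running from $u_i$ to $u_{i-1}$, not from $u_{i-1}$ to $u_i$, so they cannot be pasted directly into a forward directed walk. The resolution exploits that the Eulerian circuit is a \emph{single} closed loop of length $m$ covering all arcs: the "detour" going forward the remaining $m - |\Delta_i|$ arcs connects $u_{i-1}$ to $u_i$ in the forward direction at the cost of a length shift of $m$, which vanishes modulo $m$. This is precisely why the Eulerian hypothesis and the symmetry assumption (via Lemma~\ref{lem:maineuler}) are both crucial — without the single-orbit structure, there would be no "free" forward detour between time-separated copies of the same vertex.
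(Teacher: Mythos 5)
Your proposal is correct and follows essentially the same route as the paper, which treats this corollary as an informal restatement of Lemma~\ref{lem:maineuler}: take the $\bigo(\log^2 n)$-length path from $(u,0)$ to $(u,l)$ in $G_\varphi$, splice the witnessing fragments, and observe the lengths telescope to $l$ modulo $m$. Your handling of negative time offsets (replacing a backward fragment by its complementary forward fragment of length $m-|\Delta_i|$) is just a per-edge instantiation of the paper's remark that one may augment the walk with copies of the full circuit $\varphi^m$ without changing its length modulo $m$.
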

Note that the trajectory between nodes $(u,0)$ and $(u,l)$ which follows from a shortest path in graph $G_\varphi$ may contain ``positive'' traversals along iterated applications of $\varphi$, as well as ``negative'' traversals along iterated applications of $\varphi^{-1}$. However, one can ensure that all arcs appear in the considered walk a non-negative (or positive) number of times by augmenting it with a traversal of sufficiently large number of iterations of the complete Eulerian circuit $\varphi^m$. This does not change the length of the walk, modulo $m$.

The proof of Lemma~\ref{lem:maingeneral} relies on a characterization of the set of lengths of \emph{self-intersections} of an Eulerian circuit. Intuitively, $t$ belongs to the set of self-intersections if there exists a closed walk $u \to_t u$ along the Eulerian circuit. In Section~\ref{sec:sumset} we exploit the spectral properties of the set of self-intersections to show that for given $\eta$, any value in $\Z_\eta$ may be obtained as a sum of at most $\bigo(\log^2 \eta)$ lengths of self-intersections. The discussion in the section is formulated in slightly more general notation, leading to Lemma~\ref{lem:explosion}, which will also prove useful in the consideration of non-Eulerian circulations. The claim then follows readily (Section~\ref{sec:circulations}).

Extending considerations beyond Eulerian circulations, we obtain the following generalization of Lemma~\ref{lem:maineuler} to the case of $g>1$.




\begin{lemma}\label{lem:maingeneral}
Let $\varphi$ be a unit circulation on a non-bipartite symmetric directed graph $G$. Then, $G_{\varphi}$ has diameter at most $\bigo(g \log^2 n)$.
\end{lemma}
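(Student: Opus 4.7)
The plan is to generalize Lemma~\ref{lem:maineuler} by decomposing paths in $G_\varphi$ into segments confined to a single orbit of $\varphi$, with transitions occurring at vertices shared by consecutive orbits. Observe that each edge of $G_\varphi$ is witnessed by some arc lying in exactly one orbit $E_i$, so any walk in $G_\varphi$ naturally admits such a decomposition. First I would introduce the orbit meta-graph $H$ on vertex set $\{E_1, \ldots, E_g\}$, where two orbits are adjacent iff they share a vertex of $V$; connectedness of $G$ then yields connectedness of $H$, and hence a path of length at most $g-1$ in $H$ between the orbit containing $u$ and the orbit containing $v$. This reduces the task of routing $(u, t_u) \to (v, t_v)$ in $G_\varphi$ to scheduling at most $g$ intra-orbit excursions of the form $(w_{j-1}, s_{j-1}) \leadsto (w_j, s_j)$, where $w_0 = u$, $w_k = v$, and the intermediate $w_j$ are the ``switch'' vertices.

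Next, within each orbit $E_{i_j}$, I would invoke an intra-orbit analogue of Lemma~\ref{lem:maineuler}, backed by the sumset-growth engine of Lemma~\ref{lem:explosion}. The goal is to show that, using $O(\log^2 n)$ edges of $G_\varphi$ that traverse only arcs of $E_{i_j}$, one can hit any target offset $s_j - s_{j-1}$ lying in a sufficiently rich coset of a subgroup of $\Z$, generated by self-intersection lengths at vertices of the orbit. Summing these excursions over the $k \le g$ segments then yields a total of $O(g \log^2 n)$ edges, matching the claimed bound on the diameter.

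The principal obstacle I anticipate is ensuring that the composition of per-orbit offset spectra is flexible enough to realize an arbitrary global time difference $t_v - t_u$. Individually, an orbit might supply offsets only in a proper subgroup of $\Z$ modulo its period, so one must argue that the sumset of these subgroups, taken across the chain of at most $g$ orbits, eventually covers all of $\Z$. Here the non-bipartiteness of $G$ is essential to rule out parity-type obstructions, and the explicit sumset explosion provided by Lemma~\ref{lem:explosion} is what keeps the per-orbit overhead at $O(\log^2 n)$, so that the telescoping across $g$ orbits accumulates to only $O(g \log^2 n)$ edges overall.
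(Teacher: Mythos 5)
Your decomposition runs into a fundamental obstruction at the very first step: the per-orbit sumset engine you want to invoke does not exist. A single orbit $E_i$ of the circulation is just a directed cycle, and the subgraph it spans is in general \emph{not} symmetric (the reverse arc $-e$ of an arc $e \in E_i$ typically lives in some other orbit $E_j$). Consequently neither Lemma~\ref{lem:maineuler} nor Lemma~\ref{lem:int_mod} applies to an orbit in isolation. Worse, an orbit can have \emph{no} nontrivial intra-orbit self-intersections at all: consider the symmetric triangle $C_3$ with $\varphi$ splitting into a clockwise and a counterclockwise $3$-cycle. Within either orbit, each vertex is the tail of exactly one arc, so the intra-orbit intersection set is $\{0\}$, and $\kappa\{0\} = \{0\}$ no matter how many times you iterate. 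Your stated hope that ``the sumset of these subgroups, taken across the chain of at most $g$ orbits, eventually covers all of $\Z$'' fails here: $g$ copies of the trivial subgroup still give the trivial subgroup. The only source of nontrivial sumset growth is the \emph{cross-orbit} intersections at shared vertices, which your orbit-confined segmentation discards by construction.

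The paper's proof avoids this by applying the sumset machinery \emph{once, globally}, to the full intersection set $A_\lambda \subseteq \Z_{\eta}$ where $\eta = \gcd(|E_1|,\ldots,|E_g|)$ and $A_\lambda$ includes intersections between arcs of \emph{different} orbits. Lemmas~\ref{lem:int_mod} and~\ref{cor:bohr_inter}, stated for arbitrary (non-Eulerian) circulations, show $\Bohr(A_\lambda + A_\lambda, 1/6) = \{0\}$, and Lemma~\ref{lem:explosion} then gives $\kappa A_\lambda = \Z_\eta$ for $\kappa = \bigo(\log^2 n)$. The factor of $g$ enters differently than in your plan: realizing a single element $x \in A_\lambda$ as a $\delta$-distance on a fixed arc $e$ (Lemmas~\ref{lem:delta_alpha} and~\ref{lem:cycle_translation}) costs $\bigo(\diam(\mathcal G)) = \bigo(g)$ hops through the cycle-adjacency graph, since the two arcs witnessing the intersection may lie in distant orbits. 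So the product $\bigo(g\log^2 n)$ arises as (cost per sumset element) $\times$ (number of sumset iterations), not as (orbits) $\times$ (per-orbit mixing) as you propose. You also omit the final reconciliation between shifts modulo $\eta$ and integer shifts by multiples of $\eta$, which the paper handles by observing that $\eta$ has $\bigo(\log n)$ prime factors, so only $\bigo(\log n)$ cycle lengths are needed to express any multiple of $\eta$.
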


The proof of this lemma relies on the same ingredients (Section~\ref{sec:sumset}) as the considerations in the Eulerian case, however, the sum set technique only provides us with a way of showing mixing in time when considering time values modulo the greatest common divisor of the lengths of all cycles in the circulation. The general result for arbitrary integer time values follows from an application of the Chinese Remainder Theorem (Section~\ref{sec:circulations}).

The obtained bounds on the diameter of $G_{\varphi}$ are tight up to polylogarithmic factors, and we subsequently apply them to the analysis of the RR model.

We close this section with the following side remark, which appears to be interesting in its own right but has no bearing on the main results of the paper. It is possible to obtain a slightly stronger spectral property of $G_{\varphi}$ for Eulerian $\varphi$ than merely a polylogarithmic bound on the diameter. For any Eulerian circuit in $G$, one can design a random process which will terminate at a well-mixed node in $G$ at a well-mixed time, by traversing a polylogarithmic number of fragments of the Eulerian circuit.

\begin{corollary}
Let $\varphi$ be an Eulerian circulation on a non-bipartite symmetric directed graph $(V, E)$. Then, there exists an assignment of weights $w : E(G_\varphi) \to [0,1]$, such that the reversible Markov chain following $(G_{\varphi},w)$ achieves good mixing on $V \times \Z_{m}$ starting from any vertex from this set, after $\widetilde\bigo(1)$ steps.
\end{corollary}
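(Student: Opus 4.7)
The plan is to lift the deterministic diameter bound of Lemma~\ref{lem:maineuler} to a spectral mixing statement by realising the sum-set argument behind that lemma as a random walk. The proof sketched in Sections~\ref{sec:sumset}--\ref{sec:circulations} produces, for any pair of vertices of $G_\varphi$, a walk of length $O(\log^2 n)$ whose successive shifts lie in a combinatorially structured set $A \subseteq \Z_m$ of self-intersection offsets of the Eulerian circuit. A quantitative refinement of Lemma~\ref{lem:explosion} should yield a symmetric probability measure $\mu$ on the symmetrised set $A \cup (-A)$ whose nontrivial Fourier coefficients on $\Z_m$ are all bounded away from $1$ by an additive gap of $1/\mathrm{polylog}(m)$, so that $\mathrm{polylog}(m)$ convolutions of $\mu$ are exponentially close to the uniform measure on $\Z_m$.

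Given such $\mu$, I would define the weights $w$ on $E(G_\varphi)$ by assigning to each edge $\{(u,t),(v,t+s)\}$ a weight proportional to $\mu(s)$, scaled by the reciprocal of the number of out-arcs $e$ of $u$ with $\suc(\varphi^s(e))=v$ and normalised so that the total weight at each vertex of $V\times\Z_m$ is a common constant. Symmetry of $w$, and hence reversibility of the chain with uniform stationary distribution on $V\times\Z_m$, follows from the symmetry of $G$: an arc $e$ witnessing $u \to_s v$ is matched by the opposite arc $-\varphi^s(e)$ witnessing $v \to_{-s} u$, so the edge of $G_\varphi$ is described consistently from either endpoint, and the two weight contributions coincide because $\mu(s)=\mu(-s)$ by construction.

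Mixing then splits into two parts. \emph{Time mixing}: one step of the chain advances the time coordinate by an approximately independent draw from $\mu$, so the spectral bound on $\mu$ forces the time marginal to be $\varepsilon$-close to uniform on $\Z_m$ after $\mathrm{polylog}(m)\cdot\log(1/\varepsilon)$ steps via a standard character-sum argument. \emph{Space mixing}: conditioned on a shift $s$, the new spatial coordinate ranges over $\suc(\varphi^s(e))$ as $e$ varies over arcs out of $u$; since the Eulerian orbit visits every vertex and $\mu$ is spread over a polylogarithmic-sized structured set, a few additional polylogarithmic iterations spread the distribution over all of $V$.

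The main obstacle is in the first paragraph: converting the set-theoretic growth of iterated sumsets of $A$ into a Fourier-analytic spectral gap for a genuinely symmetric probability measure on $A$, while preserving the $1/\mathrm{polylog}(m)$ rate. The existing sum-set machinery of Section~\ref{sec:sumset} produces a set-covering guarantee rather than an $L^2$-convergence rate, and symmetrising without collapsing the spectral gap requires a careful quantitative refinement; this is presumably why the authors flag the statement as a side remark rather than prove it in detail.
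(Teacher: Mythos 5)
Your overall plan -- lift the sum-set machinery of Section~\ref{sec:sumset} to a random-walk mixing statement by putting a suitable step distribution on the time coordinate -- is the approach the paper has in mind; both treatments remain sketches, and the weight-normalization, reversibility, and ``space'' parts of your argument are of comparable rigor to the paper's one-line ``proof.'' However, the ``main obstacle'' you flag at the end is an overstatement, and more importantly it points at the wrong object.

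You seek a symmetric measure $\mu$ supported on $A_\lambda$ (the set of self-intersection offsets) with a $1/\mathrm{polylog}(m)$ Fourier gap, and you observe, correctly, that Section~\ref{sec:sumset} does not directly establish such a bound for $A_\lambda$ itself: the Bohr-set condition only forces $|\hat A_\lambda(j)| < 1$ for $j\ne0$, which is as weak as $1-O(1/|A_\lambda|)$ in general. But the paper does not need that bound, and neither do you. Lemma~\ref{lem:spec_bohr} combined with Lemma~\ref{cor:bohr_inter} already yields, with $\kappa=\bigo(\log m)$, that $\Spec_{1-1/576}(A_\lambda-\kappa A_\lambda)=\{0\}$: the \emph{characteristic function} of the set $B:=A_\lambda-\kappa A_\lambda$ has every nontrivial normalized Fourier coefficient bounded by $1-1/576$, a \emph{constant} gap, strictly stronger than the $1/\mathrm{polylog}$ gap you hope for. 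This is precisely an $L^2$/Fourier statement, not merely a set-covering statement; it is what drives Lemma~\ref{lem:spec1} and Eq.~\eqref{eq:spectrum_collapse}. Moreover $B$ is symmetric (since $A_\lambda=-A_\lambda$) and contains $0$, so the uniform measure on $B$ is already the step distribution you want. The point you are missing is that supporting your step measure on $B$ rather than on $A_\lambda$ costs nothing on the graph side: by construction, for \emph{every} $s\in\Z_m$ and every $(u,t)$ there is an edge of $G_\varphi$ to $(\suc(\varphi^s(e)),t+s)$ for each arc $e$ out of $u$, so arbitrary time shifts $s\in B$ are single hops in $G_\varphi$ and do not require $\kappa$ concatenated ``$A_\lambda$-hops.'' Putting the uniform measure on $B$ on the time coordinate therefore gives $\bigo(\log m)$-step mixing of the time marginal outright, with no quantitative refinement of Lemma~\ref{lem:explosion} required. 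The remaining work (normalizing so $\sum_z w(\{x,z\})$ is constant over $x$ to get a uniform stationary distribution, and arguing that the vertex coordinate equidistributes because $\{\suc(\varphi^s(e)):e\ \text{out of}\ u\}$ covers $V$ well as $s$ mixes) is the genuinely informal part of both your sketch and the paper's remark.
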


The proof follows from a straightforward adaptation of the spectral arguments used in our analysis in Section~\ref{sec:sumset}.

\subsection{RR Dynamics and Short-Time-Averaging}\label{sec:rrintro}

We start this section by formally defining the \RR dynamics. Each vertex $v$ of $G$ is equipped with a fixed ordering of all its outgoing arcs $\rho_v = (e_1,e_2, \ldots, e_{\text{deg}(v)})$.
A~\emph{state} $x$ at the current time step  $t$ is a tuple:
$x_t = ((\mathit{pointer}_v)_{v \in V},(L(v))_{v \in V}),$
where $\mathit{pointer}_v$ is an arc outgoing from node $v$, which is referred to as \emph{the current port pointer at node} $v$, and $L(v)$ is the number of tokens at any given node. The state space of the RR dynamics is denoted $X$, and its cardinality is in general readily seen to be (exponentially) larger than $|V|$. The system is initialized to an arbitrary state of the state space $X$. For an arc $(v,u)$, let $\mathit{next}((v,u))$ denote the arc after the arc $(v,u)$ in the cyclic order $\rho_v$.
During each step, each node $v$ distributes in round-robin fashion all of its tokens, using the following algorithm:
\\[2mm]
\textbf{While} there is a token at node $v$, \textbf{do}:
\begin{enumerate}
\item Send the token to $\mathit{pointer}_v$,
\item Set $\mathit{pointer}_v = \mathit{next}(\mathit{pointer}_v)$.
\end{enumerate}
Note that during a single time step all tokens at a node $v$ are sent out and at exactly the next time step all those tokens arrive at their respective destination nodes.  The total number of tokens in graph is denoted as $k$.


When the system is in state $x \in X$, let $(L_x(e) : e\in E) \in \Z_+^m$ be the distribution vector of tokens over arcs of the RR dynamics during its transition from state $x$ to the subsequent state, i.e., let $L_x (e)$ represent the number of particles sent out along arc $e \in E$ in the current time step, starting from state $x$.

Once the RR dynamics has entered a recurrent state, the trajectory followed by the dynamics involves only a small fraction of its entire state space $X$, and as such, when viewed in high dimension, it is definitely not an ergodic process. However, we show that it is possible to make use of the time-mixing property of (almost) Eulerian circulations captured by Lemma~\ref{lem:maingeneral} to show that RR achieves the sought time-averaging behavior when considering only one-dimensional probe functions $L(e)$. Intuitively, in the recurrent state of the RR dynamics, individual tokens may be seen as traversing cycles of some unit circulation, with balancing behavior occurring at the meeting points of different cycles in the circulation.

\begin{theorem}
\label{thm:rr}
For any $e \in E$ and $t_0, T \in \Z_+$, such that $x(t_0)$ is recurrent, the RR dynamics satisfies:
$$
\frac{1}{T}\sum_{t = t_0} ^{t_0+T} L_{x(t)}(e) = \frac{k}{m} \pm \widetilde\bigo\left(\frac{gcd(k, m)}{T}\right).
$$
\end{theorem}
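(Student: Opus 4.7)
The plan is to reduce the theorem to the Eulerian-mixing statement of Lemma~\ref{lem:maingeneral}, via the structural characterization of the recurrent regime of RR dynamics.

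First, I would invoke the known structural description of recurrent RR configurations (see e.g.\ \cite{CDGKLU15}): once $x(t_0)$ is recurrent, the dynamics admits a description in terms of a unit circulation $\varphi$ on $G$, whose orbits $E_1, \ldots, E_g$ partition $E$, and the $k$ tokens split into groups $k_1, \ldots, k_g$, with the $k_i$ tokens of the $i$th group riding the orbit $E_i$ at unit speed. Writing $m_i := |E_i|$, I would use the rotor property of RR (outflow balanced up to~$1$ on the outgoing arcs of each node over any window) together with flow conservation to deduce $k_i/m_i = k/m$ for every $i$, and hence $g \le \gcd(k,m)$.

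Second, I would unpack the per-arc traffic. For $e \in E_i$, a token of $E_i$ crosses $e$ exactly once every $m_i$ steps, at a time depending on its phase along the orbit. Summing over the $k_i$ tokens and using $k_i/m_i = k/m$ yields
\[
\sum_{t=t_0}^{t_0+T} L_{x(t)}(e) \;=\; \tfrac{Tk}{m} \;\pm\; \Delta_i,
\]
where $\Delta_i$ quantifies the irregularity of the multiset of phases of the $k_i$ tokens modulo $m_i$. Perfectly equi-spaced tokens would give $\Delta_i = O(1)$; in general, $\Delta_i$ is determined by which phase configurations are consistent with the recurrent state.

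Third, and this is the heart of the argument, I would bound $\Delta_i$ via Lemma~\ref{lem:maingeneral}. The lemma asserts that $G_\varphi$ has diameter $\bigo(g\log^2 n)$; interpreted dynamically, any two time-separated $(v,t)$ states of the circulation can be joined by a short sequence of ``orbit switches'' at vertices shared by different orbits of $\varphi$. Because the RR rotor rule at each shared vertex ties the phases of the orbits passing through it, this short-diameter property forces the multiset of phases along $E_i$ to be regular up to an additive error of $\widetilde\bigo(g) = \widetilde\bigo(\gcd(k,m))$. Substituting this into the preceding display and dividing by $T$ yields the claimed bound.

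The hardest step, in my view, is the third one: formalizing the translation from the graph-theoretic diameter bound on $G_\varphi$ to a quantitative bound on the phase irregularity along a single orbit. This requires a careful set-up of what ``phase'' means when orbits interact at shared vertices, together with a direct extraction from the sumset machinery of Section~\ref{sec:sumset} of the precise $\widetilde\bigo(\gcd(k,m))$ constant (a naive application of Lemma~\ref{lem:maingeneral} would only give an error of order diameter times maximum orbit length, which must be tightened).
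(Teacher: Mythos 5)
Your high-level plan matches the paper's: decompose the recurrent dynamics into the orbits of a circulation $\varphi$, use $k_i/|E_i| = k/m$ and $g \le \gcd(k,m)$, and then push the error term down via the sum-set machinery of Sections~\ref{sec:sumset}--\ref{sec:circulations}. However, the step you flag as hardest — ``translating the diameter bound on $G_\varphi$ into a bound on phase irregularity along one orbit'' — is precisely where the paper takes a different and much cleaner route, and your version as written has a genuine gap there.

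The paper never reasons about the multiset of token phases, and does not invoke Lemma~\ref{lem:maingeneral} at all in proving Theorem~\ref{thm:rr}. Instead, it defines the similarity measure directly on cumulated loads:
$$
\delta_t(e_1,e_2) \;:=\; \max_{\tau,\,\Delta t \ge 0}\;\bigl| C_{\tau}^{\Delta t}(e_1) - C_{\tau+t}^{\Delta t}(e_2) \bigr|,
$$
and verifies that this particular function satisfies the axioms~\eqref{eq:axiom5}--\eqref{eq:axiom9} of a \emph{generalized intersection distance measure}. The key observation — the one your proposal is missing — is that axiom~\eqref{eq:axiom7}, namely $\delta_0(e_1,e_2) \le 1$ whenever $\pre(e_1)=\pre(e_2)$, is exactly the rotor property: over any time window, arcs leaving the same vertex receive cumulated loads differing by at most one. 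Once this is in place, Lemma~\ref{th:sqrt} (the bound $\max_{e,t}\delta_t(e,e) = \widetilde\bigo(g)$ for any generalized intersection distance measure, which is what actually powers Lemma~\ref{lem:maingeneral}) applies verbatim and gives $\delta_T(e,e) = \widetilde\bigo(\gcd(k,m))$. A short argument (Lemma~\ref{lem:cum_discrepancy}) then converts a bound on $\delta_T(e,e)$ into $|C_t^T(e) - \frac{k}{m}T| \le \delta_T(e,e)$, by noting that $(C_t^T(e))_{t}$ is periodic with average $Tk/m$ from equation~\eqref{eq:averages}, and that $\delta_T(e,e)$ bounds pairwise differences of its terms.

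So the gap is concrete: you would either need to invent the cumulated-load similarity $\delta_t(e_1,e_2)$ and recognize it as a generalized intersection distance measure, or find a direct proof that the phase multiset is $\widetilde\bigo(g)$-regular. The first is what the paper does and is essentially immediate once one writes it down; the second is genuinely harder, since ``phase regularity'' does not transparently satisfy a subadditivity that lets the $\kappa A_\lambda = \Z_\eta$ machinery bite. Your remark that a naive use of the $G_\varphi$ diameter gives a bound blown up by orbit lengths is correct, and is a further sign that the route through phases and $G_\varphi$ is the wrong level of abstraction — the paper's choice of $\delta_t$ precisely avoids that blow-up because it already measures the load-balance quantity of interest rather than a positional one.
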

The above Theorem, which we prove in Section~\ref{sec:rr} after introducing more background properties of the \RR process, has immediate consequences to problems of idleness in patrolling  and cumulative load balancing. This is also discussed in Section~\ref{sec:rr}.

\section{Sum-sets of Self-Intersections on a Circulation}\label{sec:sumset}

Compared to the more high-level exposition in Section~\ref{sec:exposition}, in the subsequent analysis it is more convenient to work with measures of distance between pairs of arcs rather than pairs of vertices. We thus introduce all further notation based on the arc-based perspective.

\paragraph{Intersections:}
We will say that an assignment  $\lambda : E \rightarrow \Z_{\eta}$ along the cycle (cycles) of circulation $\varphi$ is a $\eta$-\emph{labeling} if the following condition is fulfilled: $\forall_{e\in E}\ \lambda(\varphi(e)) = \lambda(e)+1$. (This is only possible when $\eta \divides \gcd(|E_1|,|E_2|,\ldots,|E_g|)$.) We say that two arcs $e_1$ and $e_2$ outgoing from the same vertex, $\pre(e_1) = \pre(e_2)$,  span an \emph{intersection} of size $(\lambda(e_1)-\lambda(e_2)) \Mod \eta \in \Z_{\eta}$. We consider the \emph{set of all intersections} $A_{\lambda} \subseteq \Z_{\eta}$:  $$A_{\lambda} = \{ \lambda(e_1) - \lambda(e_2) : \pre(e_1) = \pre(e_2)\}.$$
Observe that $A_{\lambda}$ does not depend on the particular choice of $\lambda$, since all $\lambda$ are identical up to cyclical shift.
We immediately have $0 \in A_{\lambda}$ and $A_{\lambda} = -A_{\lambda}$. Observe also that even when $\eta$ is even, there is an odd element in $A_{\lambda}$. Otherwise, we could partition vertices into two sets, $\{\pre(e) : 2 \mid \lambda(e)\}$ and $\{\pre(e) : 2 \not| \lambda(e)\}$, which contradicts the assumption that $G$ is not bipartite.

Set $A_{\lambda}$ may have a non-trivial structure, e.g., may contain no small elements other than $0$ (see e.g.\ Fig.~\ref{fig:3kolory}). Fortunately, it turns out we can take advantage of its spectral properties (which we subsequently formally define).

\begin{figure}
\begin{minipage}{.33\textwidth}
\centering\raisebox{2.5cm}{$(a)\ $}\includegraphics[scale=1]{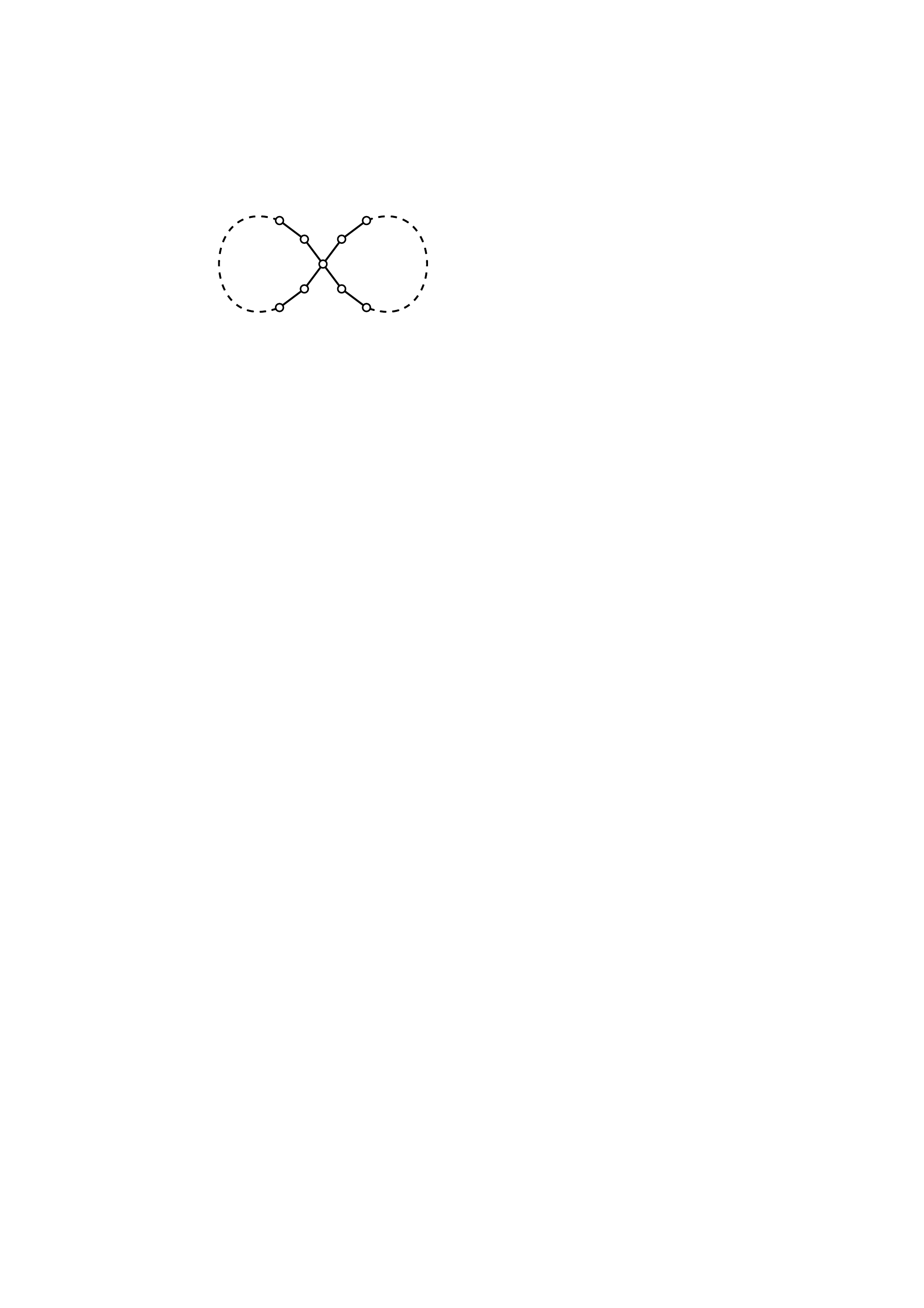}
\end{minipage}
\begin{minipage}{.33\textwidth}
\centering\raisebox{2.5cm}{$(b)\ $}\includegraphics[scale=.5]{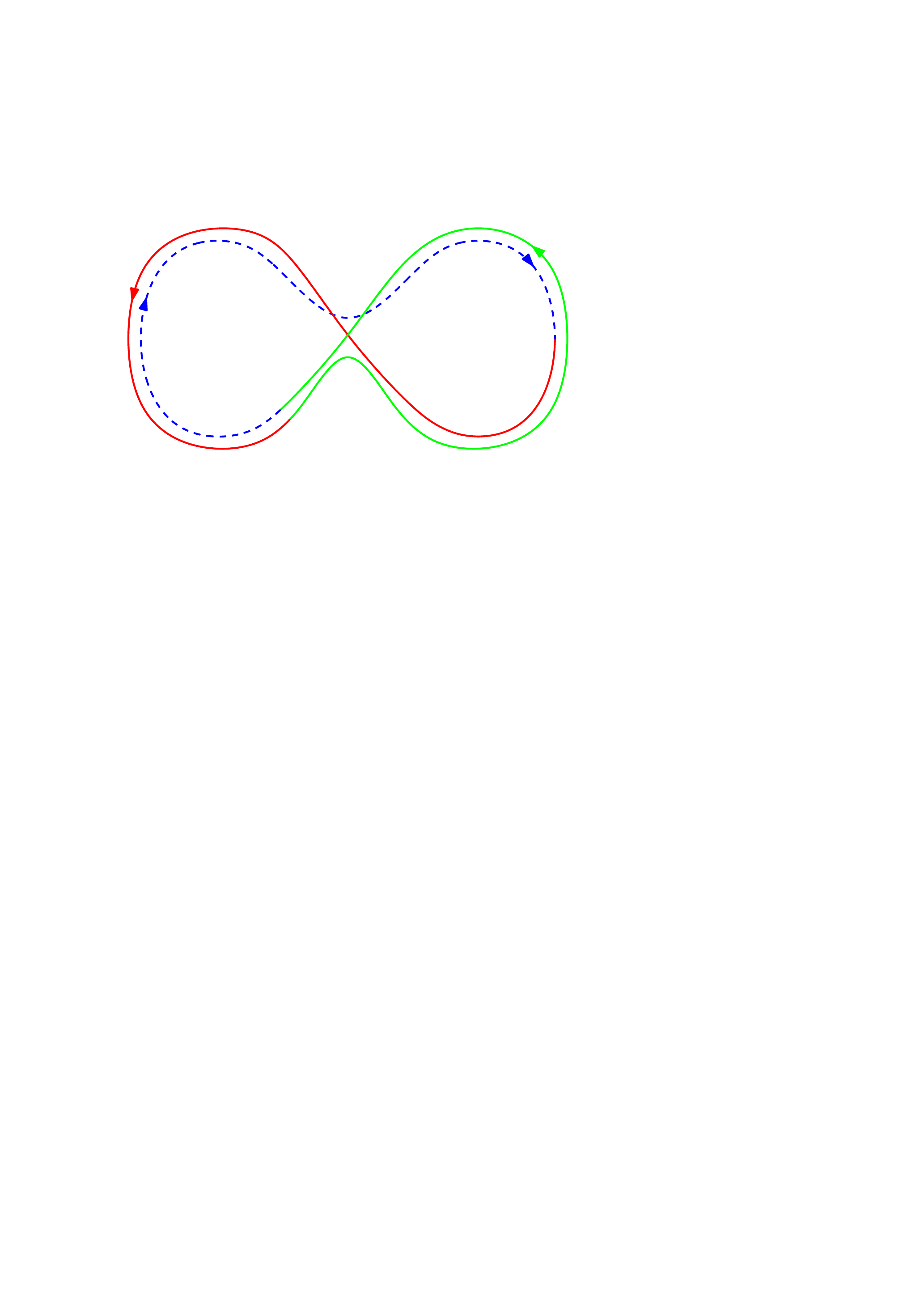}
\end{minipage}
\begin{minipage}{.33\textwidth}
\centering\raisebox{2.5cm}{$(c)\ $}\includegraphics[scale=.5]{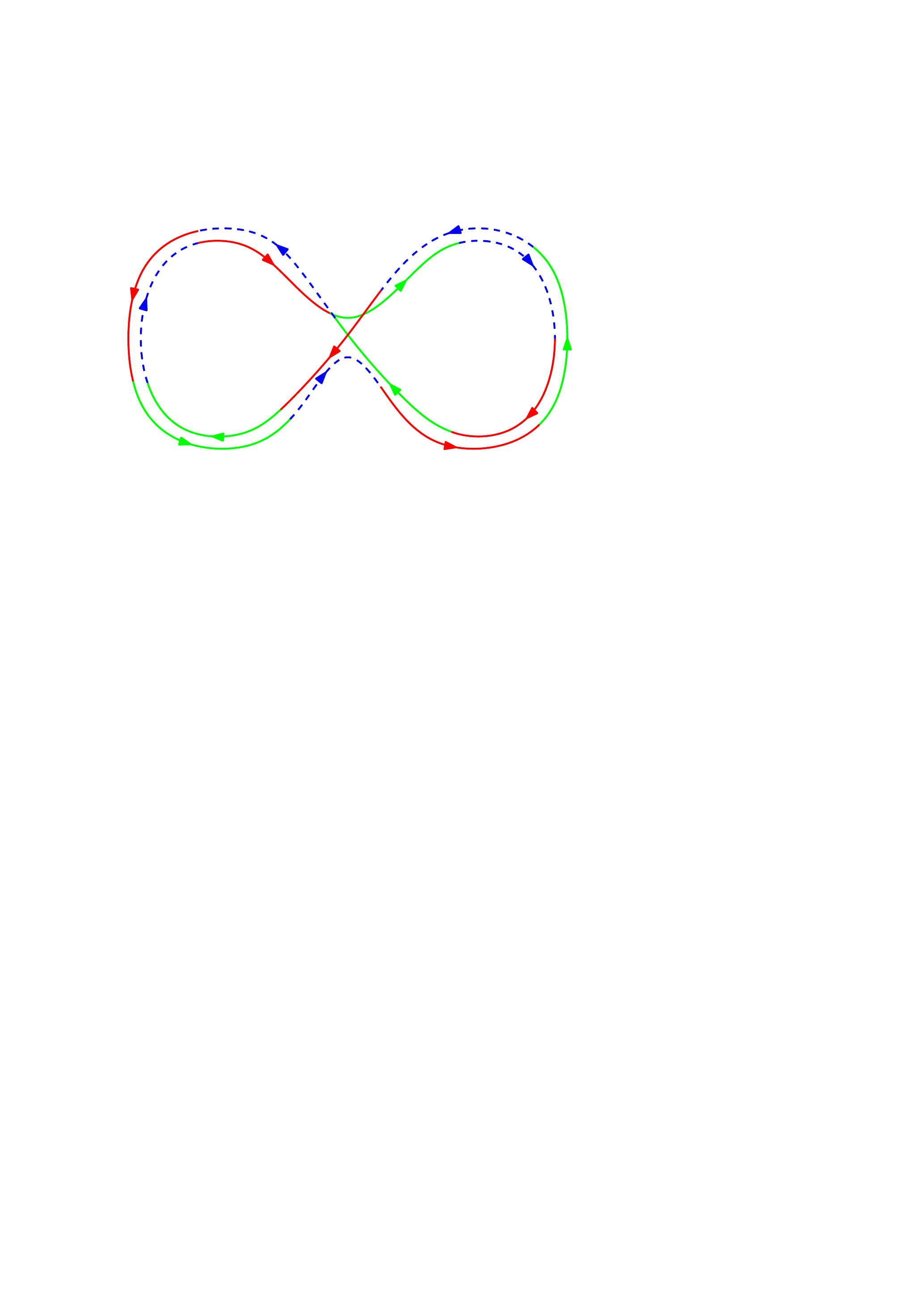}
\end{minipage}
\caption{(a) Example of a graph $G$ with a circulation with no ``small'' intersections: for any $m$-labeling, $A_{\lambda} \subseteq [m/4,3m/4]$. Examples of assignments of colors in $G$ for: (b) $m$-labeling and (c) $m/5$-labeling. In both cases, all 3 colors meet at least once.}
\label{fig:3kolory}
\end{figure}

\paragraph{Spectral properties of $A_{\lambda}$:}

We start by showing the following key property of the set $A_\lambda$ of all intersections, expressed by Lemma~\ref{lem:int_mod}. We then briefly recall the formalism of Bohr sets, and use Lemma~\ref{lem:int_mod} to characterize the structure of the Bohr set of $A_\lambda$.

\begin{lemma}
\label{lem:int_mod}
Let $\xi \in \Z_\eta$ with $1 \le \xi \le \eta/6$. Let $A_\lambda$ be the set of intersections for an arbitrary $\eta$-labeling of a circulation. Then, there exists $x \in A_\lambda$ such that $\xi \cdot x \in [\frac13 \eta, \frac23 \eta]$.
\end{lemma}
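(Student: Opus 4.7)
The plan is to argue by contradiction. Suppose there is some $\xi$ with $1 \le \xi \le \eta/6$ such that $\xi x \bmod \eta \in (-\eta/3, \eta/3)$ for every $x \in A_\lambda$ (viewing $\Z_\eta$ symmetrically around $0$); the goal is to contradict the non-bipartiteness of $G$, quantitatively strengthening the short argument recalled just above the lemma statement (which handles the special case $\xi = \eta/2$).

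The first step is to exploit the symmetry of $G$ to constrain $\mu(e) := \lambda(-e) - \lambda(e)$. For any arc $e$, the arcs $-e$ and $\varphi(e)$ share predecessor $\suc(e)$, so $\lambda(\varphi(e)) - \lambda(-e) = 1 - \mu(e)$ lies in $A_\lambda$; symmetrically, $1 + \mu(e) \in A_\lambda$. Applying the contradictory hypothesis to both elements and intersecting the two resulting constraints on $\Z_\eta$ pins $\xi \mu(e) \bmod \eta$ to the interval $(-(\eta/3 - \xi),\,\eta/3 - \xi)$ for every arc $e$, a bound that critically uses the slack $\xi \le \eta/6$.

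The second step is to pass to the Fourier side. Let $\chi_\xi(x) = e^{2\pi i \xi x/\eta}$ and, for each vertex $v$, let $E_v$ be the set of arcs outgoing from $v$. Because $\lambda$ covers $\Z_\eta$ evenly on each cycle of $\varphi$ (here we use $\eta \mid \gcd(|E_1|,\ldots,|E_g|)$), the global character sum $\sum_{e \in E} \chi_\xi(\lambda(e))$ vanishes for $\xi \not\equiv 0 \pmod{\eta}$, so setting $f_v(\xi) := \sum_{e \in E_v} \chi_\xi(\lambda(e))$ gives $\sum_v f_v(\xi) = 0$. Under the hypothesis, the $\deg(v)$ unit-modulus summands of $f_v(\xi)$ lie in an arc of angular length strictly less than $2\pi/3$, so $|f_v(\xi)| \ge \deg(v)/2$ and $f_v(\xi)$ has a well-defined direction $\omega(v) \in \R/\Z$. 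Substituting $\lambda(\varphi(e)) = \lambda(e) + 1$ into the sum over in-arcs of $v$ furthermore yields the per-vertex identity $\sum_{e \in E_v} \chi_\xi(\lambda(e))(\chi_\xi(\mu(e)) - \chi_\xi(-1)) = 0$, which, combined with the bound on $\xi \mu(e)$, tightly couples $\omega(u)$ and $\omega(v)$ across every arc $(u,v)$ of $G$.

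The final step closes the argument via non-bipartiteness: an odd closed walk $v_0, v_1, \ldots, v_{2k+1} = v_0$ forces the accumulated angular shift $\sum_i (\arg \omega(v_{i+1}) - \arg \omega(v_i))$ to be an integer multiple of $2\pi$, while the previous step bounds each per-arc increment by a quantity proportional to the slack $\eta/3 - \xi$; parity of the walk length then precludes consistent closure, contradicting $\sum_v |f_v(\xi)|\,\omega(v) = 0$. Making this last step rigorous is the main obstacle of the proof --- it requires a clean definition of $\omega(v)$ and tight control of the triangle-inequality losses along the walk, which is exactly where both $\xi \le \eta/6$ and non-bipartiteness come into play. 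A natural additive alternative, plausibly closer to the route the paper actually takes, is to leverage $0$ and $1 \pm \mu(e) \in A_\lambda$ together with the per-vertex identities $\sum_j \ell_j \equiv 0 \pmod \eta$ for the sub-cycle lengths through $v$ (which all lie in $A_\lambda$), and to use iterated sumsets of $A_\lambda$ to show directly that $\xi A_\lambda$ must meet $[\eta/3, 2\eta/3]$.
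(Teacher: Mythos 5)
Your setup is sound as far as it goes: the observation that $-e$ and $\varphi(e)$ share the predecessor $\suc(e)$, hence $1-\mu(e)\in A_\lambda$ and (applying it to $-e$) $1+\mu(e)\in A_\lambda$, is exactly the structural use of symmetry that the paper's proof also relies on, and the Fourier facts you state ($\sum_v f_v(\xi)=0$ for $\xi\not\equiv 0$, and $|f_v(\xi)|\ge \deg(v)/2$ because pairwise angular distances below $2\pi/3$ force all summands into a common arc of width below $2\pi/3$) are correct. The gap is the final step, which you yourself flag as "the main obstacle": it is not an argument but a hope, and the mechanism you propose does not work. The coupling you can actually extract across an arc $(u,v)$ is far too weak --- $\arg f_u$ is only localized to within $2\pi/3$ of the single summand $\chi_\xi(\lambda(e))$, $\arg f_v$ to within $2\pi/3$ of $\chi_\xi(\lambda(-e))$, and these two summands themselves differ by up to nearly $2\pi/3$, so adjacent directions are constrained only to within nearly $2\pi$, i.e., not at all. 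Worse, the parity mechanism is misdirected: your bound pins each per-arc angular increment $2\pi\xi\mu(e)/\eta$ near $0$, not near $\pi$, so an odd closed walk whose increments sum to a multiple of $2\pi$ is perfectly consistent and yields no contradiction. Relatedly, non-bipartiteness is not the operative hypothesis for this lemma: the paper uses it only later, to remove $\eta/2$ from $\Bohr(A_\lambda+A_\lambda,1/6)$; the proof of Lemma~\ref{lem:int_mod} itself rests on the \emph{symmetry} of $G$ alone. Your closing "additive alternative" (iterated sumsets of $A_\lambda$) is circular in spirit, since sumset growth of $A_\lambda$ is the \emph{conclusion} that Lemma~\ref{lem:int_mod} is designed to feed (via Lemmas~\ref{cor:bohr_inter}--\ref{lem:explosion}).

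For contrast, the paper's proof is not Fourier-analytic at all. It passes to a continuous model $\mathcal{E}$ of the arc set (each arc becomes a unit interval glued along $\varphi$), extends $\lambda$ to a continuous $\mathcal{L}$, colors each point with one of three colors according to which third of $[0,\eta)$ contains $\xi\cdot\mathcal{L}(x)$ (the hypothesis $\xi\le\eta/6$ guarantees each monochromatic segment has length $\ge 2$), and declares two points to "meet" if they are occurrences of the same vertex or complementary points $(e,t)$, $(-e,1-t)$ on opposing arcs. A topological intermediate-value argument then shows all three colors must meet somewhere, and either outcome (three colors at one vertex, or a color jump at a pair of opposing arcs) rounds to an intersection $x\in A_\lambda$ with $\xi\cdot x\in[\tfrac13\eta,\tfrac23\eta]$. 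If you want to salvage a Fourier route, you would need a genuinely global input replacing that topological step; as written, your proof does not contain one.
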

\begin{proof}
Let us consider a continuous version of set $E$, that is a metric space $\mathcal{E}$ constructed on the set $E \times [0,1)$, such that for every edge $e$, we set $\lim_{t \to 1} (e,t) = (\varphi(e),0)$. Observe that $\mathcal{E}$ is composed of disjoint spaces $\mathcal{E}_1,\mathcal{E}_2,\ldots,\mathcal{E}_g$, where each $\mathcal{E}_i$ is a continuous version of cycle $E_i$, isometric to $\R / (|E_i|\mult \Z)$ (i.e., to a cyclic interval of length $|E_i|$).

%

We say that elements of $\mathcal{E}$ of the form 
$((v,\cdot),0)$ are occurrences of $v$ in $\mathcal{E}$. Now we can extend the labeling $\lambda$ of $E$ to a labeling $\mathcal{L} : \mathcal{E} \rightarrow \R/(\eta\mult\Z)$, so that $\mathcal{L}((e,0)) = \lambda(e)$ and $\mathcal{L} \mid_{\mathcal{E}_i}$ is a locally distance-preserving mapping of $\mathcal{E}_i$ to $ \R / (\eta\mult \Z)$. ($\mathcal{L}$ is an ``interpolation'' of $\lambda$, increasing in a continuous fashion at a constant rate between discrete points on which $\lambda$ was defined.)

We will say that two points in $\mathcal{E}$ \emph{meet} if they are either  occurrences of the same vertex $v \in V$, or if they are of the form $(e,t)$, $(-e,1-t)$, for some $e\in E$ and $0<t<1$, where we recall that $e$ and $-e$ denote a pair of opposing arcs.

Now, for fixed $\xi \in \Z_{\eta}$ satisfying the assumption of the lemma, we assign to each point $x \in \mathcal{E}$ one of three colors  $\{0,1,2\}$ according to the following rule:
$$ \tcolor(x) = \begin{cases} 0 \text{\quad if } \xi \cdot \mathcal{L}(x) \in [0,\frac13\eta), \\ 1 \text{\quad if } \xi\cdot \mathcal{L}(x) \in [\frac13\eta,\frac23\eta), \\ 2 \text{\quad if } \xi \cdot \mathcal{L}(x) \in [\frac23\eta,\eta). \end{cases} $$
Thus, intuitively we have covered the continuous version of the arc set with closed directed continuous cycles, and each cycle is colored along its orientation with colors $0,1,2$, with colors successively applied to contiguous segments of length $\eta/(3\xi)$. See Figure~\ref{fig:3kolory} for example.
Now, by purely topological reasoning, we observe that there is a point of $\mathcal{E}$, where all 3 colors meet. To be more precise, one of the following happens:
\begin{enumerate}
\item There exist $x_0,x_1,x_2 \in \mathcal E$, all being occurrences of the same vertex from $V$, such that $\tcolor(x_i) = i$. It follows that $ \xi \cdot (\mathcal{L}(x_i) - \mathcal{L}(x_j)) \in [\frac13\eta,\frac23\eta]$ for some $i \not= j$.
\item There exist $x,y \in \mathcal{E}$ that meet and satisfy the following constraint (up to permutation of colors $0,1,2$): $\tcolor(x) = 0$, $\tcolor(x-\varepsilon) = 2$ (for arbitrarily small $\varepsilon$), and $color(y) = 1$.  It follows that $\xi \cdot
(\mathcal{L}(y) - \mathcal{L}(x)) \in [\frac13\eta,\frac23\eta)$.
\end{enumerate}
Since $\mathcal{L}$ and $\lambda$ agree on all the occurrences of the vertices (if we identify the values on the edges as taken on the starting points), and in the second case we can round pair $x,y$ to one of the endpoints of the edge they lie on, the claim follows.
\end{proof}

The above lemma, applied over different $\xi$, yields a valuable characterisation of the set of elements in $A_\lambda$.

\paragraph{Bohr set:}
We define the \emph{Bohr set} with radius $\alpha$ for a set $S \subseteq \Z_{\eta}$ as\footnote{For all undefined notation in additive combinatorics, we refer the reader to~\cite{Granville,TaoVu}.}:
$$
\Bohr (S,\alpha) = \{ \xi \in \Z_{\eta} :  \forall_{s\in S} \|s \cdot \xi \|_\eta \leq \alpha \},
$$
where $\|\cdot\|_\eta \in \R$ denotes the ``fractional part'' norm on $\Z_\eta$. (Formally, for $x \in \Z_\eta$, we put $\|x\|_\eta = \alpha$ for the unique real $\alpha \in (-1/2, 1/2]$ such that $x \equiv \alpha \eta \pmod \eta$; see also \cite{TaoVu}[beginning of Section 4.4].)

With this definition, Lemma~\ref{lem:int_mod} can be rephrased as follows: $\Bohr(A_\lambda, 1/3-\varepsilon) \cap (0,\eta/6] = \emptyset$, for arbitrarily small $\varepsilon>0$. This is not yet the spectral property we need, but $A_\lambda$ is easily transformable to a set with such a property. Since $A_\lambda = - A_\lambda$, we immediately have $\Bohr(A_\lambda, 1/3-\varepsilon) \cap [5/6 \eta, \eta) = \emptyset$ as well. We will say that $a$ is critical for $b$, if $\|a\cdot b\|_\eta \geq \alpha$. Since $x \cdot (2\cdot \xi) = (2 \cdot x) \cdot \xi$, if $x$ is a critical element for $2 \cdot \xi$, $2\cdot x$ is critical for $\xi$, thus we have $\Bohr(2 \mult A_\lambda, 1/3 - \varepsilon) \cap \big( [5/12 \eta, \eta/2) \cup (\eta/2, 7/12 \eta] \big) = \emptyset$. Additionally, we observe that for any $\xi \in (\eta/6, 5/12\eta]$, $\| 2 \cdot \xi \|_\eta \ge 1/6$, thus $\Bohr(A_\lambda, 1/6) \cap \big( (\eta/6, 5/12\eta] \cup [7/12\eta, 5/6 \eta) \big) = \emptyset$.

By $A_\lambda \cup 2\mult A_\lambda \subseteq A_\lambda+A_\lambda$, we conclude that $\Bohr(A_\lambda+A_\lambda,1/6) \subseteq \{0, \eta/2\}$. However, since we know that $A_\lambda$ contains an odd element, we can eliminate $\eta/2$ from the set. Thus finally, we obtain a spectral characterization of set $A_\lambda$.

\begin{lemma}
\label{cor:bohr_inter}
Let $A_\lambda$ be defined as in Lemma~\ref{lem:int_mod}. Then: $\Bohr(A_\lambda+A_\lambda, 1/6) = \{0\}$.
\end{lemma}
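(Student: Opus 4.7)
The plan is to chain together a short sequence of exclusions that rule out, step by step, all nonzero candidates for membership in $\Bohr(A_\lambda+A_\lambda, 1/6)$, exploiting two elementary symmetries of $A_\lambda$ (closure under negation and the scaling identity $\|x\cdot(2\xi)\|_\eta = \|(2x)\cdot\xi\|_\eta$) and finishing with a parity observation. The starting point is a contrapositive reading of Lemma~\ref{lem:int_mod}: for every $\xi\in(0,\eta/6]$ there exists $x\in A_\lambda$ with $\xi x\in[\eta/3,2\eta/3]$, equivalently $\|x\xi\|_\eta\ge 1/3-\varepsilon$. Thus $\Bohr(A_\lambda,1/3-\varepsilon)\cap (0,\eta/6]=\emptyset$, and by $A_\lambda=-A_\lambda$ this extends to $[5\eta/6,\eta)$ as well.

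Next, I would apply the scaling trick: since $\|x\cdot (2\xi)\|_\eta=\|(2x)\cdot\xi\|_\eta$, a critical element $x$ for $2\xi$ witnesses a critical element $2x\in 2\!\cdot\!A_\lambda$ for $\xi$. Feeding $2\xi$ into the previous exclusion gives $\Bohr(2\!\cdot\!A_\lambda,1/3-\varepsilon)\cap\bigl([5\eta/12,\eta/2)\cup(\eta/2,7\eta/12]\bigr)=\emptyset$. To cover the remaining annulus $(\eta/6,5\eta/12]\cup[7\eta/12,5\eta/6)$, I note that any $\xi$ in that range already satisfies $\|2\xi\|_\eta\ge 1/6$, so $2\xi$ itself is a witness of criticality for any $x\in A_\lambda$ (in particular for the $x$ produced by Lemma~\ref{lem:int_mod} applied to $\xi$ --- or simply using the element that was previously critical for $2\xi$). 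Hence $\Bohr(A_\lambda,1/6)$ cannot meet this annulus either.

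Combining both exclusions with the inclusion $A_\lambda\cup 2\!\cdot\!A_\lambda\subseteq A_\lambda+A_\lambda$ (valid because $0\in A_\lambda$) yields $\Bohr(A_\lambda+A_\lambda,1/6)\subseteq\{0,\eta/2\}$. To eliminate $\eta/2$, I invoke the parity remark made right after the definition of $A_\lambda$: non-bipartiteness of $G$ forces some odd element $a\in A_\lambda$, and then $\|a\cdot(\eta/2)\|_\eta=1/2>1/6$, so $\eta/2\notin\Bohr(A_\lambda,1/6)\supseteq\Bohr(A_\lambda+A_\lambda,1/6)$.

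The main point requiring care, rather than a genuine obstacle, is the bookkeeping of the intervals on the circle $\Z_\eta$: one must verify that the two exclusions (the ``direct'' one from $A_\lambda$ and the ``doubled'' one from $2\!\cdot\!A_\lambda$), together with their negations by symmetry, partition $\Z_\eta\setminus\{0,\eta/2\}$. Everything else is a direct consequence of Lemma~\ref{lem:int_mod}, the symmetry $A_\lambda=-A_\lambda$, and the oddness observation, all of which are already in hand.
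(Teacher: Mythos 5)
Your chain of exclusions mirrors the paper's own argument almost line by line: rephrase Lemma~\ref{lem:int_mod} as a Bohr-set exclusion on $(0,\eta/6]$, extend by the symmetry $A_\lambda=-A_\lambda$, use the scaling identity to exclude $[5\eta/12,\eta/2)\cup(\eta/2,7\eta/12]$ via $2\!\cdot\!A_\lambda$, handle the remaining annulus, and finish by ruling out $\eta/2$ with the odd element. The first, second, and last steps are fine. The problem is the third step.

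Your stated justifications for excluding the annulus $(\eta/6,5\eta/12]\cup[7\eta/12,5\eta/6)$ do not hold up. You cannot invoke Lemma~\ref{lem:int_mod} ``applied to $\xi$'' for $\xi$ in this range, since the lemma is only stated (and its proof's rounding argument only works) for $\xi\le\eta/6$. Nor is there ``an element previously critical for $2\xi$'': for $\xi$ in this annulus, $2\xi\bmod\eta$ lies in $(\eta/3,5\eta/6]$, which is mostly \emph{not} covered by the two earlier exclusions, and even where it is, the critical element produced lies in $4\!\cdot\!A_\lambda$, not in $A_\lambda+A_\lambda$. Finally, the phrase ``$2\xi$ is a witness of criticality for any $x\in A_\lambda$'' does not parse: $\|2\xi\|_\eta\ge 1/6$ says that the integer $2$ is critical for $\xi$, which is useless unless $2$ belongs to the set whose Bohr set you are computing. (The paper's own one-liner at this point is equally terse, so you may well have inherited the obscurity from there, but the reasoning as you have written it is a genuine gap.)

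What actually closes the gap is the observation that $2\in A_\lambda-A_\lambda=A_\lambda+A_\lambda$ \emph{always}. Indeed, for any arc $e$, the arcs $e$ and $\varphi(-e)$ share the tail $\pre(e)$, giving $\lambda(-e)+1-\lambda(e)\in A_\lambda$, while the arcs $-e$ and $\varphi(e)$ share the tail $\suc(e)$, giving $\lambda(-e)-1-\lambda(e)\in A_\lambda$; their difference is $2$. With $2\in A_\lambda+A_\lambda$ in hand, the arithmetic fact $\|2\xi\|_\eta>1/6$ for $\xi$ in the open annulus immediately yields $\xi\notin\Bohr(A_\lambda+A_\lambda,1/6)$, and the three exclusions together with the odd-element argument cover all of $\Z_\eta\setminus\{0\}$. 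I would recommend stating this membership $2\in A_\lambda+A_\lambda$ explicitly; without it, the third exclusion does not follow from what precedes it.
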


\paragraph{Additive combinatorics:}
Let $A \subseteq \Z_{\eta}$ be any set. We denote by $\hat A(j)$ the $j$-th ($j\in [0,\eta-1]$) Fourier coefficient in the transform of $1_A$, normalized so that $\hat A(0) = 1$ and $| \hat A(j)| \leq 1$, for all $j$. Formally, $\hat A(j) := \frac{1}{|A|}\sum_{a\in A} e(a j / \eta)$, where $e(x) \equiv \exp(2 \pi \mathbf{i} x)$).

Define the \emph{spectrum} $\Spec_\alpha(A)$ as the set of all $j \in [0,\eta-1]$ such that $|\hat A(j)|>\alpha$ (i.e., the set of indices of Fourier coefficients which are sufficiently large). The definition makes sense for $\alpha \in (0,1]$ and $0 \in \Spec_\alpha(A)$. (See~\cite{TaoVu}[Def. 4.34]). The following lemma about annihilation of the spectrum is folklore; we provide a short proof for completeness.

\begin{lemma}
\label{lem:spec1}
If $\Spec_{1 - 1/s}(A) = \{0\}$, then $\ka  A - \ka  A = \Z_{\eta}$, for $\ka \ge s \ln \eta$.
\end{lemma}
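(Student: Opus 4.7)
The statement is a standard Fourier-analytic annihilation result, so the plan is to count the number of representations of an arbitrary $b \in \Z_\eta$ as an element of $\kappa A - \kappa A$ and show this count is strictly positive under the hypothesis.

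First, I would introduce the representation function
$$r(b) = \left|\left\{(a_1, \ldots, a_\kappa, a'_1, \ldots, a'_\kappa) \in A^{2\kappa} : \sum_{i=1}^{\kappa} a_i - \sum_{i=1}^{\kappa} a'_i \equiv b \pmod{\eta}\right\}\right|,$$
and note that it suffices to prove $r(b) > 0$ for every $b$. Applying the standard orthogonality relation $\frac{1}{\eta}\sum_{j \in \Z_\eta} e(x j/\eta) = \mathbf{1}[x \equiv 0]$ to the constraint, then expanding and factoring over the independent coordinates, yields the exact Fourier expression
$$r(b) = \frac{|A|^{2\kappa}}{\eta}\sum_{j \in \Z_\eta} |\hat A(j)|^{2\kappa}\, e(-bj/\eta),$$
where I use the paper's normalization $\hat A(j) = \frac{1}{|A|}\sum_{a \in A} e(a j/\eta)$.

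Next, I would isolate the main $j=0$ term, which contributes $|A|^{2\kappa}/\eta$, and estimate the remaining terms by the triangle inequality:
$$\left| r(b) - \frac{|A|^{2\kappa}}{\eta} \right| \;\leq\; \frac{|A|^{2\kappa}}{\eta}\sum_{j \neq 0} |\hat A(j)|^{2\kappa}.$$
The hypothesis $\Spec_{1-1/s}(A) = \{0\}$ means that $|\hat A(j)| \leq 1 - 1/s$ for every $j \neq 0$, so the error sum is at most $(\eta - 1)(1-1/s)^{2\kappa}$. It therefore suffices to check that this quantity is strictly less than $1$.

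Finally, for $\kappa \geq s \ln \eta$, I would use the elementary estimate $(1-1/s)^{s} \leq e^{-1}$ to conclude
$$(\eta-1)(1 - 1/s)^{2\kappa} \;\leq\; \eta \cdot e^{-2\kappa/s} \;\leq\; \eta \cdot \eta^{-2} \;=\; \eta^{-1} < 1,$$
which gives $r(b) > \tfrac{|A|^{2\kappa}}{\eta}(1 - \eta^{-1}) > 0$ as required. I do not expect any substantive obstacle here; the only thing to double-check is the Fourier normalization (so that the non-zero-frequency contributions are genuinely controlled by the spectrum as defined in the paper) and the trivial bound $|\Z_\eta \setminus \{0\}| = \eta - 1$.
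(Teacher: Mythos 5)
Your proof is correct and follows essentially the same route as the paper's: expand the representation function $r(b)$ for $\kappa A - \kappa A$ via Fourier orthogonality, isolate the $j=0$ main term contributing $|A|^{2\kappa}/\eta$, and use $\Spec_{1-1/s}(A) = \{0\}$ together with $(1-1/s)^{2s\ln\eta} \le \eta^{-2}$ to show the error sum over $j \neq 0$ is strictly less than $1$, hence $r(b) > 0$. The normalization $\hat A(j) = \frac{1}{|A|}\sum_{a\in A} e(aj/\eta)$ and the constant $|A|^{2\kappa}/\eta$ are handled correctly, matching the paper's argument in all essentials.
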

\begin{proof}
Let $\ka= s\ln \eta$, and let $r_{\ka A-\ka A}(l)$ denote the number of ways to represent the value $l \in \Z_{\eta}$ using  $\ka  A-\ka  A$. We have (cf. e.g.~\cite{Granville}[Sec I.8]):
\begin{align}
\label{eq:spectrum_collapse}
r_{\ka A-\ka A}(l)& =c_k \sum_j |\hat A (j) |^{2\ka} e(-jl/\eta) \geq \\
&\geq
c_k \left(1 - \sum_{j\neq 0} |\hat A (j) |^{2\ka}\right),\nonumber
\end{align}
where $c_k$ is some positive normalizing constant ($c_k = |A|^k/\eta$).

Next, for a set $A$ satisfying the assumption of the Lemma, we have for $j\neq 0$: $|\hat A (j) |^{2\ka} \leq (1-1/s)^{2s \ln \eta} < 1/\eta$. Hence, $\sum_{j\neq 0} |\hat A (j) |^{2\ka} < 1$. It follows that for all $l\in \Z_{\eta}$, we have $r_{\ka A-\ka A}(l) > 0$, and so $\ka  A - \ka  A = \Z_{\eta}$.
\end{proof}

In the following, we will also use the following property linking the Bohr set and the spectrum.

\begin{proposition}[\!\!\cite{TaoVu}, variant of Prop.\ 4.40]
\label{prop:tao}
Let $A,B \subseteq \Z_{\eta}$ be such that $A \subseteq B$ and $0 \in A$. Let $K \ge 1$ and $0 < \varepsilon < 1$. If $|B-A| \le K |B|$, then
$$A - A \subseteq \Bohr(\Spec_{1-\varepsilon}(B-A), \sqrt{8\varepsilon K}).$$
\end{proposition}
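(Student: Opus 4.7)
The plan is to establish, for any $x \in A-A$ and any $\xi \in \Spec_{1-\varepsilon}(B-A)$, the pointwise estimate $|e(x\xi/\eta) - 1|^2 \le 8\varepsilon K$, from which the desired Bohr membership $\|x\xi\|_\eta \le \sqrt{8\varepsilon K}$ follows via the elementary inequality $|e^{2\pi\mathbf{i} t} - 1| \ge 4|t|$ for $|t| \le 1/2$ (with constants absorbed by the normalization of $\|\cdot\|_\eta$). Writing $x = a_1 - a_2$ with $a_1,a_2 \in A$, I have $|e(x\xi/\eta) - 1| = |e(a_1\xi/\eta) - e(a_2\xi/\eta)|$, so the task reduces to a Fourier estimate on shifts of $A$.

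The key idea is to average not over $A$ but over the larger set $B$, in order to invoke the hypothesis $|B-A| \le K|B|$. Starting from the identity
\[
|B|\bigl(e(a_1\xi/\eta) - e(a_2\xi/\eta)\bigr) \;=\; \sum_{b\in B}\bigl[e((a_1-b)\xi/\eta) - e((a_2-b)\xi/\eta)\bigr]\cdot e(b\xi/\eta),
\]
Cauchy--Schwarz yields
\[
|B|\,|e(a_1\xi/\eta) - e(a_2\xi/\eta)|^2 \;\le\; \sum_{b\in B}\bigl|e((a_1-b)\xi/\eta) - e((a_2-b)\xi/\eta)\bigr|^2.
\]
Since $A \subseteq B$, every shift $a_i - b$ belongs to $A - B = -(B-A)$, and the Fourier magnitudes of $A - B$ agree with those of $B - A$ (complex conjugation), so in particular the hypothesis gives $|\hat{(A-B)}(\xi)| > 1-\varepsilon$. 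Choosing a unimodular phase $\beta$ aligned with $\hat{(A-B)}(\xi)$ yields the spectral identity
\[
\sum_{s \in A-B}|e(s\xi/\eta) - \beta|^2 \;=\; 2|B-A|\bigl(1 - |\hat{(A-B)}(\xi)|\bigr) \;<\; 2|B-A|\,\varepsilon.
\]
Invoking $|u-v|^2 \le 2|u-\beta|^2 + 2|v-\beta|^2$ pointwise, and the fact that $a_i - B$ is a size-$|B|$ subset of $A - B$ (so a partial sum of non-negative terms is dominated by the full sum over $A-B$), bounds the Cauchy--Schwarz right-hand side by $8|B-A|\,\varepsilon$. Dividing by $|B|$ and using $|B-A| \le K|B|$ gives $|e(x\xi/\eta) - 1|^2 \le 8\varepsilon K$.

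The main subtlety lies in the choice of averaging set. Averaging over $a \in A$ would produce the weaker factor $|B-A|/|A|$, which can be much larger than $K$ when $A$ is a strict subset of $B$. Averaging over $b \in B$ succeeds precisely because $A - B$ is the negation of $B - A$ and therefore shares its Fourier spectrum verbatim, so the hypothesis on $\Spec_{1-\varepsilon}(B-A)$ transfers directly to the shifted-copy estimate; the remaining steps are routine Fourier-analytic manipulations in the style of Tao and Vu's Proposition~4.40.
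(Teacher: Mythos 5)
Your proof is correct and follows essentially the same route as the paper's: both arguments average over $b\in B$, invoke the spectral hypothesis $\xi\in\Spec_{1-\varepsilon}(B-A)$ to control deviations of $e(s\xi/\eta)$ from a fixed unimodular phase over $s\in B-A$ (or $A-B$), and then pass to the quantity $|e((a_1-a_2)\xi/\eta)-1|$, reaching the identical constant $\sqrt{8\varepsilon K}$. The only difference is organizational: you work throughout in $\ell^2$ (Cauchy--Schwarz applied directly to the telescoping identity $|B|(e(a_1\xi/\eta)-e(a_2\xi/\eta))=\sum_b[\cdot]e(b\xi/\eta)$, followed by the $\ell^2$ spectral identity $\sum_{s\in A-B}|e(s\xi/\eta)-\beta|^2=2|B-A|(1-|\widehat{A-B}(\xi)|)$ and the pointwise bound $|u-v|^2\le 2|u-\beta|^2+2|v-\beta|^2$), whereas the paper first bounds an $\ell^1$ quantity $\sum_b|1-\Real\,e(\xi(b-x)+\theta)|$, applies Cauchy--Schwarz to gain the square root, converts via $|1-e(\alpha)|=\sqrt{2}|1-\Real\,e(\alpha)|^{1/2}$, and then uses the triangle inequality; both paths are standard variants of the Tao--Vu argument.
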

\noindent
For completeness, the proof of the above Proposition is provided in the Appendix.

We are now ready to show our final Lemma, where we extend considerations from the above Proposition to iterated sums of a set.
\begin{lemma}
\label{lem:spec_bohr}
For any set $A_1 \subseteq \Z_{\eta}$, such that $0 \in A_1$, we have that $\Spec_{1-1/576}(A_1 - \ka  A_1) \subseteq \Bohr(A_1,1/6)$, for some $\ka = \bigo(\log \eta)$.
\end{lemma}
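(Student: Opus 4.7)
The plan is to apply Proposition~\ref{prop:tao} with $A := A_1$ and $B := \kappa A_1$ for an appropriately chosen $\kappa = \bigo(\log \eta)$; both of its hypotheses ($A \subseteq B$ and $0 \in A$) follow from $0 \in A_1$. I would first dualise: the inclusion $A - A \subseteq \Bohr(S, r)$ appearing in the Proposition is equivalent, by commuting the quantifiers in its defining condition ``$\|sx\|_\eta \le r$ for every $s \in S$ and every $x \in A - A$'', to $S \subseteq \Bohr(A - A, r)$. Hence the Proposition gives $\Spec_{1-\varepsilon}(B - A) \subseteq \Bohr(A_1 - A_1, \sqrt{8\varepsilon K})$ whenever $|B - A| \le K|B|$, and since $A_1 \subseteq A_1 - A_1$ (using $0 \in A_1$) this strengthens to $\Spec_{1-\varepsilon}(B - A) \subseteq \Bohr(A_1, \sqrt{8\varepsilon K})$. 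The Fourier symmetry $\Spec(S) = \Spec(-S)$ then identifies $\Spec_{1-\varepsilon}(\kappa A_1 - A_1)$ with $\Spec_{1-\varepsilon}(A_1 - \kappa A_1)$. To match the target constants $\varepsilon = 1/576$ and $r = 1/6$, I need $K = 2$, since $\sqrt{8 \cdot (1/576) \cdot 2} = 1/6$.

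Thus the problem reduces to finding $\kappa = \bigo(\log \eta)$ for which $|\kappa A_1 - A_1| \le 2\,|\kappa A_1|$. For this I would apply a multiplicative pigeonhole to the non-decreasing (by $0 \in A_1$) sequence $|jA_1|$, which is bounded above by $\eta$. Telescoping, $\prod_{j = 2}^{M} |jA_1|/|(j-1)A_1| = |MA_1|/|A_1| \le \eta$, so for $M := \lceil \log_2 \eta \rceil + 1$ at least one of the $M - 1$ consecutive ratios satisfies $K_0 := |\kappa A_1|/|(\kappa - 1)A_1| \le \eta^{1/(M-1)} \le 2$; fix this $\kappa$.

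The main technical input is then a standard Plünnecke--Ruzsa inequality in the form $|X + Y - Y| \le (|X + Y|/|X|)^2 \cdot |X|$ for finite sets $X, Y$ in an abelian group (see \cite{TaoVu}, Ch.~6). Applied with $X := (\kappa - 1)A_1$ and $Y := A_1$, and using the identifications $X + Y = \kappa A_1$ and $X + Y - Y = \kappa A_1 - A_1$, this yields $|\kappa A_1 - A_1| \le K_0^2 \cdot |(\kappa - 1)A_1| = K_0 \cdot |\kappa A_1| \le 2\,|\kappa A_1|$, which feeds Proposition~\ref{prop:tao} (with $K = 2$ and $\varepsilon = 1/576$) and closes the argument via the dualisation described above: $\Spec_{1-1/576}(A_1 - \kappa A_1) \subseteq \Bohr(A_1, 1/6)$.

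The most delicate point will be ensuring that the Plünnecke--Ruzsa step really yields the sharp exponent $K_0^2$, rather than a weaker $K_0^3$ or worse, since it is precisely this exponent that makes the stated constants line up via $\sqrt{8 \cdot (1/576) \cdot 2} = 1/6$. The sharp form can be obtained from the Plünnecke--Petridis graph argument applied at the layer $X + Y = \kappa A_1$, combined with Ruzsa's triangle inequality; any looser form of Plünnecke--Ruzsa would force a correspondingly smaller threshold in the statement of the lemma, so care is needed to verify the inequality for $X$ itself and not just for a Plünnecke--Petridis subset $X' \subseteq X$.
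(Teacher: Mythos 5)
Your overall architecture matches the paper's: find $\kappa = \bigo(\log\eta)$ by a pigeonhole so that the hypothesis $|B - A| \le 2|B|$ of Proposition~\ref{prop:tao} holds with $A = A_1$, apply the Proposition with $K=2,\ \varepsilon = 1/576$ (so $\sqrt{8\varepsilon K}=1/6$), and then dualise the resulting inclusion. Your dualisation, the Fourier symmetry $\Spec_\alpha(S) = \Spec_\alpha(-S)$, and the monotonicity step $\Bohr(A_1-A_1,1/6)\subseteq\Bohr(A_1,1/6)$ are all correct.

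The gap is in the pigeonhole and the Pl\"unnecke--Ruzsa step you use to repair it. You pigeonhole on the sequence $|jA_1|$ and find $\kappa$ with $|\kappa A_1| \le 2|(\kappa-1)A_1|$; but the Proposition, applied with $B = \kappa A_1$, requires $|\kappa A_1 - A_1| \le 2|\kappa A_1|$, which is a statement about a \emph{mixed-sign} iterated sumset that your pigeonhole sequence does not see. To bridge this you invoke ``a standard Pl\"unnecke--Ruzsa inequality in the form $|X + Y - Y| \le (|X+Y|/|X|)^2 |X|$.'' This is not a standard form. The Pl\"unnecke--Ruzsa theorem (e.g.\ \cite{TaoVu}, Cor.~6.29) gives $|nB-mB| \le K^{n+m}|A|$ under $|A+B|\le K|A|$, but does \emph{not} control $|A + nB - mB|$; Petridis' lemma gives $|X' + B + C| \le K|X' + C|$, hence $|X' + B - B| \le K|X' - B|$, but only for a (possibly much smaller) subset $X' \subseteq X$, and moreover $|X' - B|$ is itself not controlled by the hypothesis $|X + B| \le K|X|$. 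Your attempt to patch this via Ruzsa's triangle inequality introduces terms such as $|2X + Y|$ or $|(2\kappa-1)A_1|$ which the pigeonhole does not bound. You yourself flag this as the ``most delicate point'' but the proposed fix (Petridis at the layer $X + Y$ plus Ruzsa triangle) does not close the gap, and the inequality you need is, to my knowledge, not a theorem for general $X, Y$.

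The paper sidesteps this entirely by choosing a different sequence for the pigeonhole. Define $A_{i+1} := A_i - A_1$, so $A_{i+1} = A_1 - iA_1$; since $0 \in A_1$, the $A_i$ are nested and bounded by $\eta$, so some $i \le \log_2 \eta$ satisfies $|A_{i+1}| \le 2|A_i|$. Applying Proposition~\ref{prop:tao} with $A := A_1$ and $B := A_i$ then gives exactly $|B-A| = |A_1 - iA_1| \le 2|A_i| = 2|B|$, with no additive-combinatorial machinery beyond the pigeonhole. In short: pigeonhole on $|A_1 - jA_1|$ rather than on $|jA_1|$, and the Pl\"unnecke--Ruzsa step becomes unnecessary.
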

\begin{proof}
Let us denote for short $B = \Bohr(A_1,1/6)$.
In the proof of the lemma, we will use an equivalent property characterizing $A_1$, which follows from symmetry of multiplication in expanding the Bohr set definition (i.e., $\forall {x \in  \Z_{\eta}, x \neq 0}\ :\ \exists_{\xi \in A_1} \|x \cdot \xi\|_\eta > 1/6$): 
\begin{equation}\label{eq:bohr}
\forall {X\subseteq \Z_{\eta}, X \not\subseteq B}\ :\ A_1 \not \subseteq \Bohr (X, 1/6).
\end{equation}
We now define sets $A_i$, starting from $A_2$, as follows: $A_{i+1} := A_i - A_1$. We will assume that $0 \in A_1$, so $0\in A_i$ for all $i$, and $A_{i+1} \supseteq A_i \supseteq \ldots \supseteq A_1$.

We now apply Proposition~\ref{prop:tao} to set $A_{i+1}$, with $K=2$ and $\varepsilon = 1/576$. From the Proposition, one of the following two cases must hold:
\begin{itemize}
\item Either the assumption of the Proposition is not fulfilled, i.e., $|A_{i+1}| > 2 |A_i|$, or:
\item $A_1 \subseteq A_1-A_1  \subseteq \Bohr (\Spec_{1-1/576}(A_{i+1}), 1/6)$.
\end{itemize}
Since $A_i$ size is bounded by $\eta$, it suffices to iterate above construction until at most $i \le \log_2 \eta$ to have the latter case satisfied.
Observe that then, by Eq.~\eqref{eq:bohr}, we must then have $\Spec_{1- 1/576}(A_{i+1}) \subseteq B$.
\end{proof}

Observe that if we take any $A_\lambda$ being the set of intersections for an arbitrary $\eta$-labeling of a circulation, then we have already established that $\Bohr(A_{\lambda},1/6) = \{0\}$ (Lemma~\ref{cor:bohr_inter}). By Lemma~\ref{lem:spec_bohr}, we have that for some $\kappa = \bigo(\log \eta)$, $\Spec_{1-1/576}( A_{\lambda} - \kappa A_{\lambda} ) \subseteq \Bohr(A_\lambda, 1/6) = \{0\}$.
Applying Lemma~\ref{lem:spec1} we obtain the following technical claim.
\begin{lemma}
\label{lem:explosion}
Let $A_\lambda$ be the set of intersections for arbitrary $\eta$-labeling of a circulation. Then, $\ka A_{\lambda} = \Z_{\eta}$ for some $\ka = \bigo(\log^2 \eta)$.
\end{lemma}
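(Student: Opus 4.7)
The plan is to chain together the three spectral/combinatorial results already in hand: Lemma~\ref{cor:bohr_inter} (triviality of the relevant Bohr set), Lemma~\ref{lem:spec_bohr} (Bohr triviality forces spectrum collapse under iterated sumsetting), and Lemma~\ref{lem:spec1} (spectrum collapse forces the iterated sumset to cover all of $\Z_\eta$). Since each step only adds $\bigo(\log \eta)$ copies of $A_\lambda$, the total is $\bigo(\log^2 \eta)$.

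More concretely, first I would set $A_1 := A_\lambda + A_\lambda$, which satisfies $0 \in A_1$ (using $0 \in A_\lambda$) and, by Lemma~\ref{cor:bohr_inter}, satisfies $\Bohr(A_1, 1/6) = \{0\}$. Applying Lemma~\ref{lem:spec_bohr} to $A_1$ gives an integer $\kappa_1 = \bigo(\log \eta)$ such that
\[
\Spec_{1-1/576}\bigl(A_1 - \kappa_1 A_1\bigr) \subseteq \Bohr(A_1, 1/6) = \{0\}.
\]
Since $A_1 - \kappa_1 A_1$ clearly contains $0$, this inclusion is in fact an equality.

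Next, I would apply Lemma~\ref{lem:spec1} with $s = 576$ to the set $B := A_1 - \kappa_1 A_1$, obtaining $\kappa_2 B - \kappa_2 B = \Z_\eta$ for some $\kappa_2 = \bigo(\log \eta)$. Unpacking, $\kappa_2 B - \kappa_2 B$ is a signed sum of $2\kappa_2 (\kappa_1 + 1)$ copies of $A_1 = A_\lambda + A_\lambda$, hence of $4 \kappa_2 (\kappa_1 + 1) = \bigo(\log^2 \eta)$ signed copies of $A_\lambda$. Using $A_\lambda = -A_\lambda$ (which was noted immediately after the definition of $A_\lambda$), every minus sign can be absorbed, so this signed sum is exactly $\kappa A_\lambda$ for some $\kappa = \bigo(\log^2 \eta)$, giving $\kappa A_\lambda = \Z_\eta$ as desired.

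The main obstacle, if any, is a bookkeeping one rather than a conceptual one: verifying that the hypotheses of Lemma~\ref{lem:spec_bohr} apply to $A_\lambda + A_\lambda$ (rather than $A_\lambda$ itself), so that we can legitimately invoke $\Bohr(A_\lambda + A_\lambda, 1/6) = \{0\}$ from Lemma~\ref{cor:bohr_inter}, and then tracking that every combination $A_1 - \kappa_1 A_1$, $\kappa_2 B - \kappa_2 B$ collapses under the symmetry $A_\lambda = -A_\lambda$ into a pure $\kappa$-fold sumset. All other steps are direct plug-ins of the preceding lemmas.
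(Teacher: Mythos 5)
Your proof is correct and follows essentially the same chain as the paper: Lemma~\ref{cor:bohr_inter} $\Rightarrow$ Lemma~\ref{lem:spec_bohr} $\Rightarrow$ Lemma~\ref{lem:spec1}, with the final sign-absorption via $A_\lambda = -A_\lambda$. One small difference worth noting: the paper's prose invokes ``$\Bohr(A_\lambda,1/6) = \{0\}$'' when citing Lemma~\ref{cor:bohr_inter}, but that lemma in fact only establishes $\Bohr(A_\lambda + A_\lambda,1/6) = \{0\}$; your choice of setting $A_1 := A_\lambda + A_\lambda$ before invoking Lemma~\ref{lem:spec_bohr} handles this distinction cleanly, and the resulting constant (four signed copies of $A_\lambda$ per copy of $B$, absorbed by symmetry) still gives $\kappa = \bigo(\log^2 \eta)$ as required.
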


\section{Bounding the Diameter of Graph \texorpdfstring{$G_\varphi$}{G\_phi}}
\label{sec:circulations}

We now apply the sum-set structure of the set of self-intersections of a circulation, expressed by Lemma~\ref{lem:explosion}, to bound the diameter of graph $G_\varphi$. We start with a direct argument for the case of Eulerian circulations, followed by a more involved analysis for the general case.

\subsection{Eulerian Circulations (Proof of Lemma~\ref{lem:maineuler})}

Considerations of Eulerian circulations rely on the definition of a metric $\delta_t$, which expresses an upper bound on the number of contiguous fragments of the Eulerian circuit which it is sufficient to traverse in order to return to the starting arc with a time offset of $t$.

\paragraph{Intersection distance:}
Fix $\eta = m$. Let us consider a labeling $\lambda$ and the corresponding set of intersections $A_\lambda$. We will say that a function $\delta : \Z \to \Z^+$ is an \emph{intersection distance measure} if it satisfies the following constraints:
\begin{eqnarray}
\label{eq:axiom1}\delta_{i m} = 0&& \text{ for any } i \in \Z\\
\label{eq:axiom2}\delta_x \le 1&& \text{ when } x \in A_{\lambda}\\
\label{eq:axiom3}\delta_{-x} =\delta_x&& \text{ for any } x \in \Z\\
\label{eq:axiom4}\delta_{x+y} \le \delta_{x} + \delta_{y}&& \text{ for any } x,y \in \Z
\end{eqnarray}
From the definition it follows that for any $x \in \kappa A_\lambda$, $\delta_x \le \kappa$. Thus by Lemma~\ref{lem:explosion} we immediately have:
\begin{equation}
\label{eq:maxdelta}
\max_{x \in \Z} \delta_x = \max_{x \in \Z_\eta} \delta_x = \bigo(\log^2 \eta),
\end{equation}
where the first equality holds by property~\eqref{eq:axiom1}, corresponding to shifts by a complete tour of the Eulerian circuit.

Phrasing the above in terms of distances in $G_\varphi$, by considering for any node $v$ the properties of function $\delta_x$ for some arc of the form $(v, \cdot)$, we conclude that the distance in $G_\varphi$ between any two nodes of the form $(v, t)$ and $(v, t+x)$, $x, t\in \Z$ is bounded by $\bigo(\log^2 \eta)$. Since for any pair of vertices $u \neq v$ there exists a path $v \to_{x_{vu}} u$, for some $x_{vu} \in \Z$ we moreover have that pairs of vertices of the form $(v, t)$, $(u, t+x_{vu})$, $t\in \Z$, are at distance $1$ from each other in $G_\varphi$. The claim of Lemma~\ref{lem:maineuler} follows.

\subsection{General Circulations (Proof of Lemma~\ref{lem:maingeneral})}

We now extend the definition of intersection distance so that it is useful for arbitrary circulations.
\paragraph{Generalizing intersection distance:}
Let $\varphi$ be an arbitrary circulation. We will say that a function $\delta : \Z \times E \times E \to \Z^+$ is a \emph{generalized intersection distance measure} if it satisfies the following constraints:
\begin{eqnarray}
\label{eq:axiom5}\delta_0(e,e) = 0&& \quad\text{ for any } e \in E\\
\label{eq:axiom6}\delta_1(e,\varphi(e)) = 0&& \quad\text{ for any } e \in E\\
\label{eq:axiom7}\delta_0(e_1,e_2) = 1&& \quad\text{ when } \pre(e_1)=\pre(e_2)\\
\label{eq:axiom8}\delta_{-x}(e_2,e_1) = \delta_x(e_1,e_2)&& \quad\text{ for any } x \in \Z,e_1,e_2\in E\\
\label{eq:axiom9}\delta_{x+y}(e_1,e_3) \le \delta_{x}(e_1,e_2) + \delta_{y}(e_2,e_3)&& \quad\text{ for any } x,y \in \Z, e_1,e_2,e_3 \in E
\end{eqnarray}

The generalization is to be understood in the following sense: if $\varphi$ is an Euclidean circulation and $\delta$ is a generalized intersection distance measure, then for arbitrary $e \in E$, $\delta_{x} := \delta_{x}(e,e)$ is an intersection distance measure.

\paragraph{Cycle combinatorics:}
We now define a few more concepts regarding how different cycles in a (non-Eulerian) circulation relate to each other.

First, given a circulation with cycles $(E_1, \ldots, E_g)$, we introduce the notion of \emph{adjacency} of cycles, saying that two cycles are \emph{adjacent} if they share a common vertex. Naturally, we can use this definition of adjacency to define a graph $\mathcal G$ on the set of cycles of the circulation, and the corresponding \emph{distance} $\Delta(E_i,E_j)$ between any two cycles $E_i$ and $E_j$ as the number of adjacency hops needed to go between this pair of cycles.

Additionally, given labeling $\lambda$ of the circulation, we say that two cycles $E_i,E_j$ are \emph{$\lambda$-adjacent} if there are two arcs $e_i \in E_i, e_j \in E_j$, such that $\pre(e_i)=\pre(e_j)$ and $\lambda(e_i)=\lambda(e_j)$. We call the graph $\mathcal G_\lambda$ with vertex set corresponding to the set of cycles the cycles $\{E_1,\ldots,E_g\}$ and edges corresponding the $\lambda$-adjacency relation the \emph{cycle graph}. For two cycles $E_i,E_j$, we denote  by $\Delta_\lambda(E_i,E_j)$ the distance in the cycle graph $\mathcal G_\lambda$ between  respective cycles $E_i$ and $E_j$. For compactness of notation, given two arcs $e_1 \in E_1, e_2 \in E_2$, we denote  $\Delta_\lambda(e_1,e_2) = \Delta_\lambda(E_1,E_2)$. Finally, we say that labeling $\lambda$ is \emph{cycle-connected} if its cycle graph $\mathcal G_\lambda$ is connected.

It is easy to see that we can always adjust any indexing $\lambda$ by cyclical shifts, successively for each cycle, so as to incorporate an arbitrary spanning tree of $\mathcal{G}$ in the cycle graph $\mathcal G_\lambda$. In particular, by considering the BFS tree of $\mathcal{G}$, whose diameter is at most $2 \cdot \diam(\mathcal{G})$, we have the following result.

\begin{observation}
\label{obs:diam}
For any $\eta | \gcd(|E_1|,\ldots,|E_g|)$, there is an $\eta$-labeling $\lambda$ such that for any $E_i,E_j$: $\Delta_\lambda(E_i,E_j) \le 2 \cdot \diam(\mathcal{G}).$
\end{observation}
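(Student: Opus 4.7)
The plan is to exploit the ``gauge freedom'' of $\eta$-labelings: starting from any fixed $\eta$-labeling $\lambda_0$ of the circulation, for every cycle $E_i$ and every constant $c_i \in \Z_\eta$, the shifted assignment $\lambda(e) := \lambda_0(e) + c_i$ for $e \in E_i$ is still an $\eta$-labeling, because the defining constraint $\lambda(\varphi(e)) = \lambda(e) + 1$ is purely local to each cycle. Hence one has $g$ independent degrees of freedom $(c_1, \ldots, c_g) \in \Z_\eta^g$, and the task reduces to choosing these constants so that many adjacency edges of $\mathcal{G}$ become $\lambda$-adjacency edges of $\mathcal{G}_\lambda$.

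First I would fix an arbitrary root $E_r$ in $\mathcal{G}$ and take a BFS spanning tree $T$ of $\mathcal{G}$ rooted at $E_r$; its depth is bounded by $\diam(\mathcal{G})$, so its diameter is bounded by $2\diam(\mathcal{G})$. Then I would determine the shifts $c_i$ by a tree traversal in BFS order. Set $c_r = 0$. For each tree edge $(E_i, E_j) \in T$ where $E_i$ is the parent of $E_j$, pick any witnessing vertex $v$ shared by $E_i$ and $E_j$, and arcs $e_i \in E_i$, $e_j \in E_j$ with $\pre(e_i) = \pre(e_j) = v$; then choose $c_j \in \Z_\eta$ so that, under the updated labeling,
\[
\lambda_0(e_i) + c_i \;\equiv\; \lambda_0(e_j) + c_j \pmod{\eta}.
\]
Since $c_i$ has already been fixed earlier in the traversal, this uniquely determines $c_j$. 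Because $T$ is a tree, the ordering is acyclic and there is no conflict between successive assignments — this is the only point that requires care, and it is the reason we restrict to a spanning tree rather than trying to match labels across arbitrary cycles of $\mathcal{G}$.

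After this procedure, every edge of $T$ is by construction a $\lambda$-adjacency edge, so $T$ is a subgraph of $\mathcal{G}_\lambda$. Therefore for any two cycles $E_i, E_j$,
\[
\Delta_\lambda(E_i, E_j) \;\le\; \mathrm{dist}_T(E_i, E_j) \;\le\; \diam(T) \;\le\; 2 \cdot \diam(\mathcal{G}),
\]
which is the claimed bound. The only structural observation needed is the gauge freedom of $\eta$-labelings under per-cycle shifts; the diameter estimate then follows from a routine spanning-tree argument with no further combinatorial input.
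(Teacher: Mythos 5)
Your proposal is correct and matches the paper's argument, which is given only as a one-sentence sketch: adjust the labeling by per-cycle cyclic shifts so that a BFS spanning tree of $\mathcal{G}$ is incorporated into $\mathcal{G}_\lambda$, then bound distances through the root. Your write-up simply makes explicit the gauge-freedom and no-conflict points that the paper leaves implicit.
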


From now on we assume that $\lambda$ satisfies the property given by Observation~\ref{obs:diam}.

\begin{lemma}
\label{lem:delta_alpha}
For any pair of arcs $e_1,e_2$,  there exists $\alpha \in \Z$ such that $\alpha \equiv \lambda(e_2)-\lambda(e_1) \pmod \eta$ and $\delta_{\alpha}(e_1,e_2) \le \Delta_\lambda(e_1,e_2)$. Moreover, there exists $\alpha' \in \Z$, such that $\delta_{\alpha'}(e_1,e_2) \le \Delta(e_1,e_2)$.
\end{lemma}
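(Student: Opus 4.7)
The plan is to prove both statements by induction on the cycle-graph distance between the cycles containing $e_1$ and $e_2$, using $\Delta_\lambda$ for the first claim and $\Delta$ for the second. The axioms \eqref{eq:axiom5}--\eqref{eq:axiom9} are designed precisely so that one can accumulate distance by (a) sliding along a single cycle at zero cost via axiom \eqref{eq:axiom6}, and (b) hopping between cycles at a shared vertex at unit cost via axiom \eqref{eq:axiom7}, with everything combined through the triangle inequality \eqref{eq:axiom9}. The induction on $\mathcal{G}_\lambda$-distance exactly tracks the minimal number of such hops.

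For the base case $\Delta_\lambda(e_1,e_2)=0$ of the first statement, $e_1,e_2$ lie on a common cycle $E_i$, so $e_2=\varphi^t(e_1)$ for some $t\in\{0,\ldots,|E_i|-1\}$. Iterating axiom \eqref{eq:axiom6} together with \eqref{eq:axiom9} gives $\delta_t(e_1,e_2)=0$, and since $\eta\mid|E_i|$, we have $t\equiv\lambda(e_2)-\lambda(e_1)\pmod\eta$; take $\alpha=t$. For the inductive step, let $k=\Delta_\lambda(e_1,e_2)\ge 1$ and pick a shortest path $E_{i_0},E_{i_1},\ldots,E_{i_k}$ in $\mathcal G_\lambda$ with $e_1\in E_{i_0}$ and $e_2\in E_{i_k}$. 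By $\lambda$-adjacency of $E_{i_0},E_{i_1}$, there exist $f_0\in E_{i_0}$ and $f_1\in E_{i_1}$ with $\pre(f_0)=\pre(f_1)$ and $\lambda(f_0)=\lambda(f_1)$. The base case applied to $(e_1,f_0)$ yields $\delta_{\alpha_0}(e_1,f_0)=0$ for some $\alpha_0\equiv\lambda(f_0)-\lambda(e_1)\pmod\eta$; axiom \eqref{eq:axiom7} gives $\delta_0(f_0,f_1)=1$; and the inductive hypothesis applied to $(f_1,e_2)$ (whose cycle distance is $k-1$) gives $\delta_{\alpha_1}(f_1,e_2)\le k-1$ for some $\alpha_1\equiv\lambda(e_2)-\lambda(f_1)\pmod\eta$. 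Combining through \eqref{eq:axiom9} yields $\delta_{\alpha_0+\alpha_1}(e_1,e_2)\le k$, and $\alpha_0+\alpha_1\equiv\lambda(e_2)-\lambda(e_1)\pmod\eta$ thanks to the crucial equality $\lambda(f_0)=\lambda(f_1)$.

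The second statement is obtained by running exactly the same induction along a shortest path in $\mathcal{G}$ instead of $\mathcal G_\lambda$. The shared-vertex arcs $f_0,f_1$ at each hop no longer need to satisfy $\lambda(f_0)=\lambda(f_1)$, so we lose control over the residue of the assembled shift, but the weaker statement only asserts existence of \emph{some} integer $\alpha'$ realizing the bound, so the argument goes through unchanged. No invocation of axiom \eqref{eq:axiom8} is needed in either case, though it is implicitly used if one wishes to walk backwards along $\varphi$ in the base case rather than forward.

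The argument is essentially a direct verification; the only delicate point is bookkeeping for the congruence $\alpha\equiv\lambda(e_2)-\lambda(e_1)\pmod\eta$, which is preserved through the chain precisely because each $\lambda$-adjacency hop uses arcs with matching labels, making the intermediate contributions telescope modulo $\eta$. This is where the upgraded labeling from Observation~\ref{obs:diam} pays off: it ensures that $\mathcal G_\lambda$ has diameter comparable to $\mathcal G$, so the first (stronger) bound is only a constant factor worse than the naive second bound.
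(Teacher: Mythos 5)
Your proof is correct and takes essentially the same route as the paper's: base case on a common cycle via iterated application of axiom~\eqref{eq:axiom6}, followed by induction on $\Delta_\lambda$ (respectively $\Delta$) chaining through a shared-vertex pair at each hop using axioms~\eqref{eq:axiom7} and~\eqref{eq:axiom9}. You have merely spelled out the telescoping of the residue modulo $\eta$ that the paper's sketch leaves implicit; the closing remark about Observation~\ref{obs:diam} is a correct observation about downstream use but not part of this lemma's proof.
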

\begin{proof}
We start by covering the first part of claim.
If $e_1$ and $e_2$ belong to the same cycle, then there is $x \in \Z$ such that $\delta_x(e_1,e_2) = 0$. However, since  $\lambda$ limited to one particular cycle is just a cyclic numbering modulo $n$, we have that $\lambda(e_2)-\lambda(e_1) \equiv x \pmod n$.

We also observe that the claim is satisfied for two arcs that make corresponding cycles $\lambda$-adjacent: if $\lambda(e_1)=\lambda(e_2)$ and $e_1$ and $e_2$ are adjacent, then simultaneously $\lambda(e_2)-\lambda(e_1)=0$ and $\delta_0(e_1,e_2) \le 1$.

By a simple chaining argument we can extend this to any two arcs on $\lambda$-adjacent cycles, and by the induction on $\Delta_\lambda(e_1,e_2)$ to any two arcs.

By repeating the same reasoning using cycle adjacency instead of $\lambda$-adjacency, we get the second claim (without getting any constraint on $\alpha'$).
\end{proof}

\begin{lemma}
\label{lem:cycle_translation}
For any two arcs $e,e'$ and  $x \in \Z$, we have $\delta_x(e',e') \le \delta_x(e,e) + 2\Delta(e,e') \le \delta_x(e,e) +  \bigo(\diam(\mathcal{G}))$.
\end{lemma}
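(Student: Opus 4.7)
The plan is to reduce the statement to an application of the triangle-inequality axiom~\eqref{eq:axiom9} together with the second part of Lemma~\ref{lem:delta_alpha}. Intuitively, to go from $(e',t)$ to $(e',t+x)$ in time, I first hop from $e'$ to $e$ (using at most $\Delta(e,e')$ adjacencies, at some integer time-shift $\alpha$), then perform the $(e,e)$ time-translation by $x$, and finally hop back from $e$ to $e'$ with time-shift $-\alpha$. The total time-shift is $\alpha+x+(-\alpha)=x$, and the cost is bounded by $\delta_x(e,e)+2\Delta(e,e')$.

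Concretely, the first step is to invoke Lemma~\ref{lem:delta_alpha} applied to the ordered pair $(e',e)$: there exists $\alpha\in\Z$ with $\delta_\alpha(e',e)\le\Delta(e',e)=\Delta(e,e')$. Axiom~\eqref{eq:axiom8} then yields $\delta_{-\alpha}(e,e')=\delta_\alpha(e',e)\le\Delta(e,e')$. The second step is a double application of axiom~\eqref{eq:axiom9} with intermediate arcs $e',e,e,e'$ and time-shifts summing to $x$:
\[
\delta_x(e',e')\le\delta_\alpha(e',e)+\delta_x(e,e)+\delta_{-\alpha}(e,e').
\]
Substituting the two bounds from the first step gives $\delta_x(e',e')\le\delta_x(e,e)+2\Delta(e,e')$, which is the first inequality of the lemma.

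The second inequality is immediate: by definition, $\Delta(e,e')$ is the distance in the cycle-adjacency graph $\mathcal{G}$ between the cycles containing $e$ and $e'$, and is therefore at most $\diam(\mathcal{G})$. Hence $2\Delta(e,e')=\bigo(\diam(\mathcal{G}))$, completing the chain of inequalities.

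There is essentially no obstacle here beyond choosing the correct three-piece decomposition of the walk and then invoking the symmetry and triangle-inequality axioms in the right order. The only point requiring a small amount of care is the matching of signs: the hop from $e'$ to $e$ uses time-shift $+\alpha$, the return hop uses $-\alpha$, and the middle leg remains $x$, so the telescoping is consistent with~\eqref{eq:axiom9} and axiom~\eqref{eq:axiom8} supplies the bound on the reverse hop without requiring a separate application of Lemma~\ref{lem:delta_alpha}.
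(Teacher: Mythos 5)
Your proof is correct and is essentially the same argument as the paper's: pick $\alpha$ via the second part of Lemma~\ref{lem:delta_alpha}, use axiom~\eqref{eq:axiom8} to bound the reverse hop, and chain a three-leg decomposition $e'\to e\to e\to e'$ with time-shifts summing to $x$ via a double application of axiom~\eqref{eq:axiom9}. The only (immaterial) difference is the sign convention on $\alpha$ and which ordered pair Lemma~\ref{lem:delta_alpha} is applied to; the paper's displayed inequality has the same structure.
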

\begin{proof}
There is $\alpha \in \Z$ such that $\delta_\alpha(e,e') \le \Delta(e,e')$. Thus $\delta_x(e',e') \le \delta_\alpha(e',e) + \delta_{x - \alpha+\alpha}(e,e) + \delta_{-\alpha}(e,e') \le  \delta_x(e,e) + 2\Delta(e,e')$.
\end{proof}

\begin{lemma}
\label{th:sqrt}
For any $\delta$ that is a generalized intersection distance measure on a circulation with $g$ orbits, we have:
$$\max_{e \in E, x \in \Z} \delta_x(e,e) = \widetilde\bigo(g).$$
\end{lemma}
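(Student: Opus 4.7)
The plan is to combine Lemma~\ref{lem:explosion}, applied with $\eta := \gcd(|E_1|, \ldots, |E_g|)$, with a Chinese-Remainder-style adjustment that exploits the free cycle traversals granted by axioms~\eqref{eq:axiom6} and~\eqref{eq:axiom8}. I fix once an $\eta$-labeling $\lambda$ satisfying Observation~\ref{obs:diam}, so that the $\lambda$-distance between any pair of cycles is $\bigo(\diam(\mathcal{G})) = \bigo(g)$, and invoke Lemma~\ref{lem:explosion} to obtain $\kappa A_\lambda = \Z_\eta$ for some $\kappa = \bigo(\log^2\eta)$. By Lemma~\ref{lem:cycle_translation} it suffices to prove $\delta_x(e,e) = \widetilde\bigo(g)$ for a single chosen arc $e$ and every $x \in \Z$; the additive $\bigo(g)$ slack then transfers the bound to every other arc.

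For such fixed $e$ and $x$, the goal is to exhibit a closed walk at $e$ of total elapsed time exactly $x$ that uses $\widetilde\bigo(g)$ intersection jumps. I will assemble this walk from three phases. \emph{Phase (a) -- sum-set skeleton}: pick $\kappa$ intersections whose label-differences sum to $-x \pmod \eta$ (possible by Lemma~\ref{lem:explosion}), and physically realise them by travelling along the circulation between consecutive intersections, inserting at most $\bigo(g)$ bridging intersections of zero label-contribution from the spanning tree in $\mathcal{G}_\lambda$ supplied by Observation~\ref{obs:diam}. \emph{Phase (b) -- cycle tour}: spend an additional $\bigo(g)$ zero-label jumps along the spanning tree of $\mathcal{G}$ so that the walk is guaranteed to visit every cycle $E_1, \ldots, E_g$. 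Together, phases (a) and (b) contribute $\bigo(g\log^2\eta) + \bigo(g) = \widetilde\bigo(g)$ intersection jumps, and all intervening traversals along $\varphi$ or $\varphi^{-1}$ are free.

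The main obstacle is \emph{phase (c)}: converting the $\pmod\eta$ guarantee into an exact integer equality. Because every intersection preserves the running label modulo $\eta$ and the walk is closed, the elapsed time $T_0$ after phases (a) and (b) automatically satisfies $T_0 \equiv x \pmod\eta$, so $x - T_0 \in \eta\Z$. By B\'ezout applied to $\eta = \gcd(|E_1|,\ldots,|E_g|)$, we can write $x - T_0 = \sum_{i=1}^g a_i\,|E_i|$ with $a_i \in \Z$. Because phase (b) ensures the walk passes through each $E_i$, at the visit to cycle $E_i$ I insert $a_i$ additional full traversals of $E_i$, positive or negative via axiom~\eqref{eq:axiom8}; iterating axiom~\eqref{eq:axiom6} makes each such full traversal cost $0$. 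The final walk is closed at $e$, has total elapsed time exactly $x$, and uses $\widetilde\bigo(g)$ intersection jumps. This is the CRT-flavoured step alluded to in Section~\ref{sec:exposition}: the additive-combinatorial sum-set handles the projection to $\Z_\eta$, while B\'ezout combinations of the cycle lengths perform the lift back to $\Z$.
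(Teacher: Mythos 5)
Your argument is correct and proceeds along the same overall plan as the paper's proof: set $\eta = \gcd(|E_1|,\ldots,|E_g|)$, use Observation~\ref{obs:diam} and Lemma~\ref{lem:explosion} to get $\kappa A_\lambda = \Z_\eta$ with $\kappa=\bigo(\log^2\eta)$, realize each element of $A_\lambda$ as a bounded-cost self-return via Lemma~\ref{lem:delta_alpha} and Lemma~\ref{lem:cycle_translation}, and then lift the resulting residue-class guarantee to an exact integer equality via a Bézout combination of the cycle lengths $|E_i|$. Where you genuinely diverge is in how you price that lift. The paper relies on the observation that $\eta$ has only $\bigo(\log\eta)\subseteq\bigo(\log n)$ distinct prime factors, so one can write $\beta\eta = \sum_i c_i|E_i|$ with only $\bigo(\log n)$ non-zero $c_i$; each non-zero term then costs one cycle translation, $\bigo(\diam(\mathcal{G}))$, for a total of $\bigo(\diam(\mathcal{G})\log n)$. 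You instead prepay $\bigo(g)$ extra intersection jumps to route the closed walk through a spanning-tree tour of $\mathcal{G}_\lambda$ so that it visits \emph{every} cycle, after which any combination $\sum_i a_i|E_i|$ (with no restriction on support) can be realized at zero additional cost by inserting $a_i$ full traversals of $E_i$ while the walk sits on $E_i$, via iterated applications of axioms~\eqref{eq:axiom6},~\eqref{eq:axiom8}, and~\eqref{eq:axiom9}. Without the tour, a naive Bézout with $g$ non-zero coefficients would cost $\bigo(g\diam(\mathcal{G}))$, which can be $\Theta(g^2)$; so either the paper's prime-factor trimming or your tour is genuinely needed, and they are different fixes. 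Your version dispenses with the prime-factorization argument and is arguably more elementary; the paper's version gives a slightly sharper $\widetilde\bigo(\diam(\mathcal{G}))$ bound, which beats $\widetilde\bigo(g)$ when $\diam(\mathcal{G})\ll g$. Two minor points worth polishing: (i) the tour in your phase (b) should be explicitly taken in $\mathcal{G}_\lambda$ (the $\lambda$-adjacency graph from Observation~\ref{obs:diam}) so that the jumps really are zero-label, although since the tour is closed the label contributions would cancel anyway; (ii) your phase (a) accounting charges $\bigo(g)$ bridging jumps per realized intersection, giving $\bigo(\kappa g)$ rather than the paper's $\bigo(\kappa\,\diam(\mathcal{G}))$, which is looser but still $\widetilde\bigo(g)$.
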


\begin{proof}
Let $\eta = \gcd(|E_1|,\ldots,|E_g|)$, and consider an indexing $\lambda$ following Observation~\ref{obs:diam}.
We will once again be working with the set of all intersections $A_\lambda \subseteq  \Z_{\eta}$.

Fix any $x \in A_\lambda$. Thus, there are $e_1 \in E_1, e_2 \in E_2$ such that $\pre(e_1)=\pre(e_2)$ and $\lambda(e_2) - \lambda(e_1) = x$. The cycles $E_1,E_2$ are not necessarily $\lambda$-adjacent, however by Lemma~\ref{lem:delta_alpha} we have that $\delta_{\alpha}(e_1,e_2) = \bigo(\diam(\mathcal{G}))$ for some $\alpha \equiv x \pmod \eta$. Moreover, since $\pre(e_1)=\pre(e_2)$, we have $\delta_{\alpha}(e_1,e_1) \le \delta_{\alpha}(e_1,e_2) + 1$, and so $\delta_{\alpha}(e_1,e_1) = \bigo(\diam(\mathcal{G}))$. By a direct application of Lemma~\ref{lem:cycle_translation} we can extend this bound to any arc $e \in E$: $\delta_{\alpha}(e,e) \le 2 \cdot \Delta(e,e_1) + \delta_{\alpha}(e_1,e_1) = \bigo(\diam(\mathcal{G}))$.
Thus we have that for any $x \in \ka A_\lambda$, there is $x' \in \Z$ such that $\delta_{x'}(e,e) = \bigo(\ka \cdot \diam(\mathcal{G}))$, and $x' \equiv x \pmod \eta$. By Lemma~\ref{cor:bohr_inter} and Lemma~\ref{lem:explosion}, there exists $\ka = \bigo(\log^2 n)$ such that $\ka A_\lambda = \Z_{\eta}$.

Denote $\beta = (x'-x)/\eta$.
We have shown that skewed distances can be bounded for a set which, under the projection $\bmod\ \eta$, is $\Z _{\eta}$. We will now focus on controlling the term $\beta$, describing shifts by multiplies of $\eta$. Pick $e_1 \in E_1$. We have $\delta_{c \cdot |E_1|}(e_1,e_1) = 0$ for any integer $c$, thus by Lemma~\ref{lem:cycle_translation}, $\delta_{c \cdot |E_1|}(e,e) = \bigo(\diam(\mathcal{G}))$ for any arc $e$.

Recall that we put $\eta = \gcd(|E_1|,\ldots,|E_g|)$. We observe that $\eta$ has at most $\bigo(\log \eta) \subseteq \bigo(\log n)$ different primes in its factorization, thus the $\gcd$ is already obtained by taking particular subset of $\log n$ elements. It follows that for any integer $\beta$, there exists a sequence $c_1,\ldots,c_g$ such that $\sum_i c_i |E_i| = \beta \eta$ and there are $\bigo(\log n)$ non-zero coefficients in $c_i$. Thus, in particular, $\delta_{\beta \eta}(e,e) \le \sum_i \delta_{c_i |E_i|}(e,e) = \bigo(\diam(\mathcal{G}) \cdot \log n)$.

Combining shifts by multiplicities of $\eta$ with shifts with proper remainders, we finally have:
$\delta_x(e,e) \le \delta_{x'}(e,e) + \delta_{\beta \eta}(e,e) = \widetilde\bigo(\diam(\mathcal{G}))$, from which the claim follows.
\end{proof}

As in the Eulerian case, we observe that the upper bound on $\delta_x(e,e)$ also gives an upper bound on the path length in graph $G_\varphi$, up to an additive factor of $g$ which allows us to move between two spatially separated vertices. This completes the proof of Lemma~\ref{lem:maingeneral}.

\section{Results for \RR Dynamics}\label{sec:rr}

\subsection{Preliminaries: Properties of \RR and Auxiliary Definitions}

We recall that the RR dynamics was introduced in Section~\ref{sec:introrr}. We define edge load $L_t$ and cumulated load $C_t^{\Delta t}$ as follows: $L_t(e)$ is the number of tokens traversing arc $e$ at timestep $t$, and $C_t^{\Delta t}(e) = \sum \limits_{0\le i<\Delta t} L_{t+i}(e).$

The following notation will be useful when discussing the load-balancing and patrolling properties of the \RR dynamics.

\paragraph{Load balancing:}
A standard notion is to define a discrepancy (over arcs) of a token distribution at a given timestep $t$ as $\max \{\left|L_t(e_1)-L_t(e_2)\right| : e_1,e_2 \in E\}$. Similarly, a notion of a cumulative discrepancy over time $\Delta t$ defined as $\max \{\left|C_t^{\Delta t}(e_1)-C_t^{\Delta t}(e_2)\right| : e_1,e_2 \in E\}$.

\paragraph{Patrolling:}
We say that an arc $e$ has an idleness of $t$ if, starting from some point in time, $e$ is visited at least once in every interval of $t$ consecutive steps, and $t$ is the smallest such value. More precisely, we define \emph{idle arc time} at a given time step $t$ as $\idle(t,e) = \min(\Delta t : C_t^{\Delta t}(e) \ge 1)$. We define idleness for a single arc: $\idle(e) = \limsup \limits_{t \rightarrow \infty} \idle(t,e)$. Finally we define \emph{idleness} as: $\idle(G) = \max\{ \idle(e) : e \in E\}$.

\paragraph{Limit behavior:}

We recall following two results characterizing limit behavior of the \RR system, reached in polynomially many steps.
\begin{theorem}[\!\!\cite{CDGKLU15}, Theorem 11]
\label{th:disc}
\RR dynamics is recurrent, if and only if there is a bijection $\varphi : E \rightarrow E$, such that $\forall_{t \ge 0} \forall_{e \in E}\ L_t(e) = L_{t+1}(\varphi(e))$. Moreover, $\varphi$ can always be selected so that $\pre(\varphi(e)) = \suc(e)$ for any arc $e$.
\end{theorem}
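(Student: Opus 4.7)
The plan is to prove the two directions of the biconditional separately, then verify the ``moreover'' refinement that the constructed $\varphi$ may be chosen to be a circulation in the sense of Section~\ref{sec:sumset}.

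For the easier direction (existence of $\varphi$ implies recurrence), I would argue as follows. Since $\varphi$ is a bijection of the finite set $E$, it has some finite order $P$, so iterating the hypothesis yields $L_{t+P}(e) = L_t(\varphi^{-P}(e)) = L_t(e)$ for all $t \ge 0$ and all $e \in E$. Thus the load vector is periodic with period dividing $P$. This forces the per-node token count $n_t(v) = \sum_{e:\suc(e)=v} L_{t-1}(e)$ to be periodic as well, and since the pointer at $v$ advances cyclically by $n_t(v)$ each step (modulo $\deg v$), the pointer at $v$ returns to its initial value after at most $P \cdot \deg v$ steps. Taking the least common multiple over $v \in V$ of these per-vertex periods yields a global period for the full state $(\mathit{pointer}_\cdot, L(\cdot))$, proving recurrence.

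For the harder direction (recurrence implies existence of $\varphi$), let $T$ be a period with $x_{t+T} = x_t$ for all $t \ge 0$. I would first impose an arbitrary token labeling $\{1,\dots,k\}$ at time zero and propagate it deterministically forward by declaring the $i$-th token leaving vertex $v$ (ordered by label) to be the $i$-th token to be dispatched by the round-robin pointer. This gives each labeled token a well-defined arc trajectory $(e_t^{(i)})_{t\ge 0}$. From conservation of the token multiset at each vertex and the $T$-periodicity of the full state, the per-vertex ``assembly line'' is itself periodic, and after possibly rotating labels within each node at time zero, each individual token's arc trajectory is periodic with period dividing $T$, so every token traces out a closed walk in $G$.

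With token trajectories in hand I would define the candidate bijection by $\varphi(e_t^{(i)}) := e_{t+1}^{(i)}$. The principal obstacle is showing this map is well-defined on $E$, i.e., the choice of $(i,t)$ with $e_t^{(i)} = e$ does not affect $\varphi(e)$. I would establish this by a vertex-by-vertex count at $v = \suc(e)$: the multiset of pairs (incoming arc at time $t$, outgoing arc at time $t+1$) at $v$ is determined solely by $n_t(v)$ together with the current pointers and loads on in-arcs of $v$, all of which are $T$-periodic functions of the state; hence the round-robin alignment of an incoming slot to its outgoing slot at $v$ is invariant under time shift, forcing consistency of $\varphi$. Once well-definedness is granted, bijectivity follows from time reversibility ($e_{t+1}^{(i)} \mapsto e_t^{(i)}$ provides the inverse), and the identity $L_t(e) = L_{t+1}(\varphi(e))$ holds because both sides count exactly $|\{i : e_t^{(i)} = e\}|$. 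The ``moreover'' claim $\pre(\varphi(e)) = \suc(e)$ is then automatic, since $e_{t+1}^{(i)}$ is traversed by the same token $i$ that just crossed $e_t^{(i)} = e$, and so must be incident out of $\suc(e)$. I expect the subtlest step to be ensuring that the per-vertex round-robin alignment is consistently time-shift invariant, which is the combinatorial heart of the argument and the place where the full state periodicity (pointers included) really is used.
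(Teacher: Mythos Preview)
The paper does not prove this theorem; it is quoted from \cite{CDGKLU15} (their Theorem~11) and used here as a black box, so there is no in-paper proof to compare your proposal against.

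On its own merits, your proposal has a genuine gap in the harder direction. You define $\varphi(e) := e_{t+1}^{(i)}$ whenever $e_t^{(i)} = e$, and you correctly flag well-definedness as the crux. Your argument, however, only establishes consistency under \emph{time shift}: periodicity of the full state forces the slot-level routing at each vertex to repeat every $T$ steps. It does not establish consistency across \emph{distinct tokens on the same arc at the same time}. If $L_t(e) \ge 2$, two tokens $i \neq j$ with $e_t^{(i)} = e_t^{(j)} = e$ arrive together at $v = \suc(e)$; under your label-ordered dispatch rule they will in general be interleaved with tokens arriving on other in-arcs of $v$ and hence sent to \emph{different} outgoing arcs, giving $e_{t+1}^{(i)} \neq e_{t+1}^{(j)}$ and leaving $\varphi(e)$ ill-defined. (Concretely: let $v$ have in-arcs $a,b$ and out-arcs $c,d$, with $L_t(a)=2$ carrying labels $1,2$ and $L_t(b)=1$ carrying label $3$; dispatching in label order to $c,d,c$ sends the two tokens from $a$ to $c$ and to $d$ respectively.) A smaller secondary issue is that your $\varphi$ is undefined on arcs with $L_t(e)=0$ for all $t$, yet such arcs must still be mapped bijectively subject to $\pre(\varphi(e))=\suc(e)$.

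What is missing is a structural fact specific to \emph{recurrent} states of \RR---namely, that the per-vertex routing can be organized so that all tokens on a common in-arc are forwarded to a common out-arc (equivalently, that the load pattern on the arcs around each vertex undergoes a cyclic rotation from step to step). This is the substantive content of the cited result and does not follow from periodicity of the state alone; your ``multiset of pairs'' observation records the slot-to-slot bijection but does not collapse it to an arc-to-arc map.
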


From now on we will identify the subcycle decomposition with bijection $\varphi$. We will denote the Eulerian cycles induced by $\varphi$ as $E_1, E_2, \ldots, E_g$, with $g$ denoting number of cycles.


\begin{theorem}[\!\!\cite{CDGKLU15}, Theorem 26]
\label{th:disc2}
\RR dynamics reaches its recurrent state in $\bigo(m^4\diam^2+m\diam\log k)$ steps on a graph $G$ of diameter $\diam$.
\end{theorem}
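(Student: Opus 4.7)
The plan is to identify a monovariant on trajectories of the RR dynamics whose initial and final values yield the two terms of the claimed bound, and to separate the transient into a load-balancing phase and a pointer-polishing phase. I would invoke Theorem~\ref{th:disc} as the target: recurrence is equivalent to the existence of a bijection $\varphi$ with $L_{t+1}(\varphi(e)) = L_t(e)$ for every arc $e$, and the task is to bound when such a shift structure emerges.

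For the first phase, accounting for the $\bigo(m\diam\log k)$ term, the key observation is that the round-robin rule at each node distributes tokens across outgoing arcs evenly up to an additive $1$, so per-step edge flows behave essentially like discrete diffusion. I would define an $\ell_2$ potential $\Phi_1(t) = \sum_{v \in V} (L_t(v) - \bar L)^2$, where $\bar L = k/n$ is the mean vertex load, and show that $\Phi_1$ contracts geometrically with rate related to the Poincar\'e constant of $G$, which on a graph of diameter $\diam$ is $\Omega(1/(m\diam))$. Since the initial $\Phi_1$ is at most $\bigo(k^2)$, reducing it to $\bigo(1)$ takes $\bigo(m\diam\log k)$ steps, handling the additive $\log k$ dependence.

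For the second phase, accounting for the dominant $\bigo(m^4\diam^2)$ term, suppose loads are already balanced. It then remains for the sequence $(L_t(e))_{t\ge 0}$ to enter a periodic pattern consistent with some bijection $\varphi$ respecting $\pre(\varphi(e)) = \suc(e)$. I would use a pointer-based potential $\Phi_2(t)$ counting the number of arcs whose recent load history is not yet ``aligned'' with the load history of some candidate successor arc under $\varphi$. Each decrease of $\Phi_2$ should follow by a pigeonhole argument over the polynomial number of ``effective'' pointer states (modulo cyclic shifts allowed by the round-robin rule) compatible with balanced loads, with the argument bounding the time between successive decreases by $\bigo(m^3\diam^2)$ and the total number of corrections by $\bigo(m)$.

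The main obstacle will be Phase 2: the raw pointer-state space is exponential in $n$, so any naive ``wait for a repetition'' argument fails. The crucial combinatorial input is that symmetric directed graphs constrain recurrent pointer configurations sharply through their Eulerian structure (a phenomenon foreshadowed by Lemma~\ref{lem:maineuler}): on each opposing arc pair $(u,v), (v,u)$ the odometer difference is ultimately bounded by $\bigo(\diam)$, and once this stabilizes globally, $\varphi$ can be read off from a few consecutive load vectors. Carrying this out rigorously, and in particular accounting for where the two $\diam$ factors come from (one from diameter-based propagation of a local pointer correction across the graph, and one from the bounded odometer discrepancy on opposing arcs), constitutes the technical heart of the argument.
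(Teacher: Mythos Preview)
This theorem is not proved in the present paper at all: it is quoted verbatim from~\cite{CDGKLU15} (their Theorem~26) and used as a black box to justify restricting attention to recurrent states. There is therefore no ``paper's own proof'' to compare your proposal against; you are in effect sketching a reconstruction of an argument from a different paper.

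As a sketch, your two-phase decomposition is plausible in outline but has real gaps. In Phase~1, appealing to a Poincar\'e-type contraction of an $\ell_2$ potential is natural for genuine diffusion, but the \RR rule is deterministic and does not literally implement a Markov kernel; one needs an argument specific to cumulative round-robin balancing (of the kind in~\cite{RSW98,BKKMU15}) rather than a spectral gap on a transition matrix. More seriously, Phase~2 is essentially unproven: you acknowledge that the pointer-state space is exponential and that a naive recurrence argument fails, but the proposed remedy---a potential $\Phi_2$ counting ``misaligned'' arcs with $\bigo(m^3\diam^2)$ time between drops---is asserted rather than derived, and invoking Lemma~\ref{lem:maineuler} here is misplaced since that lemma concerns the recurrent-state structure of $G_\varphi$, not the transient. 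If you want to actually prove the bound, you should consult~\cite{CDGKLU15} directly; the present paper offers no independent route to it.
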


By Theorems \ref{th:disc} and \ref{th:disc2} any configuration after polynomially many steps reaches recurrent state. Thus from now on we assume that system at the timestep 0 is already in a recurrent state.

\paragraph{Averaging and number of cycles:}
Let us denote by $k_i$ the total number of tokens circulating cycle $E_i$, that is $k_i = \sum_{e \in E_i} L_t(e)$ (this value does not depend on the choice of $k$). It is easy to see that in order to maintain cumulative discrepancy over arbitrarily long time, the following property has to hold
\begin{equation}
\label{eq:averages}
\frac{k_1}{|E_1|} = \frac{k_2}{|E_2|} = \ldots = \frac{k}{m}.
\end{equation}

As a corollary of \eqref{eq:averages}, the number of cycles $g$ has to satisfy $g \mid \gcd(k,m)$, thus $g \le \gcd(k,m)$ follows. Thus, in particular, if $m$ is prime, then for any $k < m$ we have exactly one cycle. The same occurs whenever $k \perp m$ in the general case.

\subsection{\RR Dynamics with Eulerian Circulation}
Now we proceed to analyze RR dynamics in a recurrent state that has a circulation with a single cycle. Let us fix circulation $\varphi$ as in Theorem~\ref{th:disc}.

\subsubsection{The Similarity Measure}

We use the similarity of the sequence to itself under shifts, where this similarity follows from the small discrepancy of loads on arcs that share a starting (predecessor) vertex.

We define the similarity for time shift $t \in \Z$ as:
\begin{equation}
\label{eq:simplified_distance}
\delta_t = \max_{e \in E, \tau, \Delta t \ge 0} | C_{\tau}^{\Delta t}(e) - C_{\tau+t}^{\Delta t}(e) |.
\end{equation}

\begin{lemma}\label{lem:delta_idm}
$\delta$ defined as in \eqref{eq:simplified_distance} is an intersection distance measure.
\end{lemma}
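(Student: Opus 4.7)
The plan is to verify the four defining axioms \eqref{eq:axiom1}--\eqref{eq:axiom4} of an intersection distance measure in turn for $\delta_t$ from \eqref{eq:simplified_distance}; three of them are essentially bookkeeping, while the interesting step is axiom \eqref{eq:axiom2}, which is where the round-robin rule enters. Throughout, I will rely on Theorem~\ref{th:disc}, which in the Eulerian case $g=1$ gives a bijection $\varphi$ with $\varphi^m = \mathrm{id}$ and $L_t(e) = L_{t+1}(\varphi(e))$ for every $t \ge 0$ and $e \in E$.

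For axiom \eqref{eq:axiom1}, iterating $L_t(e) = L_{t+1}(\varphi(e))$ gives $L_t(e) = L_{t+m}(\varphi^m(e)) = L_{t+m}(e)$, so the load sequence at every arc is $m$-periodic. Hence $C_\tau^{\Delta t}(e) = C_{\tau + im}^{\Delta t}(e)$ for every $i \in \Z$ (extending to negative offsets via this periodicity), which yields $\delta_{im} = 0$. Axiom \eqref{eq:axiom3} is a change of variables $\tau' := \tau - t$ inside the maximum defining $\delta_{-t}$, again using $m$-periodicity to absorb any resulting negative offsets. Axiom \eqref{eq:axiom4} is the triangle inequality:
$$|C_\tau^{\Delta t}(e) - C_{\tau + x + y}^{\Delta t}(e)| \le |C_\tau^{\Delta t}(e) - C_{\tau + x}^{\Delta t}(e)| + |C_{\tau + x}^{\Delta t}(e) - C_{\tau + x + y}^{\Delta t}(e)|,$$
and taking the outer maximum over $e, \tau, \Delta t$ on the right produces $\delta_x + \delta_y$.

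The main step is axiom \eqref{eq:axiom2}. Given $x \in A_\lambda$, pick witnesses $e_1, e_2$ with $v := \pre(e_1) = \pre(e_2)$ and $\lambda(e_2) - \lambda(e_1) \equiv x \pmod m$; the convention $\lambda \circ \varphi = \lambda + 1$ then gives $\varphi^x(e_1) = e_2$. For an arbitrary arc $e$, choose $k$ with $\varphi^k(e) = e_1$. Iterating Theorem~\ref{th:disc} twice yields
$$L_t(e) = L_{t+k}(e_1) \quad \text{and} \quad L_t(e_1) = L_{t+x}(e_2),$$
so $C_\tau^{\Delta t}(e) = C_{\tau + k}^{\Delta t}(e_1) = C_{\tau + k + x}^{\Delta t}(e_2)$, while $C_{\tau + x}^{\Delta t}(e) = C_{\tau + k + x}^{\Delta t}(e_1)$. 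The quantity to bound reduces to
$$\bigl|C_{\tau + k + x}^{\Delta t}(e_2) - C_{\tau + k + x}^{\Delta t}(e_1)\bigr|,$$
a difference of cumulated loads on two out-arcs of the \emph{same} vertex $v$ over the same time window. The round-robin rule guarantees that of the $S$ tokens emitted by $v$ during any such window, each of its $d = \deg(v)$ out-arcs receives either $\lfloor S/d \rfloor$ or $\lceil S/d \rceil$ tokens (the pointer advances exactly once per emitted token, regardless of how $S$ splits across individual timesteps), so this difference is at most $1$. Hence $\delta_x \le 1$, completing the check.

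The main obstacle I expect is the bookkeeping in axiom \eqref{eq:axiom2}: correctly composing the coupled time-and-arc shift induced by Theorem~\ref{th:disc} so that comparing $C_\tau^{\Delta t}(e)$ with $C_{\tau + x}^{\Delta t}(e)$ at a \emph{single} arbitrary arc $e$ collapses to comparing cumulated loads of $e_1$ and $e_2$ at a common time, where the round-robin bound of $1$ can be applied. The remaining axioms are routine consequences of periodicity and of the triangle inequality.
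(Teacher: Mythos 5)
Your proof is correct and follows essentially the same route as the paper's: axioms \eqref{eq:axiom1}, \eqref{eq:axiom3}, \eqref{eq:axiom4} are dispatched as routine, and axiom \eqref{eq:axiom2} is obtained by using the shift relation of Theorem~\ref{th:disc} to reduce $|C_\tau^{\Delta t}(e) - C_{\tau+x}^{\Delta t}(e)|$ to a same-window comparison of two out-arcs of a common vertex, bounded by $1$ via the round-robin rule. If anything, your index bookkeeping (landing both cumulated loads at the common offset $\tau+k+x$) is more explicit than the paper's one-line display, and you also spell out the $\lfloor S/d\rfloor$/$\lceil S/d\rceil$ justification that the paper leaves implicit.
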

\begin{proof}
It is immediate to verify that $\delta$ satisfies \eqref{eq:axiom1},\eqref{eq:axiom3} and \eqref{eq:axiom4}. To show~\eqref{eq:axiom2}, take $i \in A_{\lambda}$. By the definition, there are two edges $e_1,e_2$, such that $\pre(e_1) = \pre(e_2)$ and $\lambda(e_2) = \lambda(e_1) + i$. Thus, $e_2 = \varphi^i(e_1)$. Fix $e \in E$. Since $\varphi$ is an Eulerian circulation, let $j \ge 0$ be such that $\varphi^{j}(e) = e_1$. We have, for any $\tau$ and $\Delta t$:
$$| C_{\tau}^{\Delta t}(e) - C_{\tau+i}^{\Delta t}(e) | = | C_{\tau+j}^{\Delta t}(e_1) - C_{\tau+j+i}^{\Delta t}(e_2) | \le 1$$
thus $\delta_i \le 1$ and \eqref{eq:axiom2} is satisfied.
\end{proof}

\subsubsection{Time-Averaging Property}

\begin{lemma}
\label{lem:cum_discrepancy_simplified}
For \RR dynamics on an Eulerian circulation, for any time step $t$ in the recurrent state, $T > 0 $ and $e \in E$, we have $\left|C_{t}^{T}(e) -  \frac{k}{m}\cdot T \right| \le \delta_T \le \bigo(\log^2 \eta)$.
\end{lemma}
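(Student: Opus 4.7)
My plan is to prove the two inequalities separately. The right-hand inequality $\delta_T \le \bigo(\log^2 \eta)$ comes essentially for free: by Lemma~\ref{lem:delta_idm}, $\delta$ is an intersection distance measure, and Eq.~\eqref{eq:maxdelta} combined with Lemma~\ref{lem:explosion} guarantees that $\max_x \delta_x = \bigo(\log^2 \eta)$, so $\delta_T$ is bounded uniformly in $T$. For the main inequality $|C_t^T(e) - Tk/m| \le \delta_T$, the first step is to establish $m$-periodicity of the load sequence. By Theorem~\ref{th:disc}, in the recurrent state $L_{t+1}(\varphi(e)) = L_t(e)$; since $\varphi$ is Eulerian, $\varphi^m$ is the identity, and iterating the relation gives $L_{t+m}(e) = L_t(e)$. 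A Fubini-type interchange of sums then yields $\sum_{j=0}^{m-1} C_{t+j}^T(e) = T \sum_{j=0}^{m-1} L_{t+j}(e) = Tk$, since the total number of token traversals of a fixed arc during any $m$ consecutive steps equals the total token count $k$. This identifies $Tk/m$ as the mean of the $m$ shifted values $C_{t+j}^T(e)$, so
$$m \cdot \bigl(C_t^T(e) - Tk/m\bigr) = \sum_{j=0}^{m-1}\bigl(C_t^T(e) - C_{t+j}^T(e)\bigr).$$

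The core combinatorial ingredient is the swap identity
$$C_t^T(e) - C_{t+j}^T(e) = C_t^j(e) - C_{t+T}^j(e),$$
valid for all $j, T \ge 0$. It follows by decomposing $C_t^{T+j}(e)$ in two different ways, namely $C_t^{T+j}(e) = C_t^T(e) + C_{t+T}^j(e) = C_t^j(e) + C_{t+j}^T(e)$, and rearranging. The right-hand side of the identity is a difference of two $j$-length cumulated loads separated by a time shift of exactly $T$, so the very definition of $\delta_T$ (instantiated with $\Delta t = j$ and $\tau = t$) bounds it by $\delta_T$ in absolute value. Summing this uniform bound over $j \in \{0, \ldots, m-1\}$ and dividing by $m$ produces $|C_t^T(e) - Tk/m| \le \delta_T$, completing the proof.

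The main obstacle I would anticipate is recognizing the swap identity. Reading $C_t^T(e) - C_{t+j}^T(e)$ naively as a difference of two $T$-length windows separated by time $j$ only yields the weaker bound $\delta_j$, which in turn would give $|C_t^T(e) - Tk/m| \le \max_x \delta_x$ and lose the sharper $\delta_T$ dependence on the interval length $T$. Exchanging the roles of $j$ (the shift) and $T$ (the window length) via the two-way decomposition of $C_t^{T+j}(e)$ is exactly what unlocks a uniform $\delta_T$ bound regardless of the relative magnitudes of $j$ and $T$, and it is what allows the averaging argument to conclude with the advertised sharp constant rather than the cruder polylog bound.
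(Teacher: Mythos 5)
Your proof is correct and follows essentially the same route as the paper's: the key swap identity $C_t^T(e) - C_{t+j}^T(e) = C_t^j(e) - C_{t+T}^j(e)$ (which the paper states as the rearrangement $\sum_{i=a}^{b-1} x_i - \sum_{i=c}^{d-1} x_i = \sum_{i=a}^{c-1} x_i - \sum_{i=b}^{d-1} x_i$) to bound shifted-window differences by $\delta_T$ rather than $\delta_j$, followed by comparison with the periodic average $Tk/m$. Your version merely makes explicit the averaging over one period of length $m$ that the paper leaves as "follows immediately" from periodicity and Eq.~\eqref{eq:averages}.
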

\begin{proof}
Fix time interval $T > 0$ and pick $t_1$ and $t_2$, with $t_2 > t_1$, in the recurrent state arbitrarily. We have the following bounds:
$$\left| C_{t_1}^T(e) - C_{t_2}^T(e) \right| = \left| C_{t_1}^{t_2-t_1}(e) - C_{t_1+T}^{t_2-t_1}(e) \right| \le \delta_T,$$
where we used the fact that, for an arbitrary sequence $x_i$ and integers $a \le b \le c \le d$, it holds that: $\sum_{i=a}^{b-1} x_i - \sum_{i=c}^{d-1} x_i = \sum_{i=a}^{c-1} x_i - \sum_{i=b}^{d-1} x_i$, and for the last bound, we used~\eqref{eq:simplified_distance}.

Since $t_1,t_2$ were arbitrary, bounds on the discrepancy of the sequence $(C_{t}^T)_{t \in \Z}$, follow.
Since this sequence is periodic and by equation \eqref{eq:averages} its average value is $T\cdot \frac{k}{m}$, the inequality  $\left|C_{t}^{T}(e) -  \frac{k}{m}\cdot T \right| \le \delta_T$ follows immediately from the previously shown bound. Finally, we have $\delta_T \le \bigo(\log^2 \eta)$ by~\eqref{eq:maxdelta}, since $\delta_T$ is an intersection distance measure following Lemma~\ref{lem:delta_idm}.
\end{proof}

Recalling that by definition $C_{t_0}^{T}(e) = \sum_{t=t_0}^{t_0+T} L_{t}(e)$, we obtain directly from Lemma~\ref{lem:cum_discrepancy_simplified}:
$$
\frac{1}{T}\sum_{t = t_0} ^{t_0+T} L_{t}(e) = \frac{k}{m} \pm \widetilde\bigo\left(\frac{1}{T}\right),
$$
which proves Theorem~\ref{thm:rr} for the special case of Eulerian circulations.




\subsection{\RR Dynamics with General Circulation}

We now apply the results on circulations with multiple cycles to provide bounds for arbitrary \RR dynamics.

\subsubsection{Generalized Similarity Measure}

We extended the notion of $\delta_x$ into a load similarity measure between pair of arcs, under time shift.
Thus, we define skewed distance function, for $t \in \Z$
$$\delta_t( e_1,e_2 ) = \max_{\tau, \Delta t \ge 0} | C_{\tau}^{\Delta t}(e_1) - C_{\tau+t}^{\Delta t}(e_2) |.$$
It is easy to verify that $\delta$ defined as above satisfies \eqref{eq:axiom5},\eqref{eq:axiom6},\eqref{eq:axiom7},\eqref{eq:axiom8}, and \eqref{eq:axiom9}; $\delta$  is thus a generalized intersection distance measure with respect to circulation $\varphi$. We obtain the following formulation of Lemma~\ref{th:sqrt} for the considered function $\delta_t$.

\begin{lemma}
\label{lem:disc_bound_g}
$$\max_{e \in E, t \in \Z} \delta_t(e,e) = \widetilde\bigo(g) \leq \widetilde\bigo(\gcd(k,m)).$$
\end{lemma}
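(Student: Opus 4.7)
The plan is to verify that the load-based function $\delta_t(e_1,e_2)$ defined just above the lemma is a generalized intersection distance measure with respect to the circulation $\varphi$ supplied by Theorem~\ref{th:disc}, and then simply quote Lemma~\ref{th:sqrt}, combined with the bound $g \le \gcd(k,m)$ that was already recorded immediately after equation~\eqref{eq:averages}. Once the five axioms \eqref{eq:axiom5}--\eqref{eq:axiom9} are checked, the analytic heavy lifting has been done in Section~\ref{sec:sumset} and in the proof of Lemma~\ref{th:sqrt}, so nothing additional is needed here.

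Four of the five axioms are essentially bookkeeping. Axiom~\eqref{eq:axiom5} is immediate from the definition of $\delta$, since both arguments of the absolute value coincide. Axiom~\eqref{eq:axiom8} follows from the substitution $\tau' := \tau - x$ inside the maximization, which exchanges the roles of $e_1$ and $e_2$ while negating the time shift. Axiom~\eqref{eq:axiom9} is the triangle inequality applied inside the absolute value, by inserting the intermediate quantity $C_{\tau+x}^{\Delta t}(e_2)$ between $C_\tau^{\Delta t}(e_1)$ and $C_{\tau+x+y}^{\Delta t}(e_3)$ and bounding each piece by the corresponding $\delta$. Axiom~\eqref{eq:axiom6} is exactly the content of Theorem~\ref{th:disc}: the identity $L_t(e) = L_{t+1}(\varphi(e))$ gives $C_\tau^{\Delta t}(e) = C_{\tau+1}^{\Delta t}(\varphi(e))$ term-by-term, so $\delta_1(e,\varphi(e)) = 0$.

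The only axiom that genuinely uses the round-robin mechanism is \eqref{eq:axiom7}: when $\pre(e_1) = \pre(e_2) = v$, one must show $|C_\tau^{\Delta t}(e_1) - C_\tau^{\Delta t}(e_2)| \le 1$. This is where I expect the main (minor) obstacle to lie, since one needs to reason about the pointer at $v$ over an entire time window rather than in a single step. The key observation is that the pointer state persists across rounds, and within the window $[\tau, \tau+\Delta t)$ it advances through the cyclic ordering of outgoing arcs by exactly one position for each token emitted by $v$. Thus, if $N$ tokens leave $v$ in total during the window, each outgoing arc of $v$ receives either $\lfloor N/\deg(v) \rfloor$ or $\lceil N/\deg(v) \rceil$ of them, and any two cumulative loads at $v$ can differ by at most one. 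With all five axioms in place, Lemma~\ref{th:sqrt} delivers $\max_{e \in E, t \in \Z} \delta_t(e,e) = \widetilde\bigo(g)$, and the inequality $g \le \gcd(k,m)$ recorded after~\eqref{eq:averages} closes the claim.
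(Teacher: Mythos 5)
Your proposal is correct and follows exactly the paper's route: verify that the load-based $\delta_t(e_1,e_2)$ satisfies axioms~\eqref{eq:axiom5}--\eqref{eq:axiom9}, then invoke Lemma~\ref{th:sqrt} and the bound $g \le \gcd(k,m)$ noted after equation~\eqref{eq:averages}. The paper dispatches the axiom-checking with ``it is easy to verify,'' so your explicit verification (in particular the cumulative-pointer argument for axiom~\eqref{eq:axiom7}, which is the only one that uses the round-robin rule) is a welcome fleshing-out of the same argument rather than a different one.
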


\subsubsection{Time-Averaging Property}

The following Lemma is an automatic generalization of Lemma~\ref{lem:cum_discrepancy_simplified}.

\begin{lemma}
\label{lem:cum_discrepancy}
For \RR dynamics on an arbitrary circulation, for any time step $t$ in the recurrent state, $T > 0 $ and $e \in E$, we have $\left|C_{t}^{T}(e) -  \frac{k}{m}\cdot T \right| \le \delta_T(e,e)$.\qed
\end{lemma}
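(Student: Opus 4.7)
The plan is to transport the proof of Lemma~\ref{lem:cum_discrepancy_simplified} verbatim, substituting the generalized similarity $\delta_T(e,e)$ wherever $\delta_T$ previously appeared. The only conceptual check to make is that every step of the Eulerian argument really only relied on the self-to-self skewed distance on a single arc, rather than on any global property of the Eulerian circulation. I would verify this by reviewing the proof of Lemma~\ref{lem:cum_discrepancy_simplified} and observing that the only place where $\delta_T$ entered was to bound $|C_{t_1}^{t_2-t_1}(e) - C_{t_1+T}^{t_2-t_1}(e)|$, which is by the new definition exactly $\le \delta_T(e,e)$.

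Concretely, I would fix an arc $e$ and two times $t_1 \le t_2$ in the recurrent regime, and apply the elementary rearrangement
\[
C_{t_1}^T(e) - C_{t_2}^T(e) = C_{t_1}^{\,t_2-t_1}(e) - C_{t_1+T}^{\,t_2-t_1}(e),
\]
which is valid for any $t_1 \le t_2$ since it is just the telescoping identity $\sum_{i=a}^{b-1} x_i - \sum_{i=c}^{d-1} x_i = \sum_{i=a}^{c-1} x_i - \sum_{i=b}^{d-1} x_i$ with $a=t_1,\, b=t_1+T,\, c=t_2,\, d=t_2+T$. Substituting $\tau := t_1$ and $\Delta t := t_2-t_1$ into the definition of $\delta_T(e,e)$ immediately yields $|C_{t_1}^T(e) - C_{t_2}^T(e)| \le \delta_T(e,e)$. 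Thus the periodic sequence $(C_t^T(e))_{t \in \Z}$ has oscillation bounded by $\delta_T(e,e)$.

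It then remains to identify the mean value of this sequence with $T \cdot k/m$. For this I would use equation~\eqref{eq:averages}: if $e$ lies on cycle $E_i$, the per-step average of $L_t(e)$ over one period equals $k_i/|E_i| = k/m$, so averaging $C_t^T(e) = \sum_{i=0}^{T-1} L_{t+i}(e)$ over one period and swapping the two sums gives $T \cdot k/m$. Combining this with the oscillation bound of the previous paragraph proves $\left|C_t^T(e) - (k/m)T\right| \le \delta_T(e,e)$.

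The only step that could genuinely go wrong is a mismatch between the definition of $\delta_t(e_1,e_2)$ and the quantity arising after the rearrangement, so the main (and in fact only) obstacle is a definitional one: making sure the $\tau \ge 0$ and $\Delta t \ge 0$ constraints in the definition of $\delta$ are respected by the choice $\tau = t_1$, $\Delta t = t_2 - t_1$ — which is automatic under $t_1 \le t_2$ in the recurrent regime. Everything else is a direct copy of the Eulerian argument, which is why this lemma is indeed an automatic generalization of Lemma~\ref{lem:cum_discrepancy_simplified}.
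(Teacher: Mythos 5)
Your proof is correct and matches the paper's intent exactly: the paper states that Lemma~\ref{lem:cum_discrepancy} is ``an automatic generalization'' of Lemma~\ref{lem:cum_discrepancy_simplified} and gives no further proof, and what you write out is precisely that transposition, with $\delta_T(e,e)$ replacing $\delta_T$ in the telescoping step and equation~\eqref{eq:averages} supplying the mean $T\cdot k/m$ exactly as before. No gaps.
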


Recalling once again that by definition $C_{t_0}^{T}(e) = \sum_{t=t_0}^{t_0+T} L_{t}(e)$, Lemmas~\ref{lem:disc_bound_g} and~\ref{lem:cum_discrepancy} directly give:
$$
\frac{1}{T}\sum_{t = t_0} ^{t_0+T} L_{t}(e) = \frac{k}{m} \pm \widetilde\bigo\left(\frac{gcd(k, m)}{T}\right).
$$
which completes the proof of Theorem~\ref{thm:rr} in the general case.

\subsubsection{Other Bounds on the Generalized Similarity Measure}

Going beyond the bound of Theorem~\ref{thm:rr}, we provide a number of corollaries and auxiliary results, upper-bounding the generalized similarity measure $\delta_T(e,e)$ with respect to other graph parameters. We will then use them in conjunction with Lemma~\ref{lem:cum_discrepancy} to bound the cumulated load of the RR dynamics.

Now first provide a $\widetilde\bigo(\sqrt{n})$ bound that uses thresholding: we make use of the fact that in the considered circulation, there are either small cycles (so any cycle must have a short one in its proximity), or there are only large cycles, so there can only be a small number of them. We use the notation developed in Section~\ref{sec:circulations}, denoting distance between cycles as $\Delta(E_i,E_j)$.
\begin{lemma}
\label{th:thresholding1}
For any $e$ and integer $x$, $\delta_x(e,e) = \widetilde\bigo(\sqrt{n})$.
\end{lemma}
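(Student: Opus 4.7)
The plan is to prove the bound via a case analysis based on thresholding the arc-sizes of cycles, exactly in the spirit of the hint accompanying the statement. Set $\tau = 2m/\sqrt n$ and call a cycle $E_i$ of the circulation \emph{large} if $|E_i| \ge \tau$, and \emph{small} otherwise.

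\textbf{Case A (every cycle is large):} since the cycles partition the arcs, $\sum_i |E_i|=m$, so $g \le m/\tau = \sqrt n / 2$. Lemma~\ref{th:sqrt} then directly yields $\delta_x(e,e) = \widetilde\bigo(g) = \widetilde\bigo(\sqrt n)$ uniformly in $e$ and $x$, and we are done.

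\textbf{Case B (some small cycle exists):} I would first establish a \emph{proximity claim}: every cycle $E$ of the circulation lies within $\mathcal G$-distance at most $\sqrt n/2$ of some small cycle. Indeed, if $E$ were farther than $\sqrt n/2$ from every small cycle, the BFS ball of radius $\sqrt n/2$ around $E$ in $\mathcal G$ would contain a path of $\ge \sqrt n/2 + 1$ distinct large cycles (one per BFS layer along a witnessing shortest path); summing their arc-counts would exceed $(\sqrt n/2 + 1)\tau > m$, contradicting $\sum_i |E_i| = m$. (If $E$'s eccentricity in $\mathcal G$ is itself $<\sqrt n/2$ and its entire $\mathcal G$-component avoids small cycles, the same arc-count bound caps the number of cycles in the component by $\sqrt n/2$, and Case A applied inside the component finishes.) Combined with Lemma~\ref{lem:cycle_translation}, for any arc $e$ we can pick an arc $e_s$ on a nearby small cycle, obtaining $\delta_x(e,e) \le \delta_x(e_s,e_s) + O(\sqrt n)$, so it remains to prove $\delta_x(e_s, e_s) = \widetilde\bigo(\sqrt n)$ uniformly over $e_s$ lying on a small cycle.

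The last step is the technical heart of the lemma and is where I expect the real work to lie. The natural idea is to re-run the scheme of Lemma~\ref{th:sqrt} inside the subgraph $\mathcal G_S$ induced by small cycles (possibly iterating the thresholding argument with progressively smaller thresholds), or to re-execute the additive-combinatorics machinery of Section~\ref{sec:sumset} restricted to the intersection set generated by arcs on small cycles alone. Either way, the key resource is the global arc-budget: small cycles together use at most $m$ arcs and hence cannot form an arbitrarily spread-out structure once one factors in the intersections they must support. The main obstacle is precisely that, unlike in Case A, we have no direct count of small cycles, so the $\widetilde\bigo(\sqrt n)$ bound inside $\mathcal G_S$ must be extracted by a careful combinatorial/spectral argument rather than by invoking Lemma~\ref{th:sqrt} as a black box.
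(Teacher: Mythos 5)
Your Case A is fine and in fact somewhat cleaner than the paper's corresponding case, since thresholding on arc-count directly caps $g \le \sqrt n / 2$ and lets you invoke Lemma~\ref{th:sqrt} as a black box (the paper's version thresholds on vertex-count, so $g$ is not directly bounded and it has to fall back on the $\widetilde\bigo(\diam(\mathcal G))$ bound that is implicit in the proof of Lemma~\ref{th:sqrt} rather than in its statement). The proximity claim in Case B is also essentially correct: cycles of a circulation partition the arc set, so any $\sqrt n/2 + 1$ distinct cycles with $\ge \tau = 2m/\sqrt n$ arcs each would overshoot $m$, and since $\mathcal G$ is connected whenever $G$ is, the parenthetical component-splitting caveat is moot.

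The genuine gap is exactly where you flag it: bounding $\delta_x(e_s,e_s)$ uniformly in $x$ for an arc $e_s$ on a ``small'' cycle. This cannot be extracted from an arc-count threshold. Having $|E_2| \le 2m/\sqrt n$ only tells you $\delta_T(e_2,e_2) = 0$ when $T$ is a multiple of $|E_2|$, and walking along $E_2$ bounds $\diam(E_2)$ by $|E_2|/2 \le m/\sqrt n$, which is $\bigo(\sqrt n)$ only when $m = \bigo(n)$ --- so neither route closes Case B. The paper's move is to threshold on the number of \emph{distinct vertices} of a cycle instead. A cycle $E_2$ visiting at most $\sqrt n$ distinct vertices satisfies $\diam(E_2) \le \sqrt n$ (shortcut any repeated vertex on the closed walk, so any two cycle-vertices are joined by a path of $< \sqrt n$ edges inside $E_2$), and then Lemma~\ref{lem:diam} gives $\delta_T(e_2,e_2) = \bigo(\diam(E_2)) = \bigo(\sqrt n)$ \emph{for every integer} $T$, which is the uniform bound you need. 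The proximity step then also works out with vertex-thresholding: along a shortest path of length $x$ in $\mathcal G$ the cycles at alternating positions are pairwise non-adjacent, hence vertex-disjoint, so one of them has $\le 2n/x$ distinct vertices, and this both gives you a nearby small cycle when one exists and bounds $\diam(\mathcal G) = \bigo(\sqrt n)$ when none does. No re-run of the additive-combinatorics machinery inside a sub-circulation is needed; Lemma~\ref{lem:diam} is the missing tool.
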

\begin{proof}
We will call a cycle that contains at most $\sqrt{n}$ distinct vertices a \emph{small cycle}.

In the adjacency graph of cycles $\mathcal{G}$, any shortest path of length $x$ between two cycles contains a cycle using at most $2n/x$ distinct vertices.

First, let us assume that there is at least one small cycle in the graph.
Thus, for any cycle $E_1$ there is a small cycle $E_2$ such that $\Delta(E_1,E_2) \le \frac{2n}{\sqrt{n}} = \bigo(\sqrt{n})$. Fix $e_1 \in E_1, e_2 \in E_2$. Since $\diam(E_2) \le \sqrt{n}$, by Lemma~\ref{lem:diam}, $\delta_T(e_2,e_2) = \bigo( \sqrt{n} )$ for any integer $T$. By Lemma~\ref{lem:cycle_translation} we then have $\delta_T(e_1,e_1) \le \delta_T(e_2,e_2)+2\Delta(E_1,E_2) = \bigo(\sqrt{n})$.

If there are no small cycles in the graph, then every cycle is of size at least $\sqrt{n}$ distinct vertices. So $\diam(\mathcal{G}) = \bigo(\sqrt{n})$ by the previous observation, and the claim follows from Lemma~\ref{th:sqrt}.
\end{proof}

Next, the following bound trivially holds for small values of $t$, subsequently allowing us to perform analysis of the RR for values of $k$ close to $m$.

\begin{lemma}
\label{lem:small_time}
For any $e$, $\delta_4(e,e) \leq 3$.
\end{lemma}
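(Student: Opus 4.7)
The plan is to prove $\delta_4(e,e) \le 3$ by exhibiting an explicit path of six elementary jumps in the ``arc-time space'' $E \times \mathbb{Z}$ that connects $(e,0)$ to $(e,4)$. Three of the jumps are \emph{circulation jumps} of the form $(f,t) \to (\varphi^{j}(f), t+j)$, each contributing $0$ to the bound by (an iteration of) axiom~\eqref{eq:axiom6}, and three are \emph{RR jumps} of the form $(f,t) \to (f',t)$ with $\pre(f) = \pre(f')$, each contributing at most $1$ by axiom~\eqref{eq:axiom7}. Aggregating these contributions via the triangle inequality~\eqref{eq:axiom9} will give $\delta_4(e,e) \le 0+1+0+1+0+1 = 3$.

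The single graph-theoretic identity on which everything hinges, and which fails in the non-symmetric setting, is that for every arc $f$ of a symmetric graph one has $\pre(-f) = \suc(f) = \pre(\varphi(f))$. Thus, right after any circulation step $f \to \varphi(f)$, an RR jump from $\varphi(f)$ to the reverse arc $-f$ is available. I will use this ``flip-after-circulate'' gadget three times to consume the $+4$ time-shift.

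Concretely, I will decompose along the chain
\[
(e,0) \to (\varphi^{2}(e),2) \to (-\varphi(e),2) \to (\varphi(-\varphi(e)),3) \to (-e,3) \to (\varphi(-e),4) \to (e,4),
\]
in which the time coordinate is advanced by $+2,+1,+1$ through the three circulation jumps while the three RR jumps leave it fixed. Each RR jump is valid because the arcs involved share a predecessor: for the first, $\pre(\varphi^{2}(e)) = \suc(\varphi(e)) = \pre(-\varphi(e))$; for the second, $\pre(\varphi(-\varphi(e))) = \suc(-\varphi(e)) = \pre(\varphi(e)) = \suc(e) = \pre(-e)$; for the third, $\pre(\varphi(-e)) = \suc(-e) = \pre(e)$. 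Each identity is a one-line unwinding of $\pre(\varphi(x)) = \suc(x)$ and $\pre(-x) = \suc(x)$.

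The creative part of the proof is finding this chain; once it is written down, verification is purely mechanical. The main obstacle is recognising that two uses of the flip gadget are required: a single flip inserted into a circulation of total length $4$ will generically land on an arc whose predecessor is unrelated to $\pre(e)$, so that the final RR jump cannot close at $(e,4)$. Inserting a second flip in the middle redirects the trajectory back to a neighbour of $\pre(e)$ just in time for the last RR jump to complete the loop.
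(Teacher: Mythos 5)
Your proof is correct and uses the same underlying mechanism as the paper's: the paper first derives $\delta_1(e,-e)\le 1$ from $\pre(-e)=\pre(\varphi(e))$ and then chains this ``flip after circulate'' step to get $\delta_2(\varphi^2(e),e)\le 3$ and hence $\delta_4(e,e)\le 3$. Your explicit six-step chain $(e,0)\to(\varphi^{2}(e),2)\to(-\varphi(e),2)\to(\varphi(-\varphi(e)),3)\to(-e,3)\to(\varphi(-e),4)\to(e,4)$ is exactly the unrolled version of the paper's compressed argument, with each predecessor identity verified correctly.
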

\begin{proof}
Observe that since arcs $\pre(-e)=\pre(\varphi(e))$, $\delta_1(e,-e) \le 1$. Thus, $\delta_2(e,e) \le 2$. Similarly, one can reason that $\delta_2(\varphi^2(e),e) \le 3$, giving $\delta_4(e,e) \le 3$.
\end{proof}

For our next bound, let us denote by $d(e_1,e_2)$ the distance (length of shortest path) between starting vertices of $e_1$ and $e_2$, measured in $G$. We also denote $\diam(G) = \max\limits_{e_1,e_2 \in E} d(e_1,e_2)$ and $\diam(E_i) = \max\limits_{e_1,e_2 \in E_i} d(e_1,e_2)$. (For sanity of notation, we define $\diam(E_i)$ so that $\diam(E_i) \ge 1$ even if $E_i$ is a loop.) The following Lemma relates cycle diameter with our analysis.
\begin{lemma}
\label{lem:diam}
For any integer $j$ and edge $e \in E_i$, we have: $\delta_{j}(e,e) = \bigo(\diam(E_i)) \leq \bigo(\diam(G))$.
\end{lemma}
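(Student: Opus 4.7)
The plan is to reduce the problem to bounding a ``spatial-only'' distance $\delta_0(e',e)$ between two arcs of $E_i$, then bound this latter quantity via a shortest path in $G$ combined with the round-trip pattern of Lemma~\ref{lem:small_time} and tours of $E_i$.

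First, iterating axiom~\eqref{eq:axiom6} along the orbit of $e$ gives $\delta_k(e,\varphi^k(e)) = 0$ for every $k\in\Z$; since $\varphi^{|E_i|}(e)=e$, this yields $\delta_{|E_i|}(e,e) = 0$. By axiom~\eqref{eq:axiom9}, $\delta_j(e,e) = \delta_{j\bmod|E_i|}(e,e)$, so I may assume $j\in[0,|E_i|)$. Setting $e':=\varphi^j(e)\in E_i$, the inequality $\delta_j(e,e)\le\delta_j(e,e')+\delta_0(e',e)=\delta_0(e',e)$ reduces the task to showing $\delta_0(e_1,e_2) = \bigo(\diam(E_i))$ for arbitrary $e_1,e_2\in E_i$.

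For the spatial bound, let $v_0 = \pre(e_1),v_1,\ldots,v_d = \pre(e_2)$ be a shortest path in $G$, of length $d\le\diam(E_i)$. Walking along this path by composing the elementary moves ``switch to $(v_k,v_{k+1})$ at $v_k$'' (cost $1$, axiom~\eqref{eq:axiom7}) and ``apply $\varphi$'' (cost $0$, time $+1$, axiom~\eqref{eq:axiom6}), followed by a final switch to $e_2$, yields $\delta_d(e_1,e_2)\le d+1$. Separately, iterating the round-trip identity $\delta_2(f,f)\le 2$ (proved as in Lemma~\ref{lem:small_time} by going forward along $f$ then back along $-f$) gives $\delta_{2k}(f,f)\le 2k$ for every $f$ and $k\ge 0$. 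Combining via axioms~\eqref{eq:axiom9} and~\eqref{eq:axiom8}, I obtain $\delta_0(e_1,e_2)\le\delta_d(e_1,e_2)+\delta_{-d}(e_2,e_2)\le 2d+1 = \bigo(\diam(E_i))$ whenever $d$ is even.

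The main obstacle is handling the parity mismatch when $d$ is odd, or equivalently when $|E_i|$ is even but the residual time shift required is odd: the round-trip and $E_i$-tour moves only generate even shifts, so reaching odd targets requires additional structure. Here the non-bipartite assumption on $G$ is essential. It guarantees an odd closed walk in $G$, and using the cycle-switch structure of $\varphi$ (in the spirit of Lemma~\ref{lem:cycle_translation}) such an odd walk can be spliced into the construction to correct the parity. A careful accounting, tracking how much the splice-in traverses beyond $V_i$, is needed to verify that the added cost remains $\bigo(\diam(E_i))$ rather than blowing up to $\bigo(\diam(G))$; this is the technical heart of the argument and relies on the particular placement of the cycle $E_i$ within the non-bipartite graph $G$.
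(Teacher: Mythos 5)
Your reduction $\delta_j(e,e)\le\delta_j(e,\varphi^j(e))+\delta_0(\varphi^j(e),e)$ via axioms~\eqref{eq:axiom6} and~\eqref{eq:axiom9}, and the ``switch then advance'' construction giving $\delta_d(e_1,e_2)\le d+1$ along a shortest $G$-path, are both correct and are the same ingredients the paper uses. But the step that undoes the accumulated time shift $d$ using only the round-trip bound $\delta_{2k}(f,f)\le 2k$ can produce only even corrections, so your argument genuinely breaks when $d$ is odd. You flag this yourself, but the sketch for closing the gap (splicing in an odd closed walk supplied by non-bipartiteness) stays at the level of an intention: non-bipartiteness gives \emph{some} odd closed walk in $G$, but that walk may be far from $E_i$, and you give no reason why the splice can be paid for with $\bigo(\diam(E_i))$ (or even $\bigo(\diam(G))$) switches. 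As written, the proposal does not establish the lemma.

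For comparison, the paper does not commit to $r=j$. It instead treats $a_r := r + d(e,\varphi^r(e))$, for which $\delta_{a_r}(e,e)\le\diam(E_i)+1$, notes that $a_r$ is non-decreasing with increments in $\{0,1,2\}$, and concludes that the sequence hits either $j$ or $j-1$; only a single $\pm1$ correction then remains. This is a more economical use of the same ideas, since it halves the parity debt before it has to be paid: your version fixes $e'=\varphi^j(e)$ and then must erase a shift of size up to $\diam(E_i)$, whereas the paper's sweep already lands within $1$ of the target. That said, your hesitation about the odd case is not unfounded: the paper's own resolution invokes $\delta_1(e,e)\le 2$ ``since in every vertex there is a loop,'' while Section~2.1 only \emph{allows} loops rather than requiring them, so the remaining $\pm1$ step is delicate in both arguments. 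If you revise, you should first adopt the paper's $r$-sweep (to reduce the parity deficit to $\pm1$), and then address that deficit directly --- either by an explicit argument that in the present setting $\delta_1(e,e)=\bigo(\diam(E_i))$, or by observing that a sub-walk of the cycle $E_i$ realizing an odd closed walk in $G$ can be located and spliced at cost $\bigo(\diam(E_i))$.
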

\begin{proof}
Assume $j>0$ and fix an integer $r > 0$. Observe that $d(e,\varphi^r(e)) \le \diam(E_i)$, thus we have $\delta_{r+d(e,\varphi^r(e))}(e,e) \le \diam(E_i)+1$. Since for consecutive values of $r$ the value of $r+d(e,\varphi^r(e))$ increases by $0,1$ or $2$, either the claim follows, or $\delta_{j-1}(e,e) \le \diam(E_i)+1$. However, since in every vertex there is a loop, we have $\delta_1(e,e) \le 2$ and the claim follows.
\end{proof}

The following bound on the generalized similarity measure proves useful when considering sufficiently long ranges of time.

\begin{lemma}
\label{cor:sqrtk}
For any integer $B \le m$ and for any $e$, we have: $\min_{x \in [B,2B]} \delta_x(e,e) = \widetilde\bigo(m/B)$.
\end{lemma}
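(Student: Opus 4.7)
The strategy is to exhibit a concrete shift of the form $x = c\,|E_i|$ for a suitably chosen cycle $E_i$ of the circulation and integer $c \ge 1$, verify $x \in [B,2B]$, and bound $\delta_x(e,e)$ via Lemma~\ref{lem:cycle_translation}, exploiting the fact that $c$ laps around $E_i$ incur zero cost under $\delta$. The choice of $E_i$ splits into two complementary regimes according to whether a ``short'' cycle (of length at most $2B$) exists in the circulation.

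\paragraph{Regime 1 (every cycle is long).} If $|E_j|>2B$ for every cycle index $j$, then the identity $\sum_j |E_j| = m$ forces $g \le m/(2B)$. Lemma~\ref{th:sqrt} then delivers $\delta_x(e,e) = \widetilde\bigo(g) = \widetilde\bigo(m/B)$ for every $x \in \Z$, so the minimum over $x\in[B,2B]$ is trivially $\widetilde\bigo(m/B)$.

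\paragraph{Regime 2 (a short cycle exists).} Otherwise, pick a cycle $E_i$ with $|E_i|\le 2B$ minimizing the cycle-graph distance $\Delta(e, E_i)$. Each cycle lying strictly closer to $E_e$ than $E_i$ on a shortest path in $\mathcal{G}$ must, by the minimality of $E_i$, have length exceeding $2B$; as these cycles are arc-disjoint and their arcs collectively fit within the budget $m$, counting gives $\Delta(e,E_i)\cdot 2B < m$, i.e.\ $\Delta(e,E_i) < m/(2B)$. Now set $c = \lceil B/|E_i|\rceil$; a direct check confirms $c\,|E_i| \in [B,2B]$ whether $|E_i|\le B$ (where $c|E_i|\in[B,B+|E_i|]\subseteq[B,2B]$) or $B<|E_i|\le 2B$ (where $c=1$). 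Fix $e_i \in E_i$: telescoping \eqref{eq:axiom6} along the cycle via the triangle inequality \eqref{eq:axiom9} yields $\delta_{c|E_i|}(e_i,e_i)=0$, and Lemma~\ref{lem:cycle_translation} then gives
$$\delta_{c|E_i|}(e,e) \;\le\; \delta_{c|E_i|}(e_i,e_i) + 2\Delta(e,e_i) \;\le\; 2\cdot\frac{m}{2B} \;=\; \frac{m}{B}.$$

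\paragraph{Main obstacle.} The only delicate step is the arc-counting argument in Regime 2, which converts the minimality of $E_i$ into the quantitative bound $\Delta(e,E_i) < m/(2B)$. The threshold $2B$ is calibrated precisely to align the two regimes: it ensures the $\widetilde\bigo(g)$ bound from Lemma~\ref{th:sqrt} available in Regime 1 matches the $\bigo(m/B)$ bound from the counting argument in Regime 2, so that either the global structure (few cycles total) or the local structure (a nearby short cycle) can be exploited, with no configuration escaping both.
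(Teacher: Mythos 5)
Your proof is correct and follows essentially the same two-regime strategy as the paper: if every cycle is long then $g$ is small and Lemma~\ref{th:sqrt} applies directly, otherwise locate a nearby short cycle $E_i$, choose a multiple of $|E_i|$ lying in $[B,2B]$ so that $\delta$ vanishes on $(e_i,e_i)$, and translate back to $e$ via Lemma~\ref{lem:cycle_translation}. The only cosmetic differences are your threshold $2B$ in place of the paper's $B$, and the fact that you spell out the arc-disjointness counting argument bounding $\Delta(e,E_i)$, which the paper dispatches by pointing to the analogous reasoning in Lemma~\ref{th:thresholding1}.
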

\begin{proof}
First, let us assume that there is at least one cycle with number of arcs not more than $B$. Thus, by a reasoning similar to one from Theorem~\ref{th:thresholding1}, for any cycle $E_1$ and arc $e_1 \in E_1$ there exists a cycle $E_2$ having at most $B$ arcs and that $\Delta(E_1,E_2) \le \frac{m}{B}$. For any $e_2 \in E_2$ there is integer $B \le x \le 2B$ such that $\delta_{x}(e_2,e_2) = 0$  (it suffices for $x$ to be a multiplicity of $|E_2|$). Thus for any $e_1 \in E_1$, we have $\delta_{x}(e_1,e_1) \le 2m/B$. Since $e_1$ was picked arbitrarily, the claim follows.

If every cycle has at least $B$ arcs, then the number of cycles satisfies $g \le m/B$. Thus the claim follows from Lemma~\ref{th:sqrt}.
\end{proof}

Finally, we obtain the following $\bigo(1)$ bound on the generalized similarity measure of the \RR dynamics in trees. We remark that this bound holds even though trees are bipartite.

\begin{lemma}
\label{th:tree}
If $G$ is a tree, then for any $e$ and positive integer $B$, we have: $\min_{x \in [B,2B]} \delta_{x}(e,e) = \bigo(1)$.
\end{lemma}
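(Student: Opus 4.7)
The plan is to reduce $\delta_x(e,e) = \bigo(1)$ to the existence of a short return time on the $\varphi$-cycle $E_i$ containing $e$, and then use the tree structure to locate such a return time inside $[B, 2B]$.

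For the reduction, iterating $L_{t+1}(\varphi(e)) = L_t(e)$ from Theorem~\ref{th:disc} yields $C_{\tau+x}^{\Delta t}(e) = C_\tau^{\Delta t}(\varphi^{-x}(e))$; combined with the round-robin inequality $|C_\tau^{\Delta t}(e_1) - C_\tau^{\Delta t}(e_2)| \le 1$ for $\pre(e_1) = \pre(e_2)$, this gives $\delta_x(e,e) \le 1$ whenever $\pre(\varphi^{-x}(e)) = \pre(e)$. Moreover, since cycle-graph distance is $0$ for arcs on the same cycle, applying Lemma~\ref{lem:cycle_translation} in both directions gives $\delta_x(e,e) = \delta_x(e',e')$ for any $e'$ on the same cycle as $e$. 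Hence it suffices to exhibit, for some arc $e^*$ on $E_i$ and some $x^* \in [B, 2B]$, a return of $E_i$ to $\pre(e^*)$ after exactly $x^*$ steps.

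If $|E_i| \le B$, a multiple of $|E_i|$ inside $[B, 2B]$ yields $\delta_{x^*}(e,e) = 0$. Otherwise, exploit the tree: for each edge $\{u,v\}$ used by $E_i$, both arcs $(u,v)$ and $(v,u)$ lie in the same $\varphi$-cycle (the standard cut argument — any walk that crosses an edge of a tree must cross back), and the sub-walk from $(u,v)$ to $(v,u)$ stays entirely in the subtree on the $v$-side of $\{u,v\}$. Consequently, consecutive visits of $E_i$ to a vertex $u$ are separated by gaps from the set $\{2 n_w : w \sim u\}$, where $n_w$ denotes the number of vertices on the $w$-side of $\{u,w\}$ that are visited by $E_i$, and return times at $u$ are exactly the cyclic prefix sums of these gaps in the order dictated by $\varphi$.

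To produce $x^* \in [B, 2B]$, consider two cases. If some directed edge $(u,v)$ on $E_i$ has $n_v \in [\lceil B/2 \rceil, B]$, take $x^* = 2 n_v \in [B, 2B]$: this is itself a return time at $u$. Otherwise every visited subtree has size either $< B/2$ or $> B$; in this case we locate a vertex $u^*$ all of whose adjacent visited subtrees have size $< B/2$, by starting from an arbitrary visited vertex and, as long as the current vertex has a (necessarily unique) heavy adjacent subtree of size $>B$, moving into that subtree. Each step strictly shrinks the remaining heavy subtree, so the descent terminates at a valid $u^*$; at $u^*$ every gap is $< B$ while their cyclic sum equals $|E_i| \ge B$, so some cyclic prefix sum falls in $[B, 2B)$, yielding the desired return time. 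The main obstacle is verifying cleanly that the descent in the second case always reaches such a $u^*$: this relies on the structural fact that, under the assumption that no visited subtree has size in $[\lceil B/2 \rceil, B]$, each descent step leaves behind only subtrees small enough for the termination criterion to eventually be met; once established, the hidden constant in the $\bigo(1)$ bound is in fact just $1$.
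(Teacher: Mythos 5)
Your reduction (via iterating $L_{t+1}(\varphi(e)) = L_t(e)$, the round-robin inequality, and Lemma~\ref{lem:cycle_translation} with $\Delta(e,e') = 0$ on a common cycle) is correct, and your Cases A and B match the paper. The gap is in Case C. The claim that the heavy adjacent subtree at the current vertex is ``necessarily unique'' is false, and more importantly the target of the descent --- a vertex $u^*$ \emph{all} of whose adjacent visited subtrees have size $< B/2$ --- need not exist under the Case~C dichotomy. Consider two adjacent vertices $a,b$ with $p \ge B$ pendant leaves at $a$ and $q \ge B$ pendant leaves at $b$: every subtree has size $1$, $p+1$, $q+1$, or $p+q+1$, so none lies in $[\lceil B/2\rceil, B]$, yet $a$, $b$, and every leaf each see at least one subtree of size $>B$. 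Your unrestricted descent would oscillate between $a$ and $b$, and even with a committed direction it halts at a vertex with one large ``back'' gap, so the hypothesis ``every gap is $< B$'' fails.

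The repair is exactly what the paper does: root the tree, define $T_v$ only as the part ``below'' $v$ (plus the parent edge), and descend from the root through heavy children. This makes the descent monotone (always into a strictly smaller committed region) and terminates at a $v$ with $|T_v| > B$ but all children's $|T_{w_i}| < B/2$. You then take cyclic prefix sums of \emph{only} the children's gaps $2|T_{w_1}|, 2|T_{w_2}|, \ldots$ (in the order $\varphi$ visits them): they increase by $< B$ each step and reach $2(|T_v|-1) \ge 2B$, so some prefix sum lands in $[B,2B)$. The point you miss is that the single large ``up'' gap is harmless because it is never needed --- the children's gaps already provide enough small increments. With that correction, the argument is sound and yields the constant $1$ as you claim.
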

\begin{proof}

Cycles in $E$ are DFS traversals of some subtree of $G$. We can safely analyse each such subtree separately. Take such subtree $T$, and fix $B \le 2|T|$. (The size here is measured in number of edges.) 
We pick arbitrary root of $T$, and we consider subtrees denoted $T_v$, that is rooted in some vertex $v$ containing everything ``below'' $v$, \emph{plus} the edge connecting $v$ to its parent. One of two cases occurs: (1) there is $v$ such that $2|T_v| \in [B,2B]$, or (2) there is $v$ with children $w_1,w_2,\ldots$ such that $2|T_v| > 2B$ and $2|T_{w_1}|, 2|T_{w_2}|, \ldots < B$.

In the first case, the cycle that takes part in $T_v$ is of size $2|T_v|$, thus we have for some $x \in [B,2B]$ that $\delta_{x}(e',e') = 1$, for some $e'$ in this cycle, thus also for every edge in this cycle.

In the second case, assume that $w_1,w_2,\ldots$ are in the order in which the cycle visits those vertices. Since $|T_{w_1}| + |T_{w_2}| + \ldots = |T_v| - 1 \ge B$, and each of $|T_{w_i}|$ is smaller than $B/2$, we know there prefix sum such that $2|T_{w_1}| + 2|T_{w_2}| + \ldots + 2|T_{w_j}| \in [B,2B]$. Once again we argue that there is a self-intersection of appropriate length, and the claim follows.
\end{proof}

\subsection{Cumulated Load Discrepancy of RR}

Lemma~\ref{lem:cum_discrepancy} provides us with bounds on the cumulated load discrepancy of the RR dynamics in terms of the generalized similarity measure $\delta_T(e,e)$. Introducing Lemmas~\ref{lem:disc_bound_g},~\ref{lem:diam}, and~\ref{th:thresholding1}, we directly obtain the following claims.

\begin{proposition}
\label{pro:cumul}
The cumulated load discrepancy of \RR dynamics in its recurrent state can be upper-bounded as:
\begin{itemize}
\item $\widetilde\bigo(\gcd(k,m))$.
\item $\bigo(\diam(G))$.
\item $\widetilde\bigo(\sqrt{n})$.
\end{itemize}
\qed
\end{proposition}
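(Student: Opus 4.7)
\textbf{Proof plan for Proposition \ref{pro:cumul}.} The plan is to derive the three bounds as immediate corollaries of Lemma~\ref{lem:cum_discrepancy}, combined with the three bounds on $\delta_T(e,e)$ already established above. First, recall that the cumulated load discrepancy over a window of length $T$ is $\max\{|C_t^T(e_1) - C_t^T(e_2)| : e_1,e_2 \in E\}$. By Lemma~\ref{lem:cum_discrepancy}, each $C_t^T(e)$ lies within $\delta_T(e,e)$ of the common average $(k/m)\cdot T$. Applying the triangle inequality through this average therefore bounds the discrepancy by $2\max_{e\in E}\delta_T(e,e)$, and this bound holds uniformly in $t$ and $T$.

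Given this reduction, each item becomes a one-line application. For the first item, Lemma~\ref{lem:disc_bound_g} gives $\max_{e,T}\delta_T(e,e) = \widetilde\bigo(g)$, and since the number of cycles $g$ divides $\gcd(k,m)$ as noted in the discussion following~\eqref{eq:averages}, we obtain the $\widetilde\bigo(\gcd(k,m))$ bound. For the second item, Lemma~\ref{lem:diam} directly yields $\delta_T(e,e) = \bigo(\diam(G))$ uniformly in $T$ and $e$. For the third item, Lemma~\ref{th:thresholding1} gives $\delta_T(e,e) = \widetilde\bigo(\sqrt{n})$ uniformly. Plugging each of these into the discrepancy bound from the first paragraph concludes the proof, since constant factors are absorbed in $\bigo(\cdot)$ and $\widetilde\bigo(\cdot)$.

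I do not expect any real obstacle here: this proposition is a packaging of the three similarity-measure bounds proved earlier in Section~\ref{sec:rr} through the lens of Lemma~\ref{lem:cum_discrepancy}. The only subtlety worth flagging in the write-up is that the statement is an ``eventually'' claim over arbitrary intervals of length $T$, which is exactly the quantification provided by Lemma~\ref{lem:cum_discrepancy} (for all $t$ in the recurrent state and all $T>0$), so no additional averaging argument or union bound is required.
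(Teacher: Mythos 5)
Your proposal is correct and matches the paper's own argument exactly: the paper also derives Proposition~\ref{pro:cumul} by combining Lemma~\ref{lem:cum_discrepancy} (which bounds $|C_t^T(e) - (k/m)\cdot T|$ by $\delta_T(e,e)$) with the bounds on $\delta_T(e,e)$ from Lemmas~\ref{lem:disc_bound_g}, \ref{lem:diam}, and~\ref{th:thresholding1}, together with $g \le \gcd(k,m)$. Your explicit triangle-inequality step through the common average $(k/m)T$ is a sound and slightly more spelled-out version of what the paper leaves implicit.
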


\subsection{Idleness of RR}

Lemma~\ref{lem:cum_discrepancy} also provides a way of upper-bounding the idleness of the RR dynamics in terms of the generalized similarity measure $\delta_T(e,e)$. Indeed, we have the following claim.

\begin{lemma}
\label{lem:idletime_bound}
If $T > \frac{m}{k} \cdot \delta_T(e,e)$ for some particular value of $T$, then $\idle(e) \le T$.
\end{lemma}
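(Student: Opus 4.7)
The plan is to read off the idleness bound directly from the cumulated-load inequality of Lemma~\ref{lem:cum_discrepancy}. That lemma gives, for any time $t$ in the recurrent state,
\[
C_{t}^{T}(e) \;\geq\; \frac{k}{m}\cdot T \;-\; \delta_T(e,e).
\]
Under the hypothesis $T > \frac{m}{k}\,\delta_T(e,e)$, the right-hand side is strictly positive. Since $C_{t}^{T}(e)$ counts the total number of tokens that traverse $e$ during the time window $[t, t+T)$, it is a non-negative integer, so being strictly positive forces $C_{t}^{T}(e) \geq 1$.

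First I would note that this holds for every starting time $t$ in the recurrent state, so in every window of length $T$ at least one token crosses the arc $e$. By the definition of idle arc time, $\idle(t,e) = \min\{\Delta t : C_{t}^{\Delta t}(e) \geq 1\}$, this immediately yields $\idle(t,e) \leq T$ for every such $t$. Taking the $\limsup$ gives $\idle(e) \leq T$, which is the desired conclusion.

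There is no real obstacle here: the lemma is essentially a rewriting of Lemma~\ref{lem:cum_discrepancy} combined with the integrality of the token count. The only subtlety worth verifying is that Lemma~\ref{lem:cum_discrepancy} was stated for time steps $t$ already in the recurrent state, so the resulting bound on $\idle(t,e)$ holds for all sufficiently large $t$, which is exactly what is needed to bound $\limsup_{t\to\infty} \idle(t,e)$.
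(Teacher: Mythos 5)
Your proof is correct and follows exactly the paper's argument: apply Lemma~\ref{lem:cum_discrepancy} to get $C_t^T(e) \ge \frac{k}{m}T - \delta_T(e,e) > 0$, conclude $C_t^T(e) \ge 1$ for all recurrent $t$, and pass to the $\limsup$ in the definition of $\idle(e)$. You merely spell out the integrality and limsup steps that the paper leaves implicit.
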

\begin{proof}
By Lemma~\ref{lem:cum_discrepancy}, we obtain $C_t^T(e) \ge \frac{k}{m} \cdot T - \delta_T(e,e) > 0$. Since this bound holds regardless of the choice of $t$, the claim follows.
\end{proof}

The following claims are now obtained directly, by combining Lemma~\ref{lem:idletime_bound} for an appropriately chosen value of $T$ with specific bounds on $\delta_T(e,e)$, given by Lemma~\ref{lem:disc_bound_g} (putting $T=\widetilde{\bigo}(\frac{m}{k} \cdot \gcd(k,m))$), Lemma~\ref{lem:diam} (putting $T=\bigo(\frac{m}{k}\cdot \diam(G))$), Lemma~\ref{th:thresholding1} (putting $T=\widetilde{\bigo}(\frac{m}{k}\cdot\sqrt{n})$), Lemma~\ref{lem:small_time}  (putting $T=4$),~\ref{th:tree}  (putting $T=\arg\min_{x \in [c\cdot m/k, 2c \cdot m/k]} \delta_x(e,e)$ for arbitrary $e$ and $c=\bigo(1)$ being the bound from Lemma~\ref{th:tree}), and Lemma~\ref{cor:sqrtk} (putting $B = \widetilde{\bigo}(m/\sqrt{k})$).

\begin{proposition}
\label{pro:idleness}
For $k \leq m$, the \RR dynamics satisfies the following bounds on idleness:
\begin{itemize}
\item $\idle(G) = \widetilde\bigo(\frac{m}{k} \cdot \gcd(k,m) )$.
\item $\idle(G) = \bigo(\frac{m}{k} \cdot \diam(G) )$.
\item $\idle(G) = \widetilde\bigo(\frac{m}{k} \cdot \sqrt{n})$.
\item $\idle(G) \le 4$ if $k > \frac{3}{4}m$.
\item $\idle(G) = \bigo(\frac{m}{k} )$, on a tree.
\item $\idle(G) = \widetilde\bigo(\frac{m}{k}  \cdot \sqrt{k})$.
\end{itemize}
\qed
\end{proposition}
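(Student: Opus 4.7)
The plan is to read off each of the six bounds by a direct application of Lemma~\ref{lem:idletime_bound}, which reduces bounding $\idle(e)$ to finding a value of $T$ for which $T > \tfrac{m}{k}\,\delta_T(e,e)$. In each case we combine the lemma with one of the previously established upper bounds on the generalized similarity measure $\delta_T(e,e)$, and pick $T$ of the right order of magnitude to make the inequality hold. Since the bound must hold for every arc $e$, we will always pick $T$ uniformly in $e$; all the prior lemmas on $\delta_T(e,e)$ are already stated uniformly, so this is automatic.

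For the first three bullets the situation is the simplest: the bounds on $\delta_T(e,e)$ from Lemma~\ref{lem:disc_bound_g}, Lemma~\ref{lem:diam}, and Lemma~\ref{th:thresholding1} are \emph{independent} of $T$, being $\widetilde\bigo(\gcd(k,m))$, $\bigo(\diam(G))$, and $\widetilde\bigo(\sqrt{n})$ respectively. Hence choosing $T$ to be a sufficiently large constant multiple of $\tfrac{m}{k}\gcd(k,m)$, $\tfrac{m}{k}\diam(G)$, or $\tfrac{m}{k}\sqrt{n}$ (with appropriate polylogarithmic slack) makes $T$ dominate $\tfrac{m}{k}\,\delta_T(e,e)$, so Lemma~\ref{lem:idletime_bound} delivers the three bounds. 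For bullet four, I apply Lemma~\ref{lem:small_time}, which gives $\delta_4(e,e)\le 3$; when $k>\tfrac{3}{4}m$ we have $\tfrac{m}{k}\cdot 3 < 4$, so taking $T=4$ already satisfies the hypothesis of Lemma~\ref{lem:idletime_bound}.

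For the last two bullets, the bound on $\delta_T(e,e)$ is only guaranteed on some $T$ in a specified interval, so we must choose $T$ within that interval. For the tree case, Lemma~\ref{th:tree} says that for every positive integer $B$ there exists $x\in[B,2B]$ with $\delta_x(e,e)=\bigo(1)$; setting $B=c\cdot\tfrac{m}{k}$ with $c$ a sufficiently large absolute constant and taking $T$ to be this $x$ gives $T\ge B = c\cdot\tfrac{m}{k} > \tfrac{m}{k}\,\delta_T(e,e)$, hence $\idle(G)=\bigo(\tfrac{m}{k})$. For the last bullet, Lemma~\ref{cor:sqrtk} gives the existence of $T\in[B,2B]$ with $\delta_T(e,e)=\widetilde\bigo(m/B)$; plugging into the hypothesis of Lemma~\ref{lem:idletime_bound} we need $B > \tfrac{m}{k}\cdot\widetilde\bigo(m/B)$, i.e.\ $B^2 = \widetilde\Omega(m^2/k)$, so picking $B=\widetilde\Theta(m/\sqrt k)$ works and yields $\idle(G)=\widetilde\bigo(m/\sqrt k)=\widetilde\bigo(\tfrac{m}{k}\cdot\sqrt k)$.

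There is no real obstacle here; the arithmetic glue that turns each existing $\delta_T(e,e)$ bound into an idleness bound is entirely routine once Lemma~\ref{lem:idletime_bound} is in place. The only minor subtlety worth watching is that for the last two bullets we are forced to take a specific $T$ promised by the respective lemma rather than a freely chosen one, so we have to verify that the resulting $T$ is still $\widetilde\bigo$ of the claimed order of magnitude; this follows immediately because the interval $[B,2B]$ has width $B$ of the desired order, and the $\widetilde{}$ notation absorbs the polylogarithmic factors already present in the underlying lemmas.
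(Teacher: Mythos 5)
Your proposal is correct and matches the paper's own proof essentially line by line: both apply Lemma~\ref{lem:idletime_bound} and plug in, for each bullet, the corresponding bound on $\delta_T(e,e)$ from Lemmas~\ref{lem:disc_bound_g}, \ref{lem:diam}, \ref{th:thresholding1}, \ref{lem:small_time}, \ref{th:tree}, and~\ref{cor:sqrtk}, with the same choices of $T$ (respectively, $\widetilde\bigo(\frac{m}{k}\gcd(k,m))$, $\bigo(\frac{m}{k}\diam(G))$, $\widetilde\bigo(\frac{m}{k}\sqrt{n})$, $T=4$, the minimizer over $[c\,m/k,\,2c\,m/k]$, and $B=\widetilde\Theta(m/\sqrt{k})$). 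The one phrasing to tighten is the early remark that $T$ is chosen ``uniformly in $e$'': for the last two bullets the minimizer $x\in[B,2B]$ may depend on $e$, but since $\idle(G)=\max_e\idle(e)$ and every such $x$ lies in an interval of the desired width, the conclusion is unaffected, as you in fact note at the end.
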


\section{Conclusion}

The obtained results show that the \RR dynamics achieves almost optimal time-averaging behavior for a vast majority of parameter values. This has implications for the considered patrolling and load-balancing problems, as shown in Table~\ref{tab:results}, most notably in the case when $m$ and $k$ have no large common divisors. The case of an underlying Eulerian circulation has been fully resolved, and the mixing properties of the associated walks along Eulerian circuits are well understood. For the non-Eulerian case, it is an open question whether improved generalized $\delta$-distance bounds can be obtained, independent of the number of cycles $g$ of the considered circulation, with consequent improved bounds on the performance of the \RR dynamics.

\section*{Acknowledgments}
We thank Jurek Czyzowicz and Jukka Suomela for fruitful discussions. We also thank an anonymous referee for valuable comments.

\bibliographystyle{abbrv}
\bibliography{bib}

\begin{thebibliography}{10}

\bibitem{DBLP:journals/siamcomp/AfekG94}
Y.~Afek and E.~Gafni.
\newblock Distributed algorithms for unidirectional networks.
\newblock {\em SIAM J. Comput.}, 23(6):1152--1178, 1994.

\bibitem{AKK08}
N.~Agmon, S.~Kraus, and G.~A. Kaminka.
\newblock Multi-robot perimeter patrol in adversarial settings.
\newblock In {\em ICRA}, pages 2339--2345, 2008.

\bibitem{AB13}
H.~Akbari and P.~Berenbrink.
\newblock Parallel rotor walks on finite graphs and applications in discrete
  load balancing.
\newblock In {\em SPAA}, pages 186--195, 2013.

\bibitem{ABS12}
H.~Akbari, P.~Berenbrink, and T.~Sauerwald.
\newblock A simple approach for adapting continuous load balancing processes to
  discrete settings.
\newblock In {\em PODC}, pages 271--280, 2012.

\bibitem{ARSTMCC04}
A.~Almeida, G.~Ramalho, H.~Santana, P.~A. Tedesco, T.~Menezes, V.~Corruble, and
  Y.~Chevaleyre.
\newblock Recent advances on multi-agent patrolling.
\newblock In {\em SBIA}, pages 474--483, 2004.

\bibitem{BTW87}
P.~Bak, C.~Tang, and K.~Wiesenfeld.
\newblock Self-organized criticality: An explanation of 1/f noise.
\newblock {\em Physical Review Letters}, 59:381--384, 1987.

\bibitem{BGHIKK09}
E.~Bampas, L.~G{\k{a}}sieniec, N.~Hanusse, D.~Ilcinkas, R.~Klasing, and
  A.~Kosowski.
\newblock Euler tour lock-in problem in the rotor-router model.
\newblock In {\em DISC}, pages 423--435, 2009.

\bibitem{BKKMU15}
P.~Berenbrink, R.~Klasing, A.~Kosowski, F.~Mallmann{-}Trenn, and
  P.~Uzna{\'{n}}ski.
\newblock Improved analysis of deterministic load-balancing schemes.
\newblock In {\em PODC}, pages 301--310, 2015.

\bibitem{BLS91}
A.~Bj{\"o}rner, L.~Lov\'{a}sz, and P.~W. Shor.
\newblock Chip-firing games on graphs.
\newblock {\em European Journal of Combinatorics}, 12:283--291, 1991.

\bibitem{CDGKLU15}
J.~Chalopin, S.~Das, P.~Gawrychowski, A.~Kosowski, A.~Labourel, and
  P.~Uzna{\'{n}}ski.
\newblock Limit behavior of the multi-agent rotor-router system.
\newblock In {\em DISC}, pages 123--139, 2015.

\bibitem{C04}
Y.~Chevaleyre.
\newblock Theoretical analysis of the multi-agent patrolling problem.
\newblock In {\em IAT}, pages 302--308, 2004.

\bibitem{CS06}
J.~Cooper and J.~Spencer.
\newblock Simulating a random walk with constant error.
\newblock {\em Combinatorics, Probability and Computing}, 15:815--822, November
  2006.

\bibitem{CKKT16}
J.~Czyzowicz, A.~Kosowski, E.~Kranakis, and N.~Taleb.
\newblock Patrolling trees with mobile robots.
\newblock In {\em {FPS}}, volume 10128 of {\em Lecture Notes in Computer
  Science}, pages 331--344. Springer, 2016.

\bibitem{DD14}
R.~Dandekar and D.~Dhar.
\newblock Proportionate growth in patterns formed in the rotor-router model.
\newblock {\em Journal of Statistical Mechanics: Theory and Experiment},
  2014(11):P11030, 2014.

\bibitem{DKPU14}
D.~Dereniowski, A.~Kosowski, D.~Paj{\k{a}}k, and P.~Uzna{\'{n}}ski.
\newblock Bounds on the cover time of parallel rotor walks.
\newblock In {\em STACS}, pages 263--275, 2014.

\bibitem{DF09}
B.~Doerr and T.~Friedrich.
\newblock Deterministic random walks on the two-dimensional grid.
\newblock {\em Combinatorics, Probability and Computing}, 18(1-2):123--144,
  2009.

\bibitem{EAK09}
Y.~Elmaliach, N.~Agmon, and G.~A. Kaminka.
\newblock Multi-robot area patrol under frequency constraints.
\newblock {\em Ann. Math. Artif. Intell.}, 57(3-4):293--320, 2009.

\bibitem{ESK08}
Y.~Elmaliach, A.~Shiloni, and G.~A. Kaminka.
\newblock A realistic model of frequency-based multi-robot polyline patrolling.
\newblock In {\em AAMAS (1)}, pages 63--70, 2008.

\bibitem{ES11}
R.~Els{\"{a}}sser and T.~Sauerwald.
\newblock Tight bounds for the cover time of multiple random walks.
\newblock {\em Theor. Comput. Sci.}, 412(24):2623--2641, 2011.

\bibitem{Frae70}
A.~S. Fraenkel.
\newblock Economic traversal of labyrinths.
\newblock {\em Mathematics Magazine}, 43:125--130, 1970.

\bibitem{FGS10}
T.~Friedrich, M.~Gairing, and T.~Sauerwald.
\newblock Quasirandom load balancing.
\newblock In {\em SODA}, pages 1620--1629, 2010.

\bibitem{FS09}
T.~Friedrich and T.~Sauerwald.
\newblock Near-perfect load balancing by randomized rounding.
\newblock In {\em STOC}, pages 121--130, 2009.

\bibitem{GR}
L.~G{\k{a}}sieniec and T.~Radzik.
\newblock Memory efficient anonymous graph exploration.
\newblock In {\em WG}, pages 14--29, 2008.

\bibitem{Granville}
A.~Granville.
\newblock An introduction to additive combinatorics.
\newblock In {\em Additive Combinatorics}, CRM Proceedings and Lecture Notes,
  pages 1--27. American Mathematical Society, 2007.

\bibitem{HK08}
N.~Hazon and G.~A. Kaminka.
\newblock On redundancy, efficiency, and robustness in coverage for multiple
  robots.
\newblock {\em Robotics and Autonomous Systems}, 56(12):1102--1114, 2008.

\bibitem{HP10}
A.~Holroyd and J.~Propp.
\newblock Rotor walks and {M}arkov chains.
\newblock {\em Algorithmic Probability and Combinatorics}, pages 105--126,
  2010.

\bibitem{KNTC94}
M.~A. Kiwi, R.~Ndoundam, M.~Tchuente, and E.~Goles.
\newblock No polynomial bound for the period of the parallel chip firing game
  on graphs.
\newblock {\em Theor. Comput. Sci.}, 136(2):527--532, 1994.

\bibitem{KKPS13}
R.~Klasing, A.~Kosowski, D.~Paj{\k{a}}k, and T.~Sauerwald.
\newblock The multi-agent rotor-router on the ring: a deterministic alternative
  to parallel random walks.
\newblock In {\em PODC}, pages 365--374, 2013.

\bibitem{KosowskiP14}
A.~Kosowski and D.~Paj{\k{a}}k.
\newblock A case study of cover time for the rotor-router.
\newblock In {\em ICALP}, pages 544--555, 2014.

\bibitem{MRZD02}
A.~Machado, G.~Ramalho, J.-D. Zucker, and A.~Drogoul.
\newblock Multi-agent patrolling: An empirical analysis of alternative
  architectures.
\newblock In {\em MABS}, pages 155--170, 2002.

\bibitem{PFB10}
F.~Pasqualetti, A.~Franchi, and F.~Bullo.
\newblock On optimal cooperative patrolling.
\newblock In {\em CDC}, pages 7153--7158, 2010.

\bibitem{PhysRevLett.77.5079}
V.~Priezzhev, D.~Dhar, A.~Dhar, and S.~Krishnamurthy.
\newblock Eulerian walkers as a model of self-organized criticality.
\newblock {\em Phys. Rev. Lett.}, 77(25):5079--5082, Dec 1996.

\bibitem{RSW98}
Y.~Rabani, A.~Sinclair, and R.~Wanka.
\newblock Local divergence of markov chains and the analysis of iterative
  load-balancing schemes.
\newblock In {\em FOCS}, pages 694--703, 1998.

\bibitem{RS17}
A.~Ramachandran and A.~Schild.
\newblock Sandpile prediction on a tree in near linear time.
\newblock In {\em {SODA}}, pages 1115--1131. {SIAM}, 2017.

\bibitem{SS12}
T.~Sauerwald and H.~Sun.
\newblock Tight bounds for randomized load balancing on arbitrary network
  topologies.
\newblock In {\em FOCS}, pages 341--350, 2012.

\bibitem{DBLP:conf/analco/ShiragaYKY16}
T.~Shiraga, Y.~Yamauchi, S.~Kijima, and M.~Yamashita.
\newblock Total variation discrepancy of deterministic random walks for ergodic
  markov chains.
\newblock In {\em ANALCO}, pages 138--148, 2016.

\bibitem{SS94}
R.~Subramanian and I.~Scherson.
\newblock An analysis of diffusive load-balancing.
\newblock In {\em SPAA}, pages 220--225, 1994.

\bibitem{TaoVu}
T.~Tao and V.~H. Vu.
\newblock {\em Additive Combinatorics}.
\newblock Cambridge University Press, 2006.
\newblock Cambridge Books Online.

\bibitem{YWB03}
V.~Yanovski, I.~A. Wagner, and A.~M. Bruckstein.
\newblock A distributed ant algorithm for efficiently patrolling a network.
\newblock {\em Algorithmica}, 37(3):165--186, 2003.

\end{thebibliography}


\appendix
\section{Proof of Proposition~\ref{prop:tao}}

Let $x,y \in A$ and $\xi \in \Spec_{1-\varepsilon}(B - A)$. Then there exists a phase $\theta \in \R / (\eta\mult\Z)$ such that
$$ \Real \sum_{z \in B - A}  e(\xi \cdot z + \theta) \ge (1-\varepsilon)|B-A|,$$
thus
$$\sum_{z \in B - A} (1 - \Real\ e(\xi \cdot z + \theta)) \le \varepsilon |B-A| \le \varepsilon K |B|.$$
Since the summand is non-negative, and $B-A$ contains $b-x$,
$$\sum_{b \in B} |1 - \Real\ e(\xi \cdot (b-x) + \theta)| \le \varepsilon K |B|$$
and by Cauchy-Schwarz
$$\sum_{b \in B} |1 - \Real\ e(\xi \cdot (b-x) + \theta)|^{1/2} \le \varepsilon^{1/2} K^{1/2} |B|.$$
From the elementary identity $|1-e(\alpha)| = \sqrt{2} |1- \Real\ e(\alpha)|^{1/2}$ we conclude that:
$$\sum_{b \in B} |1 - e(\xi \cdot (b-x) + \theta)| \le \sqrt{2} \varepsilon^{1/2} K^{1/2} |B|.$$
A similar expression is obtained for $x$ replaced by $y$. By the triangle inequality we conclude that
$$\sum_{b \in B} | e(\xi \cdot (b-y) + \theta) - e(\xi \cdot (b-x) + \theta)| \le 2 \sqrt{2} \varepsilon^{1/2} K^{1/2} |B|.$$
The left hand side is equal to:
$$\sum_{b \in B} |e(\xi \cdot (b-y) + \theta)| \cdot |e(\xi \cdot (x-y))-1 | = |B| \cdot |e(\xi \cdot (x-y))-1 |,$$
thus we have:
$$|e(\xi \cdot (x-y) )-1| \le \sqrt{8 \varepsilon K}.$$
Since we picked $\xi \in \Spec_{1-\varepsilon}(B-A)$ and $x,y \in A$ arbitrarily, the claim follows.
\end{document}